\documentclass[12pt,english]{article}
\usepackage[T1]{fontenc}
\usepackage[latin9]{inputenc}
\usepackage{color}
\usepackage{babel}
\usepackage{refstyle}
\usepackage{mathtools}
\usepackage{url}
\usepackage{dsfont}
\usepackage{amsmath}
\usepackage{amsthm}
\usepackage{amssymb}
\usepackage{geometry}
\usepackage{setspace}
\usepackage[authoryear]{natbib}
\usepackage[bookmarks=true,bookmarksnumbered=true,bookmarksopen=true,bookmarksopenlevel=1,
 breaklinks=false,pdfborder={0 0 0},pdfborderstyle={},backref=false,colorlinks=true]
 {hyperref}
\hypersetup{
 citecolor=blue, linkcolor=blue}

\makeatletter

\AtBeginDocument{\providecommand\secref[1]{\ref{sec:#1}}}
\AtBeginDocument{\providecommand\eqref[1]{\ref{eq:#1}}}
\AtBeginDocument{\providecommand\propref[1]{\ref{prop:#1}}}
\AtBeginDocument{\providecommand\appxref[1]{\ref{appx:#1}}}
\AtBeginDocument{\providecommand\defref[1]{\ref{def:#1}}}
\AtBeginDocument{\providecommand\axmref[1]{\ref{axm:#1}}}
\AtBeginDocument{\providecommand\lemref[1]{\ref{lem:#1}}}
\AtBeginDocument{\providecommand\claimref[1]{\ref{claim:#1}}}
\AtBeginDocument{\providecommand\subsecref[1]{\ref{subsec:#1}}}
\AtBeginDocument{\providecommand\thmref[1]{\ref{thm:#1}}}
\providecommand{\tabularnewline}{\\}
\RS@ifundefined{subsecref}
  {\newref{subsec}{name = \RSsectxt}}
  {}
\RS@ifundefined{thmref}
  {\def\RSthmtxt{theorem~}\newref{thm}{name = \RSthmtxt}}
  {}
\RS@ifundefined{lemref}
  {\def\RSlemtxt{lemma~}\newref{lem}{name = \RSlemtxt}}
  {}

\numberwithin{equation}{section}
\theoremstyle{definition}
\newtheorem{defn}{\protect\definitionname}
\theoremstyle{plain}
\newtheorem{ax}{\protect\axiomname}
\theoremstyle{plain}
\newtheorem{thm}{\protect\theoremname}
\theoremstyle{plain}
\newtheorem{prop}{\protect\propositionname}
\theoremstyle{plain}
\newtheorem{assumption}{\protect\assumptionname}
\theoremstyle{plain}
\newtheorem{lem}{\protect\lemmaname}
\theoremstyle{remark}
\newtheorem{claim}{\protect\claimname}

\usepackage{amssymb}
\usepackage{refstyle}
\usepackage{accents}
\usepackage{charter}%{mathdesign} %requires ly1 if fails
\usepackage{paralist}
\usepackage{dsfont}

\def\sloppy{%
  \tolerance 1000%
  \emergencystretch 1em%
  \hfuzz .5\p@
  \vfuzz\hfuzz}
\newref{sec}{name=Section~} \newref{subsec}{name=Subsection~} \newref{axm}{name=Axiom~} \newref{lem}{name=Lemma~} \newref{def}{name=Definition~} \newref{prop}{name=Proposition~} \newref{thm}{name=Theorem~} \newref{remark}{name=Remark~} \newref{cor}{name=Corollary~} \newref{fig}{name=Figure~} \newref{eq}{name=Equation~} \newref{appx}{name=Appendix~} \newref{oappx}{name=Online Appendix~} \newref{fn}{name=Footnote~} \newref{fact}{name=Fact~} \newref{exa}{name=Example~} \newref{cond}{name=Condition~} \newref{claim}{name=Claim~}

\makeatother

\providecommand{\assumptionname}{Assumption}
\providecommand{\axiomname}{Axiom}
\providecommand{\claimname}{Claim}
\providecommand{\definitionname}{Definition}
\providecommand{\lemmaname}{Lemma}
\providecommand{\propositionname}{Proposition}
\providecommand{\theoremname}{Theorem}

\begin{document}
\title{Informative Consumption}
\author{Xuehan Jiang\thanks{Department of Economics, University of Toronto (Email: xuehan.jiang@mail.utoronto.ca).}~~~~~~~~~Xi
Zhi ``RC'' Lim\thanks{Antai College of Economics and Management, Shanghai Jiao Tong University
(Email: xzlim@sjtu.edu.cn).}}
\date{\vskip -.8emThis version: 2026/06/15\\
Most recent public version: \url{http://s.xzlim.com/ic}{\small}}
\maketitle
\begin{abstract}
Risky consumption generates information when uncertainty is resolved.
This paper axiomatically characterizes the consumption\textendash information
trade-off even when the analyst does not observe an agent\textquoteright s
future problems. A subjective future menu underpins the agent\textquoteright s
willingness to sacrifice current consumption for future information.
By carefully separating objective risk from subjective risk, we decompose
the certainty equivalent of an act into a standard risk premium and
a novel information premium. To facilitate applications, we introduce
an Arrow\textendash Debreu\textendash Pratt parameterization that
yields a tractable model, capturing risk aversion and information
incentives with a single coefficient for each. Finally, we show that
heterogeneity in risk-taking may arise from differing opportunities
to capitalize on information, rather than being solely attributable
to differences in risk aversion.\\
\\
\textbf{Keywords}: Subjective expected utility, experimentation, Blackwell
order, information premium\textbf{}\\
\textbf{JEL}: D01, D11 
\end{abstract}
\newpage{}
\begin{center}
\textbf{Table of contents for working paper.}
\par\end{center}

\tableofcontents{}
\begin{center}
\textbf{Table of contents for working paper.}
\par\end{center}

\newpage{}

\section{Introduction}

Economic agents often have to choose an action not knowing for certain
what the outcome will be. These risky consumption decisions inherently
contain information when uncertainty is resolved, rewarding the agent
not only in terms of today's utility (consumption incentive) but also
through better future decisions (information incentive). Disentangling
the two incentives helps address why population heterogeneity in risk-taking
behavior is not solely the result of heterogeneous risk attitudes,
but also the result of heterogeneous future opportunities, where some
individuals can benefit more from (the same) information than other
individuals do.

Information incentives are ubiquitous, even if the analyst does not
observe it directly. Consider the well-known \emph{Anscombe\textendash Aumann
act}, the predominant tool to study choice under (subjective) risk.
It prescribes the probability of different outcomes $x,y,z,...$ under
each state of the world $\omega_{1},\omega_{2},...$, general enough
to find applications across all of economics. For example, the act
$f$ is risky but $g$ always delivers the same outcome $z$.
\begin{center}
\begin{tabular}{c|cc|}
\multicolumn{1}{c}{$f$} & $x$ & \multicolumn{1}{c}{$y$}\tabularnewline
\cline{2-3}
$\omega_{1}$ & $0.9$ & $0.1$\tabularnewline
$\omega_{2}$ & $0.3$ & $0.7$\tabularnewline
\cline{2-3}
\end{tabular}\hskip 2em%
\begin{tabular}{c|c|}
\multicolumn{1}{c}{$g$} & \multicolumn{1}{c}{$z$}\tabularnewline
\cline{2-2}
$\omega_{1}$ & $1$\tabularnewline
$\omega_{2}$ & $1$\tabularnewline
\cline{2-2}
\end{tabular}
\par\end{center}

\noindent When the DM does not know the true state $\omega$, she
forms a belief about the likelihood of different states and outcomes,
and uses that belief to make consumption decisions. However, upon
choosing $f$, outcome $x$ is substantially more likely to occur
under state $\omega_{1}$ than under state $\omega_{2}$, so when
uncertainty is resolved and $x$ occurs, a Bayesian DM receives new
information suggesting that the state is quite likely $\omega_{1}$.
This information may have value depending on what the DM's future
problems entail, even if they fall outside the scope of the analyst\textquoteright s
observations.

In fact, an Anscombe\textendash Aumann act is structurally identical
to a \emph{Blackwell experiment}, and $f$ Blackwell-dominates the
uninformative $g$. The DM may hence have an incentive to choose $f$
even when its consumption utility is lower than that of $g$, and
analogously an incentive to choose $g$ even though it is Blackwell-dominated
by $f$. This simple observation is consistent with a wide range of
real-life problems. A diner trying a new restaurant learns about its
quality while enjoying the meal. A doctor administering a treatment
observes the patient's clinical response and learns about the treatment's
effectiveness. A firm deploying a marketing strategy will learn about
features of consumer demand. The extent to which (the same) information
has value depends on unobserved individual characteristics, whether
the diner will soon move out of town, whether the doctor is aware
of new future treatments, whether the firm has a certain confidential
expansion plan.

To disentangle the unobserved consumption\textendash information trade-off,
we axiomatically characterize a model called \textbf{Informative Consumption
(IC)}. The primitive is simply preference over Anscombe\textendash Aumann
acts. In the model, the DM has a \emph{subjective future menu}, $A$,
which is ultimately inferred from choice behavior. Along with other
usual preference parameters, the DM chooses act $f$ to maximize
\begin{equation}
\alpha U_{C}\left(f\right)+\left(1-\alpha\right)U_{I}\left(f\right),\label{eq:intro}
\end{equation}
where $U_{C}\left(f\right)$ is the standard, first-period expected
consumption utility, $U_{I}\left(f\right)$ is a continuation value
from information acquisition, and $\alpha$ is a consumption-information
weight. The continuation value $U_{I}\left(f\right)$ depends on both
the information content in $f$ and the options in $A$. The model
is formally given in \secref{setup}.

One interpretation when $\alpha\leq0.5$ is discounting, where the
DM enjoys an immediate consumption utility and a discounted future
utility. The maximization of \eqref{intro} is equivalent to the maximization
of
\[
U_{C}\left(f\right)+\beta U_{I}\left(f\right)
\]
where $\beta=\left(1-\alpha\right)/\alpha\in\left[0,1\right]$.

IC admits the standard (one-period) Subjective Expected Utility (SEU)
as a special case when $\alpha=1$ (or $\beta=0$). But unlike essentially
all generalizations of SEU (where the focus has predominantly been
on the violations of expected utility forms or the presence of multiple
priors), IC departs from SEU not due to behavioral biases. Quite the
contrary, the DM is sophisticated, forward-looking, and takes information
incentives into account.

In fact, the main axiom \emph{Mixture Aversion} is the exact opposite
of the well-known maxmin expected utility's axiom Uncertainty Aversion
\citep{gilboa1989maxmin}. When the DM is indifferent $f\sim g$,
mixing the acts weakens the information content of signal realizations
because they are now more equally distributed across states. Hence
$f\succ af+\left(1-a\right)g$ when the DM is not indifferent to the
mixture.

\subsubsection*{Risk\textendash Information Decomposition}

IC enables the study of a novel notion called \emph{information premium}
that is in many ways analogous to the well-known \emph{risk premium}.
In this framework, both premiums play a role in determining a DM's
certainty equivalent of an act. Because a certainty equivalent is
certain, the (degenerate act) that offers the certainty equivalent
is uninformative, so the DM must be compensated for the loss of information
value as she gives up an act. We show that information premium has
intuitive properties. The information premium of an uninformative
act $f$ is zero. Under an intuitive parameterization, \propref{riskpremiuminformationpremium}
further shows that Blackwell-dominating acts have greater information
premiums. It also shows that information premiums weakly increase
for all acts when information incentive is greater, which is intuitive
because there is more valuable use of the same information. The difference
between an act's expected payoff and its certainty equivalent is exactly
the sum of risk premium and information premium, both of which can
be identified by carefully separating objective risk (risky but no
information) from subjective risk (contains information).

\subsubsection*{Behavioral Characterization}

The characterization of IC involves four key axioms (and a technical
condition). \nameref{axm:informationseeking} is already mentioned.
\nameref{axm:non-overlapping} relaxes the standard independence condition
by applying it only to mixtures that involve proportional likelihoods
in generating the same prize across different states. Intuitively,
the new axiom allows for a DM to care about whether a prize $z$ occurs
through state $\omega_{1}$ or state $\omega_{2}$, for instance,
which is not the case in standard SEU where the DM simply calculates
the aggregate probability of $z$. Of course, mixing proportional
acts does not alter information content because posteriors are determined
by likelihood ratios. The last two significant axioms are\emph{ }\nameref{axm:prize_independence}
and \nameref{axm:state_prize_independence}. The former weakens monotonicity
and says that an ``upgraded'' act should be preferred if the upgrade
does not involve overlapping prizes; again, the intuition is about
not distorting the likelihood of a prize across different states.
The latter requires that equal and state-dependent distortions do
not change preferences.

\subsubsection*{Arrow\textendash Debreu\textendash Pratt Parameterization}

For applications, \secref{info_coef} introduces a 2-parameter parameterization
of IC where $\rho$ captures risk aversion and $\iota$ captures information
incentive. It is common to work with CARA functional forms where the
Arrow\textendash Pratt coefficient $\rho$ captures the degree of
risk aversion. It turns out that the future menu $A$ can also have
an intuitive parameterization using a menu of Arrow\textendash Debreu
assets, each paying the prize $\iota$ in a particular state. An individual
with a greater $\iota$ has stronger information incentive and always
enjoys more information premiums. We show that this parameterization
greatly simplifies IC. In particular, continuation value is now separable
between preference parameters $\left(\rho,\iota\right)$ and a novel
information index $\mathbb{I}\left(f,\mu\right)$ that captures the
aggregate information value of $f$ under prior $\mu$ (\propref{icad}).
Results show that differences in risk-taking behavior can be reconciled
using the same degree of risk aversion but with the more risk-taking
DM having a greater information coefficient $\iota$ if her chosen
act has a greater information index (\propref{same_rho}).

~

\secref{setup} introduces the primitive, the model, and the risk\textendash information
decomposition. \secref{Axioms} proposes axioms and state the representation
theorem. \secref{info_coef} proposes the Arrow\textendash Debreu\textendash Pratt
representation. \secref{Conclusion} concludes. \appxref{Appendix:-Proofs}
contains proofs.

\subsection{Related Literature}

In the subjective learning literature that can be traced back to \citet{kreps1979representation},
theories use observed preferences over menus to infer the possibility
that a DM expects to receive information tomorrow, even if that information
is not observed by the analyst. Most work focuses on \textbf{exogenous
subjective learning}, where the DM has no control over what she might
learn \citep{dekel2001representing,dillenberger2014theory}. Then
comes \textbf{endogenous subjective learning}, in which the DM has
some control \citep{hyogo2007subjective,piermont2016learning,cooke2017preference}.
In this literature, learning is subjective because the analyst does
not observe the information structures of the DMs.

Our paper studies a different problem where information structures
are objectively known because they are Anscombe\textendash Aumann
acts. Instead, our main unobservable is the (future) menu that the
DM expects to face, which is observed in the subjective learning literature.

There is also rising interest in the \textbf{intrinsic preference
for information}, where a DM has preference over the resolution of
uncertainty even if that information cannot help future decisions
\citep{masatlioglu2023intrinsic,gul2025thrill,ke2025preference}.
In contrast, we study extrinsic preference for information because
they can improve future decisions. In theory, decision making can
be affected by both forms of preference for information. Our work
shows that distinguishing objective risk from subjective risk will
separate the two even if extrinsic information incentives are not
directly observed.

There is also a literature on \textbf{pure information acquisition}
without consumption, in which we are most related to two sub-strands.
One sub-strand characterizes \textbf{costless information acquisition}
\citep{gilboa1991value,azrieli2008value,frankel2019quantifying}.
We share technical methods because the need to infer a DM's unobserved
problem is in common. Our study disentangles information and consumption
incentives in the presence of unobserved information\textendash consumption
trade-off. The second sub-strand takes \textbf{information cost} into
the picture. \citet{caplin2015revealed} characterize optimal information
acquisition, \citet{de2017rationally} identify information acquisition
cost, and \citet{caplin2022rationally} characterize Shannon cost.
Unlike our study, decision problems are observed. We note that our
approach may contribute to microfounding information cost using consumption
loss.

\section{\label{sec:setup}Model}

Let the set of prizes be $X=\mathbb{R}$, and let $\Omega=\left\{ \omega_{1},...,\omega_{N}\right\} $
be a finite state space with $\left|\Omega\right|=N$. Let $\Delta\left(X\right)$
be the space of simple lotteries, i.e., lotteries with finite support.
An (Anscombe\textendash Aumann) act is a function $f:\Omega\rightarrow\Delta\left(X\right)$
and $\mathcal{F}$ is the set of all acts. The primitive is a binary
relation $\succsim$ over $\mathcal{F}$. The symmetric and asymmetric
components of $\succsim$ are denoted by $\sim$ and $\succ$, respectively.

\textbf{Notation}: Per convention, $p\left(x\right)$ denotes the
probability that lottery $p$ gives prize $x$ and $f\left(x\mid\omega\right)$
denotes the probability that act $f$ gives prize $x$ under state
$\omega$. A \emph{degenerate lottery} $\delta_{x}$ gives prize $x$
with probability 1. A\emph{ constant act} $p\in\mathcal{F}$ gives
lottery $p\in\Delta\left(X\right)$ in every state. A \emph{degenerate
act} $\delta_{x}\in\mathcal{F}$ is a constant act that gives the
degenerate lottery $\delta_{x}$. For $f,g\in\mathcal{F}$ and $a\in\left[0,1\right]$,
the \emph{mixture act} $af+\left(1-a\right)g$ denotes the act that
gives prize $x$ under state $\omega$ with probability $af\left(x\mid\omega\right)+\left(1-a\right)g\left(x\mid\omega\right)$.
As is standard, $\text{supp}\left(f\right)\coloneqq\left\{ x\in X:\sum_{\omega}f\left(x\mid\omega\right)>0\right\} $
for all $f\in\mathcal{F}$ and $\text{supp}\left(p\right)\coloneqq\left\{ x\in X:p\left(x\right)>0\right\} $
for all $p\in\Delta\left(X\right)$. For notational brevity, we use
$f\cap g$ to refer to $\text{supp}\left(f\right)\cap\text{supp}\left(g\right)$.
For later, given $A\subseteq\mathcal{F}$, let $\text{supp}\left(A\right):=\cup_{f\in A}\text{supp}\left(f\right)$.
We use the convention that $0\times-\infty$ is $0$.

\subsection{\label{subsec:Model}Informative Consumption Representation}

Even though our observation is preference over (one period's) acts,
the DM in our model is, as if, facing a two-period problem. Given
a prior over states $\mu\in\Delta\left(\Omega\right)$ and a Bernoulli
utility function $u:X\rightarrow\mathbb{R}$, her choice of act $f$
gives her the standard (subjective) expected utility in the \textbf{first
period},
\begin{equation}
U_{C}\left(f\right)=\sum_{\omega\in\Omega}\mu\left(\omega\right)\sum_{x\in\text{supp}\left(f\right)}f\left(x\mid\omega\right)u\left(x\right).\label{eq:Uc}
\end{equation}
In the \textbf{second period}, the DM solves an (unobserved) problem
utilizing the information obtained from the resolution of $f$, which
gives her an (ex-ante) continuation value 
\begin{equation}
U_{I}\left(f\right)\coloneqq\sum_{x\in\text{supp}\left(f\right)}\underbrace{\left[\sum_{\omega\in\Omega}\mu\left(\omega\right)f\left(x\mid\omega\right)\right]}_{\text{probability of signal/prize }x}\underbrace{\left[\max_{g\in A}\sum_{\omega\in\Omega}\mu_{x,f}\left(\omega\right)\sum_{y\in\text{supp}\left(g\right)}g\left(y\mid\omega\right)u\left(y\right)\right]}_{\text{value of posterior \ensuremath{\mu_{x,f}}}}.\label{eq:UI}
\end{equation}
The formulation of continuation value is standard: Upon receiving
a prize $x$, the DM forms a Bayesian posterior $\mu_{x,f}\in\Delta\left(\Omega\right)$,
i.e.,
\[
\mu_{x,f}\left(\omega\right)=\frac{\mu\left(\omega\right)f\left(x\mid\omega\right)}{\sum_{\omega'\in\Omega}\mu\left(\omega'\right)f\left(x\mid\omega'\right)}\,\forall\omega.
\]
Then, she chooses from her perceived future menu $A$ to maximize
expected utility, which results in the value of having posterior $\mu_{x,f}$.
Finally, by aggregating over possible prize realizations $x\in\text{supp}\left(f\right)$,
she obtains her continuation value.
\begin{defn}
\label{def:IC_statement}$\succsim$ admits an \textbf{Informative
Consumption (IC) }representation if there exist a full support prior
$\mu\in\Delta\left(\Omega\right)$, a strictly increasing utility
function $u:X\rightarrow\mathbb{R}$, a consumption-information weight
$\alpha>0$, and a future menu $A\subseteq\mathcal{F}$ such that
$\succsim$ is represented by the function $U:\mathcal{F}\rightarrow\mathbb{R}$
where
\[
U\left(f\right)=\alpha U_{C}\left(f\right)+\left(1-\alpha\right)U_{I}\left(f\right).
\]
\end{defn}
The model departs from standard SEU, but it is not due to behavioral
biases. Instead, the DM is more sophisticated in the sense that she
considers not only today's consumption but the information benefits
on future decisions. Note that $A\subset\mathcal{F}$ is a \emph{subjective
future menu}. It is neither decided nor known by the analyst, as it
is not part of the primitive. Instead, it captures the DM's perception
of what the future entails. It could be correct but falls outside
of the analyst's observation, or it could simply be the DM's misconception.
In any case, this perceived future menu captures an extrinsic preference
for information that is hidden to the analyst.

The special case where $\alpha=1$ corresponds to a DM who only cares
about consumption value, and we obtain standard SEU. But even if $\alpha<1$,
information incentive can still be absent when $A$ is a singleton,
that is, the DM does not have a choice to make in the future. In this
case, information may be present but it has no value, and in particular
$U_{I}\left(f\right)=\text{constant}$ for all $f$. Clearly, maximizing
$U\left(f\right)=U_{C}\left(f\right)+\text{constant}$ is standard
SEU. \secref{Axioms} characterizes IC and related representations.
\secref{info_coef} parameterizes $A$ with a (single) coefficient
of information incentive.

\subsection{\label{subsec:information_premium}Risk\textendash Information Decomposition}

Risk premium has served as an incredibly useful measure to summarily
capture an economic agent's risk aversion. We propose a novel notion
of \emph{information premium} because risk-taking behavior can also
be affected by information incentives.

The \emph{certainty equivalent} of $f$, denoted by $CE\left(f\right)$,
is a prize in $X$ such that receiving that sure prize (in every state)
is as good as receiving $f$, i.e., $\delta_{CE\left(f\right)}\sim f$.
For a given distribution of prizes, standard methods infer that a
lower certainty equivalent is associated with greater risk aversion.

This is no longer true in our model. The certainty equivalent can
still be well-defined, but it now accounts for two distinct forces.
The first is standard. Because certainty is preferred by a risk averse
DM, a certainty equivalent tends to be lower than the mean of an act
where the difference is usually called the \emph{risk premium}. This
is the only force for a constant act $p\in\mathcal{F}$ (it gives
the lottery $p\in\Delta\left(X\right)$ in every state of the world),
so we obtain the standard
\[
\mathbb{E}_{p}\left(x\right)-CE\left(p\right)=\text{risk premium}.
\]

The second force is novel. Because the degenerate act $\delta_{CE\left(f\right)}$
is uninformative, the certainty equivalent of $f$ must use pure consumption
to compensate for forgone information utility from $f$, so it leads
to a greater certainty equivalent due to what we call \emph{information
premium}.

A typical act $f\in\mathcal{F}$ involves both premiums and we propose
a decomposition. Let $p_{f}$ be the \emph{lottery reduction} of $f$,
i.e., $p_{f}\left(x\right)=\sum_{x}\mu\left(\omega\right)f\left(x\mid\omega\right)$.
Consider
\begin{align}
\mathbb{E}_{f}\left(x\right)-CE\left(f\right) & =\underbrace{\left[\mathbb{E}_{f}\left(x\right)-CE\left(p_{f}\right)\right]}_{\text{RP}\left(f\right)\text{ (risk premium)}}+\underbrace{\left[CE\left(p_{f}\right)-CE\left(f\right)\right]}_{-\text{IP}\left(f\right)\text{ (information premium)}}\label{eq:decompose}
\end{align}
where $\mathbb{E}_{f}\left(x\right)=\sum_{\omega}\mu\left(\omega\right)f\left(x\mid\omega\right)x$.

Rearranging \eqref{decompose} gives
\begin{equation}
CE\left(f\right)=\mathbb{E}_{f}\left(x\right)-\text{RP}\left(f\right)+\text{IP}\left(f\right).\label{eq:cef}
\end{equation}

Information premium $\text{IP}$ compensates the DM for giving up
information. Therefore, it summarily captures the influence of a DM's
information incentive on her evaluation of risky prospects. It is
non-negative. If $f$ is a constant act, then $f$ is identical to
its reduction $p_{f}$ and $\text{IP}\left(f\right)=0$. This is intuitive
since a constant act does not provide information. We show in \propref{riskpremiuminformationpremium},
under a parameterization, that information premium is analogous to
the familiar risk premium in the way it reacts to preference parameters
and the nature of different acts.

\subsection{Information and Subjective Risk}

The decomposition into different premiums captures how \emph{objective
risk} and \emph{subjective risk} differently affect the evaluation
of an act.

Objective risk refers to the uncertainty from an objective distribution
of prizes, such as the lottery given in a particular state $f\left(\omega\right)$\emph{.
}Differently, subjective risk refers to the uncertainty due to the
DM's subjective prior about states interacted with state-dependent
outcomes. For instance, act $f$ has subjective risk but no objective
risk, and act $g$ has objective risk but no subjective risk.
\begin{center}
\begin{tabular}{c|cc|}
\multicolumn{1}{c}{$f$} & $x_{1}$ & \multicolumn{1}{c}{$x_{2}$}\tabularnewline
\cline{2-3}
$\omega_{1}$ & $1$ & $0$\tabularnewline
$\omega_{2}$ & $0$ & $1$\tabularnewline
\cline{2-3}
\end{tabular}\hskip 2em%
\begin{tabular}{c|cc|}
\multicolumn{1}{c}{$g$} & $x_{1}$ & \multicolumn{1}{c}{$x_{2}$}\tabularnewline
\cline{2-3}
$\omega_{1}$ & $1/4$ & $3/4$\tabularnewline
$\omega_{2}$ & $1/4$ & $3/4$\tabularnewline
\cline{2-3}
\end{tabular}
\par\end{center}

Standard models in economics, such as the SEU, evaluate both forms
of risks using expected utility. In the example above, if $\mu\left(\omega_{1}\right)=1/4$,
then the lottery reductions $p_{f},p_{g}$ are identical and SEU deems
$f,g$ indifferent. The presence of information incentive causes a
departure from SEU, but it only does so through subjective risk.

In the decomposition (\eqref{decompose}), because $p_{f}$ is a constant
act containing only objective risk, which means it is structurally
identical to an uninformative Blackwell experiment, risk premium is
never affected by subjective risk.

On the other hand, $p_{f}$ and $f$ have the same aggregate probabilities
of prizes, so they should lead to the same consumption utilities.
Their differences come from the subjective risk contained in $f$
(interacting with $\mu$), which provides information when resolved.
Therefore, information premium is the result of the interaction between
information incentive and subjective risk.

If $A$ is a singleton (the DM thinks that she has no future choice
to make), we return to standard SEU. But even if $A$ is meaningfully
non-trivial, constant acts can never contain information because they
do not contain subjective risk, their information premiums are still
0.

\secref{info_coef} continues this discussion under an intuitive parameterization.

\subsubsection*{Information-induced risk}

We remark that there is a necessary trade-off between information
and risk. The reduction $p_{f}$ of $f$ is degenerate if and only
if there exists $z\in X$ such that $f\left(z\mid\omega\right)=1$
for all $\omega\in\text{supp}\mu$. In that case, $\text{\text{IP}}\left(f\right)=0$
and $\text{RP}\left(f\right)=0$.

If $\text{\text{IP}}\left(f\right)>0$, then $p_{f}$ is non-degenerate.
If $u$ is strictly risk averse (i.e., strictly concave), then $\text{\text{IP}}\left(f\right)>0$
implies $\text{RP}\left(f\right)>0$. This is intuitive because information
can only be transmitted if an act can generate different prizes, or
signal realizations, under a given prior. In turn, this departure
from a degenerate act induces risk premium for risk averse DMs. The
converse is not true since $\text{RP}\left(f\right)>0$ can be the
result of a constant but non-degenerate act, which is risky but uninformative,
and hence $\text{\text{IP}}\left(f\right)=0$.

\section{\label{sec:Axioms}Axioms and Representation Theorem}

\begin{ax}[Basic Axioms]
\label{axm:standard}~
\begin{enumerate}
\item (Basic Rationality) The binary relation $\succsim$ is complete and
transitive.
\item (Continuity) If $f\succ g\succ h$, then $\left\{ a:af+\left(1-a\right)h\succsim g\right\} $
and $\left\{ a:g\succsim af+\left(1-a\right)h\right\} $ are closed.
\item (Monotonicity) If $x>y$, then $\delta_{x}\succ\delta_{y}$.\footnote{This can be replaced by a weaker, non-extremality condition (i.e.,
for all $x\in X$, neither $\delta_{x}\succsim\delta_{y}$ for all
$y\in X$ nor $\delta_{y}\succsim\delta_{x}$ for all $y\in X$) along
with the corresponding weakening of $u$ in \defref{IC_statement}
and \defref{IC_V}.}
\end{enumerate}
\end{ax}

\subsection{Skewness Independence}

The benchmark model we depart from is Subjective Expected Utility
(SEU), and its key axiom is the independence condition. The full scale
of independence requires, indiscriminately, $f\succsim g\Leftrightarrow af+\left(1-a\right)h\succsim ag+\left(1-a\right)h$.
The next axiom weakens this requirement by adding a prerequisite related
to overlapping prizes. Let $f_{x}\coloneqq\left(f\left(x\mid\omega_{1}\right),...,f\left(x\mid\omega_{N}\right)\right)$
capture the likelihood of prize $x$ in different states. Given acts
$h,f\in\mathcal{F}$, write $h_{x}\propto f_{x}$ if $kh_{x}=f_{x}$
for some $k\geq0$. That is, the likelihood of prize $x$ occurring
through different states is proportional across the two acts (examples
later).
\begin{ax}[Skewness Independence]
\label{axm:non-overlapping}If $h_{x}\propto f_{x}$ and $h_{x}\propto g_{x}$
for all $x\in\text{\text{supp}}\left(h\right)$, then $f\succsim g\Leftrightarrow af+\left(1-a\right)h\succsim ag+\left(1-a\right)h$.
\end{ax}
A standard SEU maximizer considers only the aggregate probability
of each prize occurring regardless of its potential skewness across
states. \axmref{non-overlapping} relaxes this condition by allowing
for a DM who cares about whether a prize is more likely to occur through
state $\omega_{1}$ or through state $\omega_{2}$. The mixture of
overlapping acts can alter this likelihood ratio. In the following
examples, the prizes of $h$ do not overlap with the prizes of $f$
and $g$, so independence holds between $\left(f,g,h\right)$. For
$g'$ and $g''$, $x_{1}$ is now an overlapping prize with $h$.
But $h_{x}\propto g_{x}'$, so independence should still hold between
$\left(f,g',h\right)$. Yet the mixture $\frac{1}{2}g''+\frac{1}{2}h$
distorts the relative likelihood of $x$ by a shift from $\omega_{1}$
to $\omega_{2}$, so the axiom allows independence to fail (i.e.,
$f\succ g''$ but $\frac{1}{2}g''+\frac{1}{2}h\succ\frac{1}{2}f+\frac{1}{2}h$).
\begin{center}
\begin{tabular}{c|cc|}
\multicolumn{1}{c}{$h$} & $x$ & \multicolumn{1}{c}{$y$}\tabularnewline
\cline{2-3}
$\omega_{1}$ & $1/2$ & $1/2$\tabularnewline
$\omega_{2}$ & $1/2$ & $1/2$\tabularnewline
\cline{2-3}
\end{tabular}\hskip 2em%
\begin{tabular}{c|cc|}
\multicolumn{1}{c}{$f$} & $w$ & \multicolumn{1}{c}{$z$}\tabularnewline
\cline{2-3}
$\omega_{1}$ & $1$ & $0$\tabularnewline
$\omega_{2}$ & $0$ & $1$\tabularnewline
\cline{2-3}
\end{tabular}\hskip 2em%
\begin{tabular}{c|cc|}
\multicolumn{1}{c}{$g$} & $w$ & \multicolumn{1}{c}{$z$}\tabularnewline
\cline{2-3}
$\omega_{1}$ & $0$ & $1$\tabularnewline
$\omega_{2}$ & $1$ & $0$\tabularnewline
\cline{2-3}
\end{tabular}
\par\end{center}

\begin{center}
\begin{tabular}{c|cc|}
\multicolumn{1}{c}{$g'$} & $x$ & \multicolumn{1}{c}{$z$}\tabularnewline
\cline{2-3}
$\omega_{1}$ & $1/4$ & $3/4$\tabularnewline
$\omega_{2}$ & $1/4$ & $3/4$\tabularnewline
\cline{2-3}
\end{tabular}\hskip 2em%
\begin{tabular}{c|cc|}
\multicolumn{1}{c}{$g''$} & $x$ & \multicolumn{1}{c}{$z$}\tabularnewline
\cline{2-3}
$\omega_{1}$ & $5/6$ & $1/6$\tabularnewline
$\omega_{2}$ & $1/6$ & $5/6$\tabularnewline
\cline{2-3}
\end{tabular}
\par\end{center}

\subsection{Prize Independence}

Another standard axiom that is weakened is monotonicity. In most models,
``upgrading'' an act by replacing an inferior prize with a superior
prize results in the new act deemed better. This assumption is sometimes
called first-order stochastic dominance. Given act $f$, let $fxp$
be the act that replaces prize $x$ with lottery $p$. Formally, $fxp$
yields the prize $y\ne x$ with the original probability of $y$ plus
the added probability of $y$ due to the replacement of $x$, that
is, $f\left(y\mid\omega\right)+f\left(x\mid\omega\right)p\left(y\right)$.
It yields prize $x$ with probability $f\left(x\mid\omega\right)p\left(x\right)$.\footnote{$fxp\left(y\mid\omega\right)=\begin{cases}
f\left(y\mid\omega\right)+f\left(x\mid\omega\right)p\left(y\right) & y\neq x\\
f\left(x\mid\omega\right)p\left(x\right) & y=x
\end{cases}$.}

If $p\succsim x$, then $fxp$ is an ``upgrade'' from $f$, and
standard models predict $fxp\succsim f$. This is no longer obvious
if the DM cares not only about the aggregate probability of a prize
but also about its probability in each state. In particular, the replacement
of $x$ can distort the relative probability of $y$ between states.
To allow for monotonicity violations only in the face of these distortions,
observe that distortions will not occur if $\text{supp}\left(p\right)\cap\text{supp}\left(f\right)=\emptyset$,
because $f\left(x\mid\omega\right)p\left(y\right)>0$ only if $f\left(y\mid\omega\right)=0$.
Hence we impose monotonicity that pertains to an upgrade with this
precondition. For monotonicity that pertains to a downgrade, i.e.,
$x\succsim p\Rightarrow f\succsim fxp$, we impose without additional
preconditions.
\begin{ax}[Prize Independence]
\label{axm:prize_independence}~
\begin{itemize}
\item If $x\succsim p$ (resp. $\succ$), then $f\succsim fxp$ (resp. $\succ$).
\item If $p\cap f=\emptyset$ and $p\succsim x$ (resp. $\succ$), then
$fxp\succsim f$ (resp. $\succ$).
\end{itemize}
\end{ax}
Furthermore, if two prizes $x,y$ are equally likely under every
state, i.e., $f_{x}=f_{y}$, then equal distortions to both prizes
should preserve indifferences. Formally, consider the weaker notion
of replacing a prize where it is replaced only in a particular state.
Given $f$, let $f_{\omega}xp$ be the result of replacing prize $x$
with lottery $p$ (only) under state $\omega$.\footnote{$f_{\omega}xp\left(y\mid\omega\right)=\begin{cases}
fxp\left(y\mid\omega'\right) & \omega'=\omega\\
f\left(y\mid\omega'\right) & \omega'\ne\omega
\end{cases}$.} We demand that two different ways of modifying $f$ by replacing
equally likely prizes with non-overlapping, indifferent lotteries
do not make one modification better than another.
\begin{ax}[State-Prize Independence]
\label{axm:state_prize_independence}If $f_{x}=f_{y}$, $\delta_{x}\sim p$,
$\delta_{y}\sim q$, $p\cap f=\ensuremath{\emptyset}$, and $q\cap f=\ensuremath{\emptyset}$,
then $f_{\omega}xp\sim f_{\omega}yq.$
\end{ax}

\subsection{Mixture Aversion}

The final axiom concerns the mixture of two indifferent acts. Under
standard independence, the mixture of two acts is indifferent to the
two acts themselves. The DMs we consider are mixture averse. Intuitively,
the mixture operation tends to balance the relative probabilities
of a prize under different states. Recall that \axmref{non-overlapping}
and \axmref{prize_independence} avoid imposing restrictions when
these relative probabilities are potentially distorted. Mixture aversion
focuses on a particular kind of reaction to these distortions, where
making the same prize more-or-less likely to come from any state is
disliked.
\begin{ax}[Mixture Aversion]
\label{axm:informationseeking}If $f\sim g$, then $f\succsim af+\left(1-a\right)g$,
for any $a\in\left[0,1\right]$.
\end{ax}
To interpret, the DM prefers a strong association of each prize to
a particular state, which may be viewed as a particular form of state-dependent
preference for prizes. This is intuitive in the context of information
acquisition, where it is well known that mixing two Blackwell experiments
yields \textquotedblleft balanced\textquotedblright{} likelihood ratios,
thereby producing a posterior distribution that is closer to the prior,
which is less valuable for the subsequent decision problem.

Many models of subjective risk focus on the exact opposite axiom called
uncertainty aversion, where mixture is weakly preferred (i.e., $af+\left(1-a\right)g\succsim f$).\footnote{\citet{sarver2018dynamic} studies a different version of mixture
aversion where the primitive is preference over \citet{epstein1989substitution}
lotteries, and EU-violating behavior is understood as inconsistent
risk attitudes.} \citet{gilboa1989maxmin} is most prominent, where mixtures are preferred
and they underpin ambiguity averse preferences. In our setting, a
certain form of uncertainty may be preferred because the resolution
of these uncertainty contains useful information.

\subsection{\label{subsec:Representation}Representation Theorem}

An intermediate representation bypasses the future menu $A$ by directly
specifying a value function $V$ over posteriors:
\begin{defn}
\label{def:IC_V}$\succsim$ admits an \textbf{Informative Consumption
- V (IC-V)} representation if there exist a full support prior $\mu\in\Delta\left(\Omega\right)$,
a strictly increasing utility function $u:X\rightarrow\mathbb{R}$,
a convex (posterior) value function $V:\Delta\left(\Omega\right)\rightarrow\mathbb{R}$
such that $\succsim$ is represented by 
\[
U\left(f\right)=U_{C}\left(f\right)+\sum_{x\in\text{supp}\left(f\right)}\underbrace{\sum_{\omega\in\Omega}\mu\left(\omega\right)f\left(x\mid\omega\right)}_{\text{probability of posterior \ensuremath{\mu_{x,f}}}}V\left(\mu_{x,f}\right).
\]
\end{defn}
\begin{thm}
\label{thm:IC-1}$\succsim$ satisfies \nameref{axm:standard}, \nameref{axm:non-overlapping},
\nameref{axm:prize_independence}, and \nameref{axm:informationseeking}
if and only if it admits an Informative Consumption (IC) - V representation.
\end{thm}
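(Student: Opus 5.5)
Necessity is routine, so I will check it first and then put most of the effort into sufficiency.

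\emph{Necessity.} Write the information term as $\sum_x \tau(f_x)$ with $\tau(v):=(\mu\cdot v)\,V\bigl(\mu\odot v/\mu\cdot v\bigr)$. This is the perspective of $V$, so it is positively homogeneous of degree one and convex on $\mathbb{R}^N_+$. Both $U_C$ and this term are additive across prizes in the vectors $f_x$, and $U_C$ is linear. Each axiom then follows from one property of $\tau$:
\begin{itemize}
\item \nameref{axm:non-overlapping}: when $h_x\propto f_x$ and $h_x\propto g_x$, homogeneity makes $\tau$ linear on the relevant rays, so $U$ is affine along the mixture.
\item \nameref{axm:informationseeking}: convexity of $\tau$ gives $U(af+(1-a)g)\le aU(f)+(1-a)U(g)$. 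If $f\sim g$, the right side equals $U(f)$, so $f\succsim af+(1-a)g$.
\item \nameref{axm:prize_independence}: the downgrade half uses subadditivity of $\tau$ (a consequence of homogeneity plus convexity) together with $u(x)\ge \sum_y p(y)u(y)$. Disjoint supports give exact additivity, which handles the upgrade half.
\end{itemize}
Continuity comes from continuity of the convex $V$ on the interior and from the full-support prior.

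\emph{Sufficiency.} The plan is to build, successively, an additive-over-prizes representation, linearity in consumption, and a convex homogeneous information part.
\begin{enumerate}
\item \textbf{Expected utility on constant acts.} Constant acts are mutually proportional on every prize. So \nameref{axm:non-overlapping} restricted to constant acts is full vNM independence, and with Basic Axioms it yields $u$ on lotteries, strictly increasing by Monotonicity. Prize Independence and continuity then give a certainty equivalent for every act, and I define $U(f)$ to be $u$ of that certainty equivalent.
\item \textbf{Additive separability over prizes.} Write $f=\sum_x \lambda_x f^{(x)}$, where each $f^{(x)}$ is supported on the single prize $x$. The components have disjoint supports, so each $h_x$ is either $0$ or proportional to the component being mixed. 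Repeated use of \nameref{axm:non-overlapping} should then give $U(f)=\sum_x \phi_x(f_x)$ with each $\phi_x$ homogeneous of degree one in the likelihood vector. Concretely, I would show that $U$ is mixture-linear over disjoint-support acts and that relabelling prizes is harmless.
\item \textbf{Separating consumption from information.} Use \nameref{axm:state_prize_independence} and Prize Independence: replace prize $x$ by $y$ through non-overlapping indifferent lotteries, one state at a time. This should show $\phi_x(v)=u(x)\,\mu\cdot v+\tau(v)$ with $\tau$ independent of $x$, and it identifies $\mu$ as the weights by which a state-$\omega$ replacement moves utility. Mixture Aversion then makes $\phi$, and hence $\tau$, convex. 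Homogeneity lets me define $V(\nu):=\tau(\nu/\mu)$ on the simplex, and $V$ is convex. Full support of $\mu$ should follow from continuity and Monotonicity, since otherwise some state would be null.
\end{enumerate}

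\emph{Main obstacle.} I expect step 2 to be the hardest: extracting additivity and homogeneity from Skewness Independence when prizes overlap, without the usual Anscombe--Aumann mixture-space machinery. A related difficulty is showing that the information term $\tau$ is the same function for every prize. My first attempt would be to prove $U(af+(1-a)h)=aU(f)+(1-a)U(h)$ for disjoint supports. I would do this by a Herstein--Milnor-style argument on the mixture set generated by $f$, $h$ and certainty equivalents, using prize-relabelling invariance derived from \nameref{axm:state_prize_independence} with $f_x=f_y$.
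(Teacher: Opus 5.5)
Your target form $U(f)=\sum_x[u(x)\,\mu\cdot f_x+\tau(f_x)]$ is right, but the sufficiency plan breaks at its first two steps.

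\textbf{Step 1: certainty equivalents need not exist.} Prize Independence bounds every act below by a degenerate act. Nothing bounds it above by one, because revealing acts can beat every sure prize. Take $u(x)=1-e^{-x}$, $\mu$ uniform on two states, and $V(\pi)=10\,|\pi(\omega_1)-\tfrac12|$. This is a Lipschitz IC-V preference, so it satisfies the axioms. The act paying $0$ in $\omega_1$ and $1$ in $\omega_2$ has $U>5$, while $U(\delta_x)=u(x)<1$ for every $x$. So $U:=u\circ CE$ is undefined. This is why the paper first shows that every act lies between a degenerate act and a \emph{simple} (fully revealing) act. It then builds $U$ by vNM calibrations on nested sandwiches between such bounds, glued through their agreement on degenerate acts, and obtains affinity only for proportional pairs. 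Your Herstein--Milnor idea on a convex hull of mutually proportional acts is sound locally, but the calibrating acts must include these informative upper bounds.

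\textbf{Step 2: the decomposition is impossible as written.} An act supported on $\{x\}$ is $\delta_x$, so $\sum_x\lambda_x f^{(x)}$ is always a constant act. A non-constant act is not a mixture of single-prize acts; the weights would have to depend on the state. Mixture-linearity over disjoint-support acts does not by itself yield $U(f)=\sum_x\phi_x(f_x)$.

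The missing idea is to use the state-by-state splitting you reserve for step 3 to \emph{build} the additive structure. The paper sets $\bar v(f_x):=U(f_x^*)-U(f)$, where $f_x^*$ replaces $x$ in each state $\omega_i$ by a lottery $p_i\sim\delta_x$. The $p_i$ have pairwise disjoint supports outside $\text{supp}(f)$. Take $h$ with $h_x=f_x$, otherwise disjoint from $f$, and split it with the same lotteries. Then $\tfrac12 f_x^*+\tfrac12 h=\tfrac12 h_x^*+\tfrac12 f$, and with affinity on proportional mixtures this shows that $\bar v$ depends only on the vector $f_x$. State-Prize Independence handles changing the prize label. Telescoping over prizes gives $U(f)=U(f_I^*)-\sum_x\bar v(f_x)$ with $f_I^*$ fully revealing. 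On fully revealing acts the information term is constant. So there $U$ is state-dependent expected utility, built from statewise simple acts and a benchmark act, and comparison with lotteries yields $\mu$ and $u$. Full support comes from Monotonicity plus the strict upgrade clause of Prize Independence on statewise simple acts, not from continuity. Your $\tau$ is then $-\bar v$ plus a linear term. It is homogeneous by Skewness Independence and convex by Mixture Aversion. Since Mixture Aversion only constrains indifferent pairs, you must first reduce to $f\sim g$, a step the paper also leaves implicit.

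Two further caveats. First, you use State-Prize Independence, which the statement omits from its hypotheses; the paper's proof needs it too. Second, in the necessity direction, convexity of $V$ does not deliver Continuity. Let $V=0$ on the interior and $V=10$ at the vertices, and let $f,h$ be the two fully revealing acts with the prizes swapped. Then $af+(1-a)h$ has interior posteriors for $a\in(0,1)$, so $U$ drops by $10$ away from the endpoints, and $\{a:g\succsim af+(1-a)h\}$ can equal $[0,1)$. That direction needs $V$ continuous.
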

\begin{thm}
\label{thm:IC-2}$\succsim$ admits an Informative Consumption (IC)
- V representation $\left(\mu,u,V\right)$ where $V$ is Lipschitz
continuous if and only if it admits an Informative Consumption (IC)
representation $\left(\mu,u,\alpha,A\right)$.\footnote{Lipschitz continuous: there exists $L\geq0$ such that for all $\pi,\pi'\in\Delta\left(\Omega\right)$,
$\left|V\left(\pi\right)-V\left(\pi'\right)\right|\leq L\left|\pi-\pi'\right|$.}
\end{thm}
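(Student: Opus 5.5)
We prove the two directions separately. The whole argument rests on one observation: $U_{I}$ has the posterior-separable form of \defref{IC_V}, with value of a posterior given by the maximum of linear functionals. So the problem reduces to identifying $V$ (up to positive scaling and an additive constant) with the value function of a menu.

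\textbf{IC $\Rightarrow$ IC-V with Lipschitz $V$.} Take an IC representation $(\mu,u,\alpha,A)$ and divide $U$ by $\alpha>0$. This yields
\[
U_{C}\left(f\right)+\sum_{x\in\text{supp}\left(f\right)}\left[\sum_{\omega}\mu\left(\omega\right)f\left(x\mid\omega\right)\right]V\left(\mu_{x,f}\right),\qquad V\coloneqq\frac{1-\alpha}{\alpha}W,
\]
where $W(\pi)\coloneqq\max_{g\in A}\pi\cdot v_{g}$ and $v_{g}(\omega)\coloneqq\sum_{y}g(y\mid\omega)u(y)$.

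\begin{itemize}
\item \emph{$W$ is finite and attained everywhere.} Because $\mu$ has full support, every $\pi\in\Delta(\Omega)$ is a posterior $\mu_{x,f}$ for a suitable act $f$. Since $U$ is real-valued on $\mathcal{F}$, $W$ is finite on the whole simplex and its maximum is attained at every $\pi$.
\item \emph{Convexity.} $W$ is a pointwise maximum of linear functions, so it is convex. Hence $V$ is convex provided $\alpha\leq1$, which I take to be the intended range; this is the discounting reading $\beta=(1-\alpha)/\alpha\ge0$. For $\alpha>1$ I would argue separately that the representation degenerates to the case of singleton $A$.
\item \emph{Lipschitz continuity.} The maximizer $g_{\pi}$ at $\pi$ supplies a supporting linear function $\pi'\mapsto\pi'\cdot v_{g_{\pi}}$ of $W$ at $\pi$. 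It suffices to show that these supporting vectors can be chosen uniformly bounded, modulo the direction $(1,\dots,1)$, which is irrelevant on the simplex. Upper bounds come from evaluating at the vertices: $v_{g}(\omega)\le W(\delta_{\omega})$ for every $g\in A$. For lower bounds I would argue along each segment from $\pi$ to a vertex. On that segment $W$ is a one-dimensional finite convex function with subgradients at both endpoints, since the maximum is attained at the endpoint. Its slopes are therefore monotone and bounded by the endpoint slopes, and compactness of the simplex then gives a uniform bound.
\end{itemize}
I expect this Lipschitz step to be the main obstacle. The reason is that a finite convex function on a simplex need not be Lipschitz (think of $-\sqrt{t}$ at $0$), so attainment of the maximum at boundary points is exactly what must be exploited.

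\textbf{IC-V with Lipschitz $V$ $\Rightarrow$ IC.} Let $V$ be convex and Lipschitz with constant $L$.

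\begin{itemize}
\item \emph{Bounded supporting vectors.} Every $\pi\in\Delta(\Omega)$, including boundary points, has a supporting affine function, and the Lipschitz bound lets us choose its slope with norm at most $L$. Since affine functions on the simplex are linear ($\sum\pi(\omega)=1$), we may write $V(\pi)=\max_{a\in S}\pi\cdot a$, where $S$ is a set of uniformly bounded vectors and the maximum is attained at every $\pi$.
\item \emph{Realizing the vectors as acts.} Monotonicity makes $u$ strictly increasing, so $u(X)$ has an interior point $m$ with a neighbourhood $(m-\varepsilon,m+\varepsilon)\subseteq\text{co}\,u(X)$. Pick $c>0$ large enough that $m+a(\omega)/c$ lies in this neighbourhood for all $a\in S$ and all $\omega$. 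For each such $a$, choose an act $g_{a}$ whose lottery in state $\omega$ mixes two prizes so that its expected utility equals $m+a(\omega)/c$.
\item \emph{Assembling the representation.} Set $A\coloneqq\{g_{a}:a\in S\}$ and $W(\pi)\coloneqq\max_{g\in A}\pi\cdot v_{g}$, so that $W(\pi)=m+V(\pi)/c$. Put $\alpha\coloneqq1/(1+c)$, so that $(1-\alpha)/\alpha=c$. Then
\[
\alpha U_{C}\left(f\right)+\left(1-\alpha\right)U_{I}\left(f\right)=\alpha\left[U_{C}\left(f\right)+\sum_{x\in\text{supp}\left(f\right)}\left[\sum_{\omega}\mu\left(\omega\right)f\left(x\mid\omega\right)\right]V\left(\mu_{x,f}\right)\right]+\alpha c\,m.
\]
This uses that the signal probabilities sum to one, so the constant $cm$ enters additively. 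The right-hand side is a positive affine transformation of the IC-V utility, so $(\mu,u,\alpha,A)$ is an IC representation of $\succsim$.
\end{itemize}
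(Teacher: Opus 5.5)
Your direction ``IC-V with Lipschitz $V$ $\Rightarrow$ IC'' is correct and more direct than the paper's. The paper replays the construction of Theorem~\ref{thm:IC-1} and argues that discarding never-optimal hyperplanes bounds the family. That need not hold: $V\equiv0$ is supported along the face $\pi_3=0$ by every $\pi\mapsto-n\pi_3$. You instead pick at each $\pi$ one supporting functional of norm at most $L$, which exists at boundary points too via the $L$-Lipschitz convex extension $\inf_{\pi'}[V(\pi')+L|\cdot-\pi'|]$. This gives attainment and uniform boundedness at once, and your level $m$ supplies the shift that the paper's rescaling into $[u(0),u(1)]$ omits. The converse direction, however, has two genuine gaps.

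\textbf{The Lipschitz step fails for $|\Omega|\ge3$.} The problem is not only your write-up; the implication itself is false there. Your segment argument bounds slopes on $[\pi,\delta_\omega]$ by the slope of $\pi'\mapsto\pi'\cdot v_{g_\pi}$ at the $\pi$-end, but nothing bounds these uniformly in $\pi$. Attainment gives a finite supporting functional at each boundary point, yet these can steepen without bound in the direction normal to an edge as one moves along it. So no compactness argument can produce a uniform bound.

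Here is a counterexample. Take $N=3$, $\mu$ uniform, $u$ the identity, $\alpha=\tfrac12$, and $A=\{g_t:t\in[-2,2]\}$. Let $g_t$ pay $v_t(\omega)$ for sure in state $\omega$, with $v_0=0$ and $v_t=\bigl(t-\tfrac{t^2}{4},\,-t-\tfrac{t^2}{4},\,-\tfrac{t^2}{4}-\tfrac{1}{|t|}\bigr)$ for $t\ne0$. Writing $z=\pi_1-\pi_2$ and $r=\pi_3$, one gets $\pi\cdot v_t=tz-\tfrac{t^2}{4}-\tfrac{r}{|t|}$.
\begin{itemize}
\item The maximum is attained at every $\pi$: at $t=2z$ if $r=0$; if $r>0$ the objective tends to $-\infty$ as $t\to0$, so the maximum is attained on some $\{\varepsilon\le|t|\le2\}$ or at $t=0$. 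Hence this is a legitimate IC representation.
\item Yet $W(z,0)=z^2$, while $W(z,z^3)\le\tfrac34z^2$ for $0<z<\tfrac12$. Indeed, for $0<t\le4z$ the term $-z^3/t$ costs at least $z^2/4$, and for $t>4z$ or $t<0$ the value is negative.
\item So difference quotients of order $1/z$ appear and $W$ is not Lipschitz. By Proposition~\ref{prop:identificationV}, no IC-V representation of this preference has a Lipschitz $V$.
\end{itemize}
Your argument is fine for $|\Omega|=2$, where the simplex is itself a segment. In general you need an extra hypothesis, e.g.\ $\sup_{g\in A,\,\omega\in\Omega}\bigl|\sum_{y}g(y\mid\omega)u(y)\bigr|<\infty$, which makes $W$ a maximum of uniformly Lipschitz linear functionals. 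The paper dismisses this direction as straightforward because attainment ``precludes vertical hyperplanes.'' The example shows that excluding vertical supporting hyperplanes pointwise does not exclude arbitrarily steep ones, so the defect is in the statement itself.

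\textbf{The case $\alpha>1$ cannot be handled by showing that the representation ``degenerates to a singleton $A$.''} That claim is false whenever $W(\pi)=\max_{g\in A}\pi\cdot v_g$ is not affine. In that case some binary split of the interior prior $\mu$ is strictly valued by $W$. Build a violation of Mixture Aversion as follows:
\begin{itemize}
\item Let $f$ induce that split with prizes $x_1,x_2$, and let $g$ be $f$ with $x_1$ and $x_2$ interchanged.
\item Mix each with a constant lottery on fresh prizes so that the resulting acts satisfy $f'\sim g'$.
\item The act $\tfrac12f'+\tfrac12g'$ is constant. Since the posterior term now carries weight $1-\alpha<0$, it is strictly preferred to $f'$.
\end{itemize}
This violates Mixture Aversion, so by Theorem~\ref{thm:IC-1} such a preference has no IC-V representation at all. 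The statement must be read with $\alpha\in(0,1]$, which is the range the paper's construction actually produces. You should impose that restriction rather than try to argue it away.
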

We remark that the Lipschitz continuity of $V$ can be obtained by
adding a continuity axiom à la \citet{dekel2007representing}, where
we essentially require that the equivalents of any two acts that are
sufficiently close (in conditional probabilities) have bounded differences.
We leave this attempt for future versions of this paper.

\subsubsection*{Identification}

In both IC and IC-V, the prior $\mu$ is unique and the Bernoulli
utility function $u$ is unique up to positive affine transformation,
as in standard SEU. For IC-V, the value function $V$ can scale with
$u$, which is standard. Moreover, because $V$ can be ``tilted''
without changing the ex-ante value of any distribution of posteriors
(i.e., the continuation value), we express this flexibility using
the dot product $\lambda\cdot\pi$ in \axmref{prize_independence}.
\begin{prop}
\label{prop:identificationV}Suppose $\succsim$ admits IC-V representation
$\left(\mu_{1},u_{1},V_{1}\right)$, then it also admits an IC-V representation
$\left(\mu_{2},u_{2},V_{2}\right)$ if and only if $\mu_{1}=\mu_{2}$
and there exist $\kappa\in\mathbb{R}_{++}$, $\lambda\in\mathbb{R}^{\left|\Omega\right|}$,
and $a,b\in\mathbb{R}$ such that $u_{2}=\kappa u_{1}+a$ and $V_{2}\left(\pi\right)=\kappa V_{1}\left(\pi\right)+\lambda\cdot\pi+b$
for all $\pi\in\Delta\left(\Omega\right)$.
\end{prop}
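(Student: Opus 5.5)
The proof has two directions. Sufficiency ("if") is a direct computation. For necessity ("only if") I would first pin down $\mu$ and $u$ from a subclass of acts on which the two representations reduce to SEU. I would then pin down $V$ from acts that generate a prescribed posterior.

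\emph{Sufficiency.} Suppose $\mu_2=\mu_1=\mu$, $u_2=\kappa u_1+a$ and $V_2=\kappa V_1+\lambda\cdot\pi+b$. Then $U_{C,2}(f)=\kappa U_{C,1}(f)+a$. The continuation term becomes
\[
\sum_x \mu_f(x)\left[\kappa V_1(\mu_{x,f})+\lambda\cdot\mu_{x,f}+b\right],
\]
where $\mu_f(x)=\sum_\omega\mu(\omega)f(x\mid\omega)$. By the martingale property $\sum_x\mu_f(x)\mu_{x,f}=\mu$, this equals
\[
\kappa\sum_x\mu_f(x)V_1(\mu_{x,f})+\lambda\cdot\mu+b.
\]
Hence $U_2=\kappa U_1+\text{const}$, and the two representations induce the same preference. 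Convexity of $V_2$ is preserved because $\kappa>0$ and the added term is affine.

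\emph{Necessity, step 1 ($\mu$ and $u$).} Restrict attention to constant acts $p$. Every posterior equals $\mu$, so $U_i(p)=\sum_x p(x)u_i(x)+V_i(\mu)$. This is expected utility on $\Delta(X)$, and the vNM uniqueness theorem (using strict monotonicity of $u$) gives $u_2=\kappa u_1+a$ with $\kappa>0$.

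For the prior, I would use acts in which each prize appears in exactly one state, say $f$ giving $\delta_{x_\omega}$ in state $\omega$ with distinct prizes. Every posterior is then a Dirac measure, so the information term is the constant $\sum_\omega\mu_i(\omega)V_i(\delta_\omega)$, which is linear in $\mu_i$ but not constant across priors. To handle this, I would compare two such fully revealing acts that have the same prize assignment up to a permutation of which prizes go with which states, but I would keep $\mu$ explicit. Concretely, let $f$ and $g$ be fully revealing with state-wise prizes $x_\omega$ and $y_\omega$. Then $U_i(f)-U_i(g)=\sum_\omega\mu_i(\omega)[u_i(x_\omega)-u_i(y_\omega)]$, because the information terms coincide. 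So on fully revealing acts, comparisons are SEU comparisons with prior $\mu_i$. Since $u$ is strictly increasing on $\mathbb{R}$, the standard bet-indifference argument recovers $\mu_i$: fix $\omega,\omega'$ and vary the prize in one state until indifference. The prior is therefore uniquely identified, so $\mu_1=\mu_2$.

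\emph{Necessity, step 2 ($V$).} After normalizing so that $u_2=u_1$ (this absorbs $\kappa$ and $a$), define $W=V_2-\kappa V_1$. Both $U_2$ and $\kappa U_1$ represent $\succsim$ with identical consumption parts. I would first show they differ by a constant, and this is the main obstacle, since an ordinal representation need not coincide cardinally. My approach is to use certainty equivalents: every act $f$ has $\delta_{CE(f)}\sim f$ (by continuity and richness of $X=\mathbb{R}$), and on degenerate acts $U_2=\kappa U_1+\text{const}$. Evaluating at $f$ and at $\delta_{CE(f)}$ then gives $U_2(f)=\kappa U_1(f)+c$ for all $f$. Consequently
\[
\sum_x\mu_f(x)W(\mu_{x,f})=c' \quad\text{for every act } f.
\]
Since acts can generate any finite Bayes-plausible distribution of posteriors, $W$ has constant expectation across all splits of $\mu$. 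Using binary splits $\mu=t\pi+(1-t)\pi'$ together with the fully revealing split, I would show that $W$ is affine on $\Delta(\Omega)$. An affine function on the simplex has the form $\lambda\cdot\pi+b$, which delivers the stated form of $V_2$.
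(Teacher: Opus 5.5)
Your sufficiency argument and the recovery of $u$ from constant lotteries are fine. Using lotteries is in fact tighter than the paper's appeal to sure prizes alone, and your overall architecture matches the paper's.

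\textbf{The main gap.} It lies in the step you flag as the main obstacle: the claim that every act has a certainty equivalent. In an IC-V representation $u$ is only strictly increasing, and the Continuity axiom concerns mixtures, so sure prizes need not reach the utility of an arbitrary act. More seriously, information value can lift an act above every constant act even when $u$ is continuous. Take $N=2$, $\mu$ uniform, $u(x)=1/(1+e^{-x})$ and $V(\pi)=2\max_{\omega}\pi(\omega)$. Every constant act has $U(p)=\sum_x p(x)u(x)+1<2$, whereas every fully revealing act has $U(f)=U_C(f)+2>2$. So no fully revealing act has a certainty equivalent, or even an indifferent constant lottery, and these are exactly the acts you need for the fully revealing split. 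Your derivation of $U_2=\kappa U_1+c$ on all of $\mathcal{F}$ therefore breaks down.

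\textbf{The fix.} The missing device is the paper's: dilute $f$ with a sure prize outside its support. If $z\notin\text{supp}(f)$, then $h=\epsilon f+(1-\epsilon)\delta_z$ keeps each posterior $\mu_{x,f}$ (with probability scaled by $\epsilon$) and adds the posterior $\mu$ with probability $1-\epsilon$. Hence $U_i(h)=\epsilon U_i(f)+(1-\epsilon)U_i(\delta_z)$ for $i=1,2$. Choose $z''<z<z'$ and $\epsilon$ small, so that $\delta_{z'}\succ h\succ\delta_{z''}$. Then $h\sim\beta\delta_{z'}+(1-\beta)\delta_{z''}$ for some $\beta\in(0,1)$. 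Since $U_2=\kappa U_1+c$ on constant acts, $U_2(h)=\kappa U_1(h)+c$, and linearity in $\epsilon$ transfers this to $f$. The paper does exactly this with $\epsilon I^*+(1-\epsilon)\delta_0\sim p$. It then calibrates every $f\in\mathcal{F}^*$ against mixtures of $I^*$ and $\delta_0$, never against sure prizes.

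\textbf{Two smaller points.} First, binary splits $\mu=t\pi+(1-t)\pi'$ only make $W$ affine along each line through $\mu$, and the fully revealing split adds a single scalar equation. For $N\ge 3$ this does not give additivity across directions. A nonlinear $W(\mu+d)$ that is odd and positively homogeneous of degree one in $d$, and vanishes at the vertices, passes all of these tests. Use instead the paper's nested comparison. For $\pi=\gamma\pi_1+(1-\gamma)\pi_2$, pick $\pi'$ and $\epsilon$ with $\epsilon\pi+(1-\epsilon)\pi'=\mu$. Comparing the split $\{\pi,\pi'\}$ with $\{\pi_1,\pi_2,\pi'\}$ gives $W(\pi)=\gamma W(\pi_1)+(1-\gamma)W(\pi_2)$. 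Since you already have constant expectation across all splits, this is immediate.

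Second, in identifying $\mu$, varying a sure prize until indifference again presumes continuity of $u$. Instead, let each state of your fully revealing acts pay a lottery supported on prizes used in no other state. Posteriors then stay degenerate, and mixture linearity supplies exact indifferences.
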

The identification of IC is affected by the inevitable non-uniqueness
of the future menu $A$. For instance, the inclusion of dominated
acts in $A$ will not affect decisions. Moreover, multiplicity of
optimal acts in $A$ will induce redundancy. What can be identified,
instead, is that admissible $A$'s must induce the same (posterior)
value function:
\begin{defn}
Given $\left(\mu,u,\alpha,A\right)$, let the induced (posterior)
value function $V^{*}:\Delta\left(\Omega\right)\rightarrow\mathbb{R}$
be
\begin{equation}
V^{*}\left(\mu_{x,f}\right)=\left(\frac{1-\alpha}{\alpha}\right)\sup_{g\in A}\sum_{\omega\in\Omega}\mu_{x,f}\left(\omega\right)\sum_{y\in\text{supp}\left(g\right)}g\left(y\mid\omega\right)u\left(y\right).\label{eq:V*}
\end{equation}
\end{defn}
\begin{prop}
\label{prop:identificationA}Suppose $\succsim$ admits IC representation
$\left(\mu_{1},u_{1},\alpha_{1},A_{1}\right)$, then it also admits
an IC representation $\left(\mu_{2},u_{2},\alpha_{2},A_{2}\right)$
if and only if $\mu_{1}=\mu_{2}$ and there exist $\kappa\in\mathbb{R}_{++}$,
$\lambda\in\mathbb{R}^{\left|\Omega\right|}$, and $a,b\in\mathbb{R}$
such that $u_{2}=\kappa u_{1}+a$ and $V_{2}^{*}\left(\pi\right)=\kappa V_{1}^{*}\left(\pi\right)+\lambda\cdot\pi+b$
for all $\pi\in\Delta\left(\Omega\right)$.
\end{prop}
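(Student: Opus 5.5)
The plan is to reduce \propref{identificationA} to \propref{identificationV}. The bridge is the observation that every IC representation $\left(\mu,u,\alpha,A\right)$ is, after rescaling, an IC-V representation $\left(\mu,u,V^{*}\right)$ with $V^{*}$ given by \eqref{V*}.

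First step. Since $\alpha>0$, dividing $U$ by $\alpha$ preserves the ordering. We have $\sum_{x}\left[\sum_{\omega}\mu\left(\omega\right)f\left(x\mid\omega\right)\right]V^{*}\left(\mu_{x,f}\right)=\frac{1-\alpha}{\alpha}U_{I}\left(f\right)$. Hence $U/\alpha=U_{C}+\sum_{x}\Pr\left(x\right)V^{*}\left(\mu_{x,f}\right)$, which is exactly the IC-V form.

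I also need $V^{*}$ to qualify as an IC-V value function. It is a supremum of linear functions of the posterior, so it is convex whenever $\alpha\leq1$. If $\alpha>1$, the factor $\frac{1-\alpha}{\alpha}$ is negative and $V^{*}$ becomes concave. I would handle this by noting that IC-V only needs some representation to exist. In the direction I use, both IC representations represent the same $\succsim$, and the uniqueness argument of \propref{identificationV} probably does not use convexity. If it does, I would simply rerun that argument without it. With this in hand, the values are finite by the convention of \eqref{V*}, and the same $\mu$, $u$ work.

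Only if. Take two IC representations of $\succsim$. The first step turns them into two IC-V-type representations $\left(\mu_{i},u_{i},V_{i}^{*}\right)$ of the same preference. \propref{identificationV} then gives $\mu_{1}=\mu_{2}$, $u_{2}=\kappa u_{1}+a$, and $V_{2}^{*}=\kappa V_{1}^{*}+\lambda\cdot\pi+b$.

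If. Suppose $\left(\mu_{1},u_{1},\alpha_{1},A_{1}\right)$ represents $\succsim$, and $\left(\mu_{2},u_{2},\alpha_{2},A_{2}\right)$ is any IC tuple satisfying the stated relations. Under the second tuple, $U_{2}/\alpha_{2}$ equals
\[
\kappa U_{C,1}+a+\sum_{x}\Pr\left(x\right)\left[\kappa V_{1}^{*}\left(\mu_{x,f}\right)+\lambda\cdot\mu_{x,f}+b\right].
\]
Here $\Pr\left(x\right)$ is the same under both tuples because $\mu_{1}=\mu_{2}$. The tilt term aggregates by the martingale property: $\sum_{x}\Pr\left(x\right)\mu_{x,f}=\mu$, because $\sum_{x}f\left(x\mid\omega\right)=1$. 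So the tilt contributes the constant $\lambda\cdot\mu$, and $b$ likewise contributes a constant. Therefore $U_{2}/\alpha_{2}=\kappa\,U_{1}/\alpha_{1}+\text{const}$, and the second tuple represents $\succsim$ as well.

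The main obstacle is the sign issue above when $\alpha>1$ (or $\alpha=1$, where $V^{*}\equiv0$, which is harmless). I expect the resolution to be that the identification argument of \propref{identificationV} relies only on the functional form, not on convexity. Otherwise I would verify directly that uniqueness of $\left(\mu,u\right)$ up to affine transformation follows from comparing constant acts, which carry no information. Identification of $V$ up to tilt would then follow from comparing acts with disjoint supports that induce arbitrary binary posterior splittings.
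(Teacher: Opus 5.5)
Your proposal is correct. The ``only if'' half is the paper's argument: divide by $\alpha$ to put each IC representation in IC-V form with the induced $V^{*}$, then apply Proposition~\ref{prop:identificationV}.

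Your concern about $\alpha>1$ is well-founded. The definition only requires $\alpha>0$, and for $\alpha>1$ the factor $\frac{1-\alpha}{\alpha}<0$ makes $V^{*}$ concave, so the paper's appeal to Theorem~\ref{thm:IC-2} tacitly assumes $\alpha\le 1$. Your fix works. The uniqueness proof of Proposition~\ref{prop:identificationV} uses constant acts for $u$, the acts $f_1,f_2,g,h$ sharing one information partition for $\mu$, and acts in $\mathcal{F}^{*}$ inducing arbitrary posterior splits for $V$. All of this uses only the functional form, never convexity.

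There is one slip in your fallback sketch. Constant acts pin down $u$ but say nothing about $\mu$, since $U_C(p)$ does not involve $\mu$ and $U_I(p)$ is constant. The prior $\mu$ has to be identified from acts with state-varying prizes but a common information structure, as the paper does.

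The ``if'' half is where you genuinely differ. You check directly, via $\sum_x\Pr(x)\mu_{x,f}=\mu$, that $U_2/\alpha_2=\kappa U_1/\alpha_1+\text{const}$. The paper instead makes a round trip: IC $\to$ IC-V (Theorem~\ref{thm:IC-2}) $\to$ transformed IC-V (Proposition~\ref{prop:identificationV}) $\to$ IC (Theorem~\ref{thm:IC-2} again). Your route needs neither Lipschitz continuity nor Theorem~\ref{thm:IC-2}. It also establishes exactly what the statement asserts: that the prescribed tuple $(\mu_2,u_2,\alpha_2,A_2)$ represents $\succsim$. The paper's round trip, read literally, only yields some IC representation with the transformed $(\mu_2,u_2)$, not necessarily one with the given $(\alpha_2,A_2)$.
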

Both identification results utilize the fact that every posterior
belief is possible can be induced because $\mu$ has full support
and $\mathcal{F}$ is the set of all acts/experiments.

\section{\label{sec:info_coef}Arrow\textendash Debreu Information Coefficient}

In general, the future menu $A$ can involve arbitrarily many parameters.
For applications and comparative statics, the most important feature
of $A$ may be a DM's degree of information incentive. We propose
a one-dimensional measure that captures this incentive and its heterogeneity
across DMs.
\begin{assumption}
Throughout \secref{info_coef}, $u$ is continuous .
\end{assumption}

\subsection{Arrow\textendash Debreu Menu}

Drawing inspiration from the Arrow\textendash Debreu asset, let $a_{\iota_{i},\omega_{i}}\in\mathcal{F}$
be an act that pays $\iota_{i}\geq0$ units of consumption in state
$\omega_{i}$ and $0$ in other states. We also normalize for $u\left(0\right)=0$.
We call $A$ an Arrow\textendash Debreu menu if $A=\left\{ a_{\iota_{1},\omega_{1}},a_{\iota_{2},\omega_{2}},...,a_{\iota_{N},\omega_{N}}\right\} $.
Given an Arrow\textendash Debreu menu $A$, we call $\iota_{i}$ the
coefficient of information incentive for state $\omega_{i}$. A DM
stands to gain more by choosing $a_{\iota_{i},\omega_{i}}$ if $\iota_{i}$
is greater, and $a_{\iota_{i},\omega_{i}}$ is the optimal act if
the DM is sufficiently certain that the state is $\omega_{i}$, hence
$\iota_{i}$ indirectly captures the DM's incentive to distinguish
state $\omega_{i}$ from other states. If the coefficients are the
same across states, i.e., $\iota_{i}=\iota$ for all $i$, we say
that $A$ has a constant Arrow\textendash Debreu coefficient of information
incentive $\iota$. All else the same, a DM with a greater $\iota$
benefits more from information. Consider the following special case
of IC.
\begin{defn}
\label{def:mu_u_i}We say $\succsim$ has an \textbf{Arrow\textendash Debreu
(AD)} representation $\left(\mu,u,\iota\right)$ if it has an IC representation
$\left(\mu,u,\alpha,A\right)$ where $A$ has a constant Arrow\textendash Debreu
coefficient of information incentive $u^{-1}\left(\frac{\alpha}{1-\alpha}u\left(\iota\right)\right)$.

When $\alpha=0.5$, the coefficient of information incentive is simply
$\iota$. In general, the consumption weight $\alpha$ also affects
a DM's information incentive, where a greater $\alpha$ (more weight
on consumption) counterbalances a greater $\iota$ (more benefits
from information). Therefore, and without loss of generality, the
AD representation absorbs the effect of $\alpha$ into $\iota$ so
that the single parameter $\iota$ captures total information incentive.
\end{defn}
\begin{prop}
\label{prop:icad}If $\succsim$ has an AD representation $\left(\mu,u,\iota\right)$,
then $\succsim$ is represented by $U\left(f\right)=U_{C}\left(f\right)+U_{I}\left(f\right)$
where
\[
U_{I}\left(f\right)=u\left(\iota\right)\mathbb{I}\left(f,\mu\right)
\]
and
\begin{equation}
\mathbb{I}\left(f,\mu\right)\coloneqq\sum_{x\in\text{supp}\left(f\right)}\max_{\omega\in\Omega}\mu\left(\omega\right)f\left(x\mid\omega\right).\label{eq:index}
\end{equation}
\end{prop}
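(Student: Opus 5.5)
We compute directly from the IC representation $\left(\mu,u,\alpha,A\right)$ in \defref{IC_statement}, where $A=\{a_{\bar\iota,\omega_{1}},\dots,a_{\bar\iota,\omega_{N}}\}$ and $\bar\iota=u^{-1}\left(\frac{\alpha}{1-\alpha}u\left(\iota\right)\right)$. The plan has three steps: compute the value of each posterior, aggregate over signals, and then rescale the utility.

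\textbf{Step 1: value of a posterior.} For a posterior $\pi$, the asset $a_{\bar\iota,\omega_{i}}$ pays $\bar\iota$ in $\omega_{i}$ and $0$ elsewhere. Using the normalization $u(0)=0$, its expected utility is $\pi(\omega_{i})u(\bar\iota)$. Since $\bar\iota\geq0$ and $u$ is strictly increasing, $u(\bar\iota)\geq0$. Hence
\[
\max_{g\in A}\sum_{\omega}\pi\left(\omega\right)\sum_{y}g\left(y\mid\omega\right)u\left(y\right)=u\left(\bar\iota\right)\max_{\omega}\pi\left(\omega\right).
\]

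\textbf{Step 2: aggregate over signals.} Substitute $\pi=\mu_{x,f}$ into \eqref{UI}. The probability of signal $x$ is $\sum_{\omega'}\mu(\omega')f(x\mid\omega')$, and it cancels the denominator of the Bayesian posterior. This gives
\[
U_{I}\left(f\right)=u\left(\bar\iota\right)\sum_{x\in\text{supp}(f)}\max_{\omega}\mu\left(\omega\right)f\left(x\mid\omega\right)=u\left(\bar\iota\right)\mathbb{I}\left(f,\mu\right).
\]
The cancellation is valid for every $x\in\text{supp}(f)$. Because $\mu$ has full support and $x\in\text{supp}(f)$, the signal probability is strictly positive, so the posterior is well defined.

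\textbf{Step 3: rescale.} The IC utility is $\alpha U_{C}(f)+(1-\alpha)u(\bar\iota)\mathbb{I}(f,\mu)$. The definition of $\bar\iota$ gives $u(\bar\iota)=\frac{\alpha}{1-\alpha}u(\iota)$, so this equals
\[
\alpha\left[U_{C}\left(f\right)+u\left(\iota\right)\mathbb{I}\left(f,\mu\right)\right].
\]
Since $\alpha>0$, dividing by $\alpha$ is a positive rescaling, and it yields a representation of the same $\succsim$ with the claimed form.

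\textbf{Main obstacle.} The delicate points are bookkeeping about when $\bar\iota$ is well defined. This requires $\alpha<1$, so that $\frac{\alpha}{1-\alpha}$ is finite. It also requires $\frac{\alpha}{1-\alpha}u(\iota)$ to lie in the range of $u$, which uses the continuity assumption of \secref{info_coef}. I will treat this as implicit in the hypothesis that an AD representation exists. Beyond that, Step 1 only needs $u(\bar\iota)\geq0$ to pull the factor $u(\bar\iota)$ out of the maximum.
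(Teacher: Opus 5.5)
Your proof is correct and follows essentially the same route as the paper. You evaluate each Arrow--Debreu asset at the posterior to get $u(\bar\iota)\max_{\omega}\mu_{x,f}(\omega)$, cancel the signal probability against the Bayes denominator to obtain $\mathbb{I}(f,\mu)$, and absorb the weights via $u(\bar\iota)=\frac{\alpha}{1-\alpha}u(\iota)$ and a division by $\alpha>0$. The paper divides by $\alpha$ first, which is only a difference in ordering, and your explicit check that $u(\bar\iota)\geq 0$ justifies pulling the constant out of the max, a step the paper uses without comment.
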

In particular, $U_{I}\left(f\right)$ (previously \eqref{UI}) is
now separable in $\left(u,\iota\right)$ and $\left(f,\mu\right)$
through a novel notion of \emph{information index} $\mathbb{I}$,
which has intuitive properties:\footnote{(1) $\mathbb{I}\left(f,\mu\right)=\sum_{x}\max_{\omega}\mu\left(\omega\right)\mathds{1}_{x\sim\omega}=\sum_{\omega}\mu\left(\omega\right)=1$.
(2) $\mu\left(\omega\right)=\frac{1}{\left|\Omega\right|}\rightarrow0$
so $\mathbb{I}\left(f,\mu\right)\rightarrow0$. (3) and (4) are due
to $g$ being the product of a Markov kernel applied to $f$'s signal
realizations, which weakens the values attained by max.}
\begin{enumerate}
\item $\mathbb{I}\left(f,\mu\right)=1$ if $f$ is fully informative.
\item $\mathbb{I}\left(f,\mu\right)\rightarrow0$ if $f$ is uninformative,
$\mu$ is uniform, and $\left|\Omega\right|\rightarrow\infty$. 
\item $\mathbb{I}\left(f,\mu\right)\geq\mathbb{I}\left(g,\mu\right)$ if
$f$ is Blackwell more informative than $g$.
\item $\mathbb{I}\left(f,\mu\right)>\mathbb{I}\left(g,\mu\right)$ if $f$
is strictly Blackwell more informative than $g$.
\end{enumerate}
If $f$ is a constant act, then $\mathbb{I}\left(f,\mu\right)$ reduces
to
\begin{equation}
\mathbb{N}\left(\mu\right)\coloneqq\max_{\omega\in\Omega}\mu\left(\omega\right)\text{.}\label{eq:N}
\end{equation}
A high $\mathbb{N}\left(\mu\right)$ corresponds to the situation
where the DM is a priori confident about the true state.

\subsubsection*{Information incentive, subjective risk, information premium}

The constant acts $p_{f}$ and $\delta_{z}$ have the same information
index $\mathbb{I}\left(f,\mu\right)=\mathbb{N}\left(\mu\right)$,
so $U_{I}\left(p_{f}\right)=U_{I}\left(\delta_{z}\right)=u\left(\iota\right)\mathbb{N}\left(\mu\right)$.
This is despite the fact that $p_{f}$ may still contain objective
risk. The result is $CE\left(p_{f}\right)$ simply solves $u\left(CE\left(p_{f}\right)\right)=U_{C}\left(p_{f}\right)$,
so risk premium $\text{RP}\left(f\right)$ is never affected by information
incentive $\iota$.

The informative incentive $\iota$ influences $CE\left(f\right)$
through information premium, which in the AD representation simplifies
to\footnote{By solving $U_{I}\left(\delta_{CE\left(f\right)}\right)=u\left(z\right)+u\left(\iota\right)\left(\max_{\omega}\mu\left(\omega\right)\right)\equiv\sum_{x}p_{f}\left(x\right)u\left(x\right)+u\left(\iota\right)\mathbb{I}\left(f,\mu\right)=U_{I}\left(f\right)$
for $CE\left(f\right)=u^{-1}\left(U_{C}\left(f\right)+u\left(\iota\right)\left[\mathbb{I}\left(f,\mu\right)-\mathbb{N}\left(\mu\right)\right]\right)$.} 
\begin{equation}
\text{\text{IP}}\left(f\right)=u^{-1}\left(U_{C}\left(f\right)+u\left(\iota\right)\left[\mathbb{I}\left(f,\mu\right)-\mathbb{N}\left(\mu\right)\right]\right)-u^{-1}\left(U_{C}\left(f\right)\right).\label{eq:ip-AD}
\end{equation}
Note that information premium is increasing in the degree of information
incentive $\iota$, increasing in the information index $\mathbb{I}\left(f,\mu\right)$
of $f$, and decreasing in the DM's ex-ante confidence about the true
state as measured by $\mathbb{N}\left(\mu\right)$, all of which are
intuitive.

In the AD representation, if $f$ is itself a constant act or the
prior $\mu$ is degenerate, then $\mathbb{I}\left(f,\mu\right)=\mathbb{N}\left(\mu\right)$.
\eqref{ip-AD} then gives$\text{\text{IP}}\left(f\right)=0$. Or if
$\iota=0$, then \eqref{ip-AD} gives $\text{IP}\left(\cdot\right)=0$.
In other words, pure objective risk and the lack of information incentive
each lead to standard SEU behavior, illustrating the discussion in
\secref{setup}.
\begin{prop}
\label{prop:riskpremiuminformationpremium}For any given $f\in\mathcal{F}$,
all else the same,
\begin{enumerate}
\item $\text{\emph{RP}}\left(f\right)$ weakly increases when $u$ is more
risk-averse,\footnote{As in standard, we say $u$ is more risk averse than $v$ if there
exists a concave $h:\mathbb{R}\rightarrow\mathbb{R}$ such that $u=h\circ v$.}
\item $\emph{\text{IP}}\left(f\right)$ weakly increases in $\iota$,
\item $\mathcal{\text{\emph{RP}}}\left(f\right)\geq\text{\emph{RP}}\left(g\right)$
if $p_{f}$ is a mean-preserving spread of $p_{g}$,
\item $\text{\text{\emph{IP}}}\left(f\right)\geq\text{\text{\emph{IP}}}\left(g\right)$
if $f$ Blackwell-dominates $g$ and $CE\left(p_{f}\right)=CE\left(p_{g}\right)$.
\end{enumerate}
All weak notions become strict if the sufficient conditions become
strict.

\end{prop}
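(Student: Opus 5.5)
The plan is to work throughout with the AD representation in the form of \propref{icad}, so that $U(f)=U_C(f)+u(\iota)\mathbb{I}(f,\mu)$. We then use the closed forms for the two premiums. Since $p_f$ is constant, $CE(p_f)=u^{-1}(U_C(f))$, which gives
\[
\text{RP}(f)=\mathbb{E}_f(x)-u^{-1}\left(U_C(f)\right),
\]
while $\text{IP}(f)$ is given by \eqref{ip-AD}. Each of the four items then reduces to a monotonicity statement about one of these two expressions. Strictness will follow from the strict versions of the same monotonicity facts.

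For item 1, only $\text{RP}$ matters, and it depends on $u$ only through $CE(p_f)$. This is the textbook Pratt argument. If $u=h\circ v$ with $h$ concave and increasing (increasing because $u,v$ are strictly increasing), Jensen gives $\sum p_f(x)u(x)\le h\left(\sum p_f(x)v(x)\right)$. Applying $u^{-1}=v^{-1}\circ h^{-1}$ yields $CE_u(p_f)\le CE_v(p_f)$, so $\text{RP}$ weakly rises. Strict concavity of $h$ together with non-degenerate $p_f$ gives strictness.

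For item 3, $U_C(f)=\sum_x p_f(x)u(x)$ depends on $f$ only through $p_f$. A mean-preserving spread keeps $\mathbb{E}$ fixed and, for concave $u$ (which I would take as the maintained risk-averse case), lowers $U_C$ by Rothschild--Stiglitz. Hence $CE(p_f)\le CE(p_g)$ and $\text{RP}(f)\ge\text{RP}(g)$.

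For item 2, I would first note that $\mathbb{I}(f,\mu)\ge\mathbb{N}(\mu)$. This holds because
\[
\sum_x\max_\omega\mu(\omega)f(x\mid\omega)\ge\max_\omega\sum_x\mu(\omega)f(x\mid\omega)=\mathbb{N}(\mu).
\]
Then $U_C(f)$ is independent of $\iota$, $u(\iota)$ is increasing in $\iota$, and $u^{-1}$ is increasing. So the first term of \eqref{ip-AD} weakly increases in $\iota$ while the second is fixed. It increases strictly when $\mathbb{I}>\mathbb{N}$, i.e.\ when $f$ is informative in the relevant sense.

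Item 4 is the main obstacle, because \eqref{ip-AD} mixes $U_C(f)$ with $\mathbb{I}(f,\mu)$, and Blackwell dominance alone does not order consumption utilities. The hypothesis $CE(p_f)=CE(p_g)$ is exactly what neutralizes this: it gives $U_C(f)=u(CE(p_f))=U_C(g)$, so the second term of \eqref{ip-AD} is common to $f$ and $g$. In the first term, the only difference is $\mathbb{I}(f,\mu)$ versus $\mathbb{I}(g,\mu)$. Those are ordered by property 3 of the information index, with strictness from property 4, and these can be justified as in the footnote. If $g(y\mid\omega)=\sum_x K(y\mid x)f(x\mid\omega)$ for a Markov kernel $K$, then
\[
\sum_y\max_\omega\sum_x K(y\mid x)\mu(\omega)f(x\mid\omega)\le\sum_y\sum_x K(y\mid x)\max_\omega\mu(\omega)f(x\mid\omega)=\mathbb{I}(f,\mu).
\]
The delicate part is strictness. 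Strict Blackwell dominance must force strict inequality in this max-of-sums step for some $y$, which needs care (and $u(\iota)>0$, i.e.\ $\iota>0$). I would argue by contradiction. Equality for all $y$ means every $x$ merged by $K$ into the same $y$ shares a common maximizing state. I would then try to show this makes $f$ a garbling of $g$, contradicting strictness. If that fails in general, I would state strictness under the reading that ``strict'' means $\mathbb{I}(f,\mu)>\mathbb{I}(g,\mu)$.
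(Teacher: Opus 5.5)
Your arguments for the four weak inequalities are correct and follow the paper's route. The paper's proof is only the one-line sketch ``items 1 and 3 are well known; items 2 and 4 follow from the closed form of $\text{IP}$ and the properties of $\mathbb{I}$,'' and you supply the missing details. Your caveats are genuinely needed and the paper omits them: item 3 requires concave $u$, and strictness in items 2 and 4 requires $\iota>0$.

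The step that would fail is the one you flagged: proving strictness in item 4 by showing that $\mathbb{I}(f,\mu)=\mathbb{I}(g,\mu)$ forces $f$ to be a garbling of $g$. No such argument exists, because strict Blackwell dominance does not imply $\mathbb{I}(f,\mu)>\mathbb{I}(g,\mu)$. The index $\mathbb{I}(f,\mu)$ is the value of one fixed decision problem (guess the state, payoff $1$ if correct). An informative act that never changes the maximum-a-posteriori guess does not raise it. That is exactly your equality case, where merged signals share a common maximizing state, and it is compatible with $f$ not being a garbling of $g$.

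For example, let $\Omega=\{\omega_1,\omega_2\}$, $\mu=(0.9,0.1)$, $f(x_1\mid\omega_1)=0.6$, $f(x_1\mid\omega_2)=0.4$, and $f(x_2\mid\omega)=1-f(x_1\mid\omega)$. Then $\mathbb{I}(f,\mu)=0.54+0.36=0.9=\mathbb{N}(\mu)$, even though both posteriors on $\omega_1$ ($\approx 0.93$ and $\approx 0.86$) differ from the prior. Let $g$ be the constant act with $g(\omega)=p_f$ for all $\omega$, so $p_g=p_f$ and $CE(p_f)=CE(p_g)$. Then $f$ strictly Blackwell-dominates $g$, yet $\text{IP}(f)=\text{IP}(g)=0$ for every $\iota$.

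Hence the strictness clause of item 4 is false as stated. So is property (4) of the information index, which is all the paper's proof invokes for it. The same $f$ shows that item 2 is strict only when $\mathbb{I}(f,\mu)>\mathbb{N}(\mu)$, as you wrote.

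Drop the contradiction attempt and commit to your fallback. Once $U_C(f)=U_C(g)$ cancels, $\text{IP}(f)>\text{IP}(g)$ holds if and only if $\iota>0$ and $\mathbb{I}(f,\mu)>\mathbb{I}(g,\mu)$. That is what your argument actually proves.
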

\begin{proof}
Items 1 and 3 are well-known. Items 2 and 4 are straightforward from
\eqref{ip-AD} and the properties of $\mathbb{I}$.
\end{proof}
The special case of risk neutrality (i.e., $u\left(x\right)=\lambda x$,
$\lambda>0$) simplifies \eqref{ip-AD} to
\[
\text{\text{IP}}\left(f\right)=\iota\left[\mathbb{I}\left(f,\mu\right)-\mathbb{N}\left(\mu\right)\right],
\]
which is independent of the exact prizes $f$ sends.

\subsection{\label{subsec:Arrow-Debreu-Pratt}Arrow\textendash Debreu\textendash Pratt
Representation}

It is much more common in applications to work with a parameterized
Bernoulli utility function like the one associated with the constant
absolute risk aversion, 
\[
u_{\rho}\left(x\right)=1-e^{-\rho x},
\]
where $\rho$ is the (constant) Arrow\textendash Pratt coefficient
of absolute risk aversion.

Paired with our constant Arrow-Debreu coefficient of information incentive
$\iota$, we obtain a 2-parameter model given in \defref{adp}. Clearly,
the results and observations about AD representations apply.
\begin{defn}
\label{def:adp}We say $\succsim$ has an \textbf{Arrow\textendash Debreu\textendash Pratt
(ADP)} representation $\left(\mu,\rho,\iota\right)$ if it has an
AD representation $\left(\mu,u_{\rho},\iota\right)$.
\end{defn}
In standard expected utility, the coefficient $\rho$ provides a simple
estimate of an individual's risk aversion. A higher $\rho$ is always
associated with a higher risk premium; that is, the individual is
willing to accept a lower sure prize (certainty equivalent) to give
up the same act, holding prior fixed. Therefore, if two individuals
with the same prior have different certainty equivalents for the same
act, it must be that the individual with a lower certainty equivalent
has a greater $\rho$, i.e., is more risk averse.

This is no longer true. Rewriting \eqref{cef} with preference parameters
included (holding $\mu$ fixed),
\begin{align*}
CE_{\rho_{1},\iota_{1}}\left(f\right) & =\mathbb{E}_{f}\left(x\right)-\text{RP}_{\rho_{1}}\left(f\right)+\text{\text{IP}}_{\rho_{1},\iota_{1}}\left(f\right)\\
CE_{\rho_{2},\iota_{2}}\left(f\right) & =\mathbb{E}_{f}\left(x\right)-\text{RP}_{\rho_{2}}\left(f\right)+\text{\text{IP}}_{\rho_{2},\iota_{2}}\left(f\right).
\end{align*}
Note that $CE_{\rho_{1},\iota_{1}}\left(f\right)=CE_{\rho_{2},\iota_{2}}\left(f\right)$
can be consistent with $\rho_{1}>\rho_{2}$. In that case, $\text{RP}_{\rho_{1}}\left(f\right)>\text{RP}_{\iota_{2}}\left(f\right)$
but $\text{\text{IP}}_{\rho_{1},\iota_{1}}\left(f\right)>\text{\text{IP}}_{\rho_{2},\iota_{2}}\left(f\right)$.

That is, even though individual 1 is more risk averse than individual
2, which should lead to the willingness to accept a lower sure prize,
she also benefits more from information relative to individual 2 and
therefore sacrifices more from accepting a sure prize. When the two
opposite forces exactly cancel each other out, her certainty equivalent
is the same as individual 2's.

More generally, if an individual takes less risk by choosing $g$
over $f$ and another individual takes more risk by choosing $f$
over $g$, standard methods infer that the former is more risk averse.
What we emphasize is that the two individuals could well have the
same risk aversion $\rho_{1}=\rho_{2}$ but differ in their information
incentive $\iota_{1}\ne\iota_{2}$. It turns out that the information
index $\mathbb{I}$ (\eqref{index}) underpins this indeterminacy.

Let $\succsim_{\left(\mu,\rho,\iota\right)}$ denote the preference
under an ADP representation $\left(\mu,\rho,\iota\right)$ and $\succ_{\left(\mu,\rho,\iota\right)}$
its asymmetric component (strict preference). Recall that $\succsim_{\left(\mu,\rho,0\right)}$
reduces to standard SEU.
\begin{prop}
\label{prop:same_rho}Suppose
\[
\begin{array}{ccc}
g\succ_{\left(\mu,\rho_{1},0\right)}f &  & f\succ_{\left(\mu,\rho_{2},0\right)}g\end{array}
\]
for some $\rho_{1}>\rho_{2}$. There exist $\bar{\rho}$ and $\iota_{1}<\iota_{2}$
such that 
\[
\begin{array}{ccc}
g\succ_{\left(\mu,\bar{\rho},\iota_{1}\right)}f &  & f\succ_{\left(\mu,\bar{\rho},\iota_{2}\right)}g\end{array}
\]
if and only if $\mathbb{I}\left(f,\mu\right)>\mathbb{I}\left(g,\mu\right)$.
\end{prop}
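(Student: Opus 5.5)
The plan is to reduce everything to a one-dimensional comparison using \propref{icad}. Under an ADP representation $\left(\mu,\rho,\iota\right)$, $\succsim$ is represented by $U_{C}^{\rho}\left(f\right)+u_{\rho}\left(\iota\right)\mathbb{I}\left(f,\mu\right)$. Here $U_{C}^{\rho}$ is the expected utility under $u_{\rho}$ and prior $\mu$.

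Define
\[
D\left(\rho\right)\coloneqq U_{C}^{\rho}\left(g\right)-U_{C}^{\rho}\left(f\right),\qquad\Delta\coloneqq\mathbb{I}\left(f,\mu\right)-\mathbb{I}\left(g,\mu\right).
\]
Then $g\succ_{\left(\mu,\rho,\iota\right)}f$ if and only if $D\left(\rho\right)>u_{\rho}\left(\iota\right)\Delta$. Since $u_{\rho}\left(0\right)=0$, the case $\iota=0$ is SEU, so the hypotheses read $D\left(\rho_{1}\right)>0>D\left(\rho_{2}\right)$. I take $\rho$ to range over positive values, as for a CARA coefficient, so $u_{\rho}$ is strictly increasing and $u_{\rho}\left(\iota\right)\geq0$ for $\iota\geq0$.

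\emph{Only if.} Suppose $\Delta\leq0$, and suppose some $\bar{\rho}$ and $\iota_{1}<\iota_{2}$ give $g\succ_{\left(\mu,\bar{\rho},\iota_{1}\right)}f$. Then $D\left(\bar{\rho}\right)>u_{\bar{\rho}}\left(\iota_{1}\right)\Delta$. Because $u_{\bar{\rho}}$ is strictly increasing and $\Delta\leq0$, we have $u_{\bar{\rho}}\left(\iota_{2}\right)\Delta\leq u_{\bar{\rho}}\left(\iota_{1}\right)\Delta$. Hence $D\left(\bar{\rho}\right)>u_{\bar{\rho}}\left(\iota_{2}\right)\Delta$, so $g\succ_{\left(\mu,\bar{\rho},\iota_{2}\right)}f$, which contradicts $f\succ_{\left(\mu,\bar{\rho},\iota_{2}\right)}g$. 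This direction is immediate.

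\emph{If.} Suppose $\Delta>0$. The main obstacle is that $u_{\rho}$ is bounded above by $1$. So we cannot simply fix $\bar{\rho}=\rho_{1}$ and send $\iota_{2}\to\infty$: if $D\left(\rho_{1}\right)\geq\Delta$, no $\iota_{2}$ flips the ranking. Instead, choose $\bar{\rho}$ where $D$ is positive but arbitrarily small.

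\begin{itemize}
\item $D$ is continuous in $\rho$, since $f,g$ have finite support and $u_{\rho}\left(x\right)$ is continuous in $\rho$. By the intermediate value theorem, the zero set of $D$ on $\left[\rho_{2},\rho_{1}\right]$ is nonempty and closed. Let $\rho^{*}$ be its maximum.
\item Then $D>0$ on $\left(\rho^{*},\rho_{1}\right]$ and $D\left(\rho\right)\to0$ as $\rho\downarrow\rho^{*}$.
\item Fix any $\iota_{2}>0$. The quantity $u_{\rho}\left(\iota_{2}\right)\Delta=\left(1-e^{-\rho\iota_{2}}\right)\Delta$ is bounded away from zero for $\rho\geq\rho^{*}>0$. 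Here positivity of $\rho^{*}$ uses $\rho^{*}\geq\rho_{2}>0$.
\item Therefore there is $\bar{\rho}\in\left(\rho^{*},\rho_{1}\right]$ with $0<D\left(\bar{\rho}\right)<u_{\bar{\rho}}\left(\iota_{2}\right)\Delta$. This gives $f\succ_{\left(\mu,\bar{\rho},\iota_{2}\right)}g$.
\item Take $\iota_{1}=0$, or any sufficiently small positive $\iota_{1}<\iota_{2}$ using continuity of $u_{\bar{\rho}}$ at $0$. Then $D\left(\bar{\rho}\right)>u_{\bar{\rho}}\left(\iota_{1}\right)\Delta$, so $g\succ_{\left(\mu,\bar{\rho},\iota_{1}\right)}f$.
\end{itemize}

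The only delicate points are the sign convention for $\rho$ (ensuring $u_{\bar{\rho}}$ is increasing and $u_{\bar{\rho}}\left(\iota\right)\geq0$) and the boundedness of $u_{\rho}$, which is what forces the intermediate-value choice of $\bar{\rho}$.
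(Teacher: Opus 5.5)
Your proof is correct and follows essentially the same route as the paper: reduce via \propref{icad} to comparing the consumption gap with $u_{\rho}(\iota)\left[\mathbb{I}(f,\mu)-\mathbb{I}(g,\mu)\right]$, use continuity in $\rho$ to find a $\bar{\rho}$ at which the gap favors $g$ but only slightly, set $\iota_{1}=0$, and obtain the converse from strict monotonicity of $u_{\bar{\rho}}$. The only difference is the order of choices: the paper fixes a target gap $\epsilon$ with $|\epsilon|<\frac{1}{2}\left[\mathbb{I}(f,\mu)-\mathbb{I}(g,\mu)\right]$, hits it exactly by the intermediate value theorem, and then takes $\iota_{2}$ with $u_{\bar{\rho}}(\iota_{2})=\frac{1}{2}$, whereas you fix $\iota_{2}$ first and approach the largest zero of $D$ from above.
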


\subsubsection*{Joint identification of parameters}

\propref{same_rho} highlights potential indeterminacy in preference
parameters when an act has both greater aggregate risk and a greater
information index. Joint identification is still possible; just not
with these acts. Counter-intuitively at first, using acts that do
not ``dominate'' one another does not necessarily provide identification,
because different combinations of risk aversion and information incentive
can cancel each other out.

Joint identification of $\rho$ and $\iota$ can be achieved by carefully
separating objective risk and subjective risk. In what follows, we
compare two constant acts to identify $\rho$, compare two acts with
the same degree of information to identify $\mu$, and compare a fully
informative act with a constant act to identify $\iota$.

Recall that constant acts have an information premium of $0$. Therefore,
if $g$ is a non-degenerate constant act and the elicited certainty
equivalent is $z$, then $\rho^{*}$ simply solves $\mathbb{E}_{p_{g}}\left(u_{\rho}\left(x\right)\right)=u_{\rho}\left(g\right)$,
where $p_{g}=g\left(\omega_{1}\right)$ is independent of the prior.

Next, suppose $x_{1},x_{2}>0$ and the DM is found indifferent between
$h$ and $h_{z}$.
\begin{center}
\begin{tabular}{c|cc|}
\multicolumn{1}{c}{$h$} & $x_{1}$ & \multicolumn{1}{c}{$x_{2}$}\tabularnewline
\cline{2-3}
$\omega_{1}$ & $1$ & $0$\tabularnewline
$\omega_{2}$ & $0$ & $1$\tabularnewline
\cline{2-3}
\end{tabular}\hskip 2em%
\begin{tabular}{c|cc|}
\multicolumn{1}{c}{$h_{z'}$} & $z'$ & \multicolumn{1}{c}{$0$}\tabularnewline
\cline{2-3}
$\omega_{1}$ & $1$ & $0$\tabularnewline
$\omega_{2}$ & $0$ & $1$\tabularnewline
\cline{2-3}
\end{tabular}
\par\end{center}

\noindent\begin{flushleft}
Since both are fully informative ($\mathbb{I}\left(h,\mu\right)=\mathbb{I}\left(h_{z'},\mu\right)=1$),
$\mu^{*}$ uniquely solves
\[
\mu^{*}\left(\omega_{1}\right)u_{\rho^{*}}\left(x_{1}\right)+\left(1-\mu^{*}\left(\omega_{1}\right)\right)u_{\rho^{*}}\left(x_{2}\right)\equiv\mu^{*}\left(\omega_{1}\right)u_{\rho^{*}}\left(z'\right).
\]
If there are more than two states, a workaround that maintains $\mathbb{I}\left(h,\mu\right)=\mathbb{I}\left(h_{z'},\mu\right)$
is to assign a third distinct prize to both $h$ and $h_{z'}$ for
every other state. This identifies $\mu\left(\omega_{2}\right)/\mu\left(\omega_{1}\right)$.
The procedure can be repeated for every $\mu\left(\omega_{i}\right)/\mu\left(\omega_{1}\right)$
to eventually identify $\mu$.\footnote{For instance if $\Omega=\left\{ \omega_{1},\omega_{2},\omega_{3},\omega_{4}\right\} $,
then $h\left(\omega_{3}\right)=h_{z'}\left(\omega_{3}\right)=h\left(\omega_{4}\right)=h_{z'}\left(\omega_{4}\right)=\delta_{x_{3}}$,
so $\mathbb{I}\left(h,\mu\right)=\mu\left(\omega_{1}\right)+\mu\left(\omega_{2}\right)+\max_{\tilde{\omega}\in\left\{ \omega_{3},\omega_{4}\right\} }\mu\left(\tilde{\omega}\right)$
and $\mathbb{I}\left(h_{z'},\mu\right)=\mu\left(\omega_{1}\right)+\mu\left(\omega_{2}\right)+\max_{\tilde{\omega}\in\left\{ \omega_{3},\omega_{4}\right\} }\mu\left(\tilde{\omega}\right)$,
so we are back to $U_{I}\left(h\right)=U_{I}\left(h_{z'}\right)$.}
\par\end{flushleft}

Finally, let $f$ be a fully informative act and suppose its elicited
certainty equivalent is $z''$, then $\iota^{*}$ uniquely solves
\[
u_{\rho^{*}}\left(z''\right)+u_{\rho^{*}}\left(\iota\right)\mathbb{N}\left(\mu^{*}\right)\equiv\mathbb{E}_{p_{f}}\left(u_{\rho^{*}}\left(x\right)\right)+u_{\rho^{*}}\left(\iota\right)\mathbb{I}\left(f,\mu^{*}\right)
\]
 (\propref{icad}) where $\mathbb{I}\left(f,\mu\right)=1$ because
$f$ is fully informative and $p_{f}$ is the lottery reduction of
$f$ under $\mu^{*}$.

\section{\label{sec:Conclusion}Conclusion}

The focus of this paper is the inference of a DM's \emph{subjective
future menu}, which entails the DM's perception of future problems
and the need to sacrifice consumption today for information tomorrow.

Our axiomatic characterization exploits the behavioral signatures
of a DM\textquoteright s preferences over Anscombe\textendash Aumann
acts\textemdash which yield objective information upon the resolution
of uncertainty\textemdash to model this consumption\textendash information
trade-off. We introduce several analytical tools to study information
incentives, including a novel notion of an information premium (analogous
to a risk premium) and a two-parameter ADP parameterization that independently
captures risk aversion and information incentives.

From an alternative perspective, the model can be interpreted as a
framework for information acquisition with a consumption cost. This
forgone consumption utility provides a microfoundation for why information
is inherently costly. In ongoing work, we explore an extension of
this model where\textemdash because consumption derives from expected
utility and beliefs converge more slowly toward the extremes\textemdash pairing
a linear consumption loss with diminishing information gains naturally
gives rise to convex information costs.

\begin{singlespace}\bibliographystyle{chicago}
\phantomsection\addcontentsline{toc}{section}{\refname}\bibliography{database_all}
\end{singlespace}

\newpage{}

\appendix

\section{\label{appx:Appendix:-Proofs}Appendix: Proofs}

\subsection{Proof of Theorem \ref{thm:IC-1}}

We prove ``only if'' as the other direction is straightforward.

\subsubsection*{Step 0: \label{subsec:notations-objects}Notation and objects}

For any $f\in\mathcal{F}$, we say that $f_{x}^{*}\in\mathcal{F}$
is a\emph{ spitting of prize} $x\in\text{supp}\left(f\right)$ if
$f_{x}^{*}$ can be obtained through the following operation: There
is a set of lotteries $\left\{ p_{1},...,p_{\left\{ \Omega\right\} }\right\} $
satisfying $\text{supp}\left(p_{i}\right)\ensuremath{\cap}\text{supp}\left(f\right)=\ensuremath{\emptyset}$
for all $i$, $\text{supp}\left(p_{i}\right)\ensuremath{\cap}\text{supp}\left(p_{j}\right)=\ensuremath{\emptyset}$
for all $i,j$, and $p_{i}\sim x$ for all $i$, such that 
\[
f_{x}^{*}\left(z\mid\omega_{i}\right)\coloneqq\begin{cases}
f\left(x\mid\omega_{i}\right)p_{i}\left(z\right) & z\in\text{supp}\left(p_{i}\right)\\
f\left(z\mid\omega_{i}\right) & z\ne x
\end{cases}.
\]
In words, $f_{x}^{*}$ is obtained by assigning a non-overlapping
state-dependent lottery replacement of prize $x$ in $f$. Given $f\in\mathcal{F}$
and $x\in X$, Let 
\[
F^{*}\left(f,x\right)
\]
be the set of all $f_{x}^{*}$. Note that $F^{*}\left(f,x\right)\ne\emptyset$
because $\text{supp}\left(f\right)$ is finite and $X=\mathbb{R}$.
Moreover, note that if $f_{x}$ is such that for some $\omega_{i}\in\Omega$
we have $f_{x}\left(\omega_{i}\right)>0$ and $f_{x}\left(\omega_{j}\right)=0$
for all $\omega_{j}\in\Omega\backslash\left\{ \omega_{i}\right\} $,
then $f\in F^{*}\left(f,x\right)$. For any $x$, $f_{x}^{*}\in F^{*}\left(f,x\right)$,
and $f_{x}^{*}{}'\in F^{*}\left(f',x\right)$, we say $\left(f_{x}^{*},f_{x}^{*}\right)$
\emph{involve common splits} if their operations involve the same
$\left\{ p_{1},...,p_{N}\right\} $. If $x\notin\text{supp}\left(f\right)$,
then let $F^{*}\left(f,x\right)=\left\{ f\right\} $.

For any $f\in\mathcal{F}$, define $f_{I}^{*}\in\mathcal{F}$ as follows:
Turn $\text{supp}\left(f\right)$ into an ordered set $\left\{ x_{1},...,x_{M}\right\} \subset X$.
For the first prize $x_{1}$, pick $f^{1}$ from $F^{*}\left(f,x_{1}\right)$.
If $M>1$, then for the second prize $x_{2}$, pick $f^{2}$ from
$F^{*}\left(f^{1},x_{2}\right)$ satisfying $\text{supp}\left(f^{2}\right)\cap\left\{ x_{1}\right\} =\emptyset$.
If $M>2$, iterate this process for each other $i$ to obtain $f^{3},...,f^{M}$,
each satisfying $\text{supp}\left(f^{i}\right)\cap\left\{ x_{1},...,x_{i-1}\right\} =\emptyset$,
which is possible at every step because $\text{supp}\left(f^{i}\right)$
is finite and $X=\mathbb{R}$. Finally, let $f_{I}^{*}\coloneqq f^{M}$.
In general, $f_{I}^{*}$ is not unique (given $f$) because the process
of picking $f^{1},f^{2},...$ involve freedom. Let 
\[
I^{*}\left(f\right)
\]
denote the set of all $f_{I}^{*}$. Note that $I^{*}\left(f\right)\ne\emptyset$
because $\text{supp}\left(f\right)$ is finite and $X=\mathbb{R}$. 

If an act $f\in\mathcal{F}$ is such that for any $x\in\text{supp}\left(I\right)$,
there exists $\omega_{i}\in\Omega$ such that $f\left(x\mid\omega_{i}\right)>0$
and $f\left(x\mid\omega_{j}\right)=0$ for all $\omega_{j}\ne\omega_{i}$,
then we call it a (fully) \emph{informative act}. Let $\mathcal{F}_{I}$
denote the set of all informative acts, and its typical element is
denoted by $I$. Clearly, $I^{*}\left(f\right)\subset\mathcal{F}_{I}$
for all $f$.

If an act $f\in\mathcal{F}$ is such that for all $\omega_{i}\in\Omega$,
$f\left(x_{i}\mid\omega_{i}\right)=1$ for some $x_{i}\in X$, and
$x_{i}\ne x_{j}$ for all $i\ne j$, then we call it a \emph{simple
act}. Let $\mathcal{F}_{S}$ denote the set of all simple acts, and
its typical element is denoted by $f^{s}$. Clearly, $\mathcal{F}_{S}\subset\mathcal{F}_{I}$.

Let $\mathcal{F}_{P}\subset\mathcal{F}$ denote the set of constant
acts. Let $\mathcal{F}_{X}\subset\mathcal{F}$ denote the set of degenerate
constant acts.

\subsubsection*{Step 1: Construction of $U$ and its properties}
\begin{lem}
\label{lem:Monotonicity}If $f\succ g$ and $f_{x}\propto g_{x}$
for all $x\in\text{supp}\left(f\right)$, then $af+\left(1-a\right)g\succ\beta f+\left(1-\beta\right)g$
for all $a,\beta\in\left[0,1\right]$ such that $a>\beta$.
\end{lem}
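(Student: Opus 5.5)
The plan is to obtain the strict ranking of mixtures from \nameref{axm:non-overlapping} alone, applied twice. Each time the mixing partner $h$ will be a mixture of $f$ and $g$. Two preliminary observations are needed first.

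First, \nameref{axm:non-overlapping} is stated as an equivalence for $\succsim$, so it also gives the strict version. If $f\succ g$, then $af+(1-a)h\succsim ag+(1-a)h$. If the reverse weak preference held as well, the equivalence would give $g\succsim f$, a contradiction. This uses only completeness from \nameref{axm:standard}.

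Second, I check that every mixture $h=\beta f+(1-\beta)g$ satisfies the proportionality precondition with respect to both $f$ and $g$. Take $x\in\text{supp}(h)$ and recall that $u_x\propto v_x$ means $k u_x=v_x$ for some $k\geq0$.
\begin{itemize}
\item If $x\in\text{supp}(f)$, the hypothesis gives $g_x=kf_x$. Then $h_x=(\beta+(1-\beta)k)f_x$ is a nonnegative, nonzero multiple of $f_x$, so $h_x\propto f_x$ and $h_x\propto g_x$.
\item If $x\notin\text{supp}(f)$, then $f_x=0$ and $h_x=(1-\beta)g_x$. So $h_x\propto g_x$, and $h_x\propto f_x$ holds with $k=0$.
\end{itemize}
The same computation shows that $f$ itself (the case $\beta=1$) satisfies the precondition. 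This also gives $h_x\propto h_x$ trivially.

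Now fix $a>\beta$, so $\beta<1$.

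\emph{Step 1: $f\succ h$ where $h\coloneqq\beta f+(1-\beta)g$.} Apply \nameref{axm:non-overlapping} to the triple $(f,g,f)$ with mixing weight $1-\beta\in(0,1]$. The precondition holds by the second observation. From $f\succ g$ we get
\[
f=(1-\beta)f+\beta f\succ(1-\beta)g+\beta f=h.
\]

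\emph{Step 2: write the target as a mixture.} Set $\gamma=(a-\beta)/(1-\beta)\in(0,1]$. Then
\[
af+(1-a)g=\gamma f+(1-\gamma)h.
\]

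\emph{Step 3: compare with $h$.} Apply \nameref{axm:non-overlapping} to the triple $(f,h,h)$ with weight $\gamma$. This requires $h_x\propto f_x$ and $h_x\propto h_x$ for all $x\in\text{supp}(h)$, both verified above. Since $f\succ h$ by Step 1, we obtain
\[
\gamma f+(1-\gamma)h\succ\gamma h+(1-\gamma)h=h.
\]
By Step 2 this reads $af+(1-a)g\succ\beta f+(1-\beta)g$, as claimed.

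The only delicate point is the precondition check in the second observation, specifically prizes in $\text{supp}(g)\setminus\text{supp}(f)$. There the zero vector $f_x$ must count as proportional to $h_x$. This works precisely because $\propto$ is defined with $k\geq0$ multiplying $h_x$, so $0\cdot h_x=f_x$ is allowed. Otherwise the argument is the standard mixture-monotonicity derivation from independence.
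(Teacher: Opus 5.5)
Your proof is correct and is essentially the paper's argument: two applications of Skewness Independence using the intermediate weight $(a-\beta)/(1-\beta)$. The only difference is the order. The paper first obtains $kf+(1-k)g\succ g$ with partner $g$ and then mixes both sides with $f$ at weight $\beta$, whereas you first obtain $f\succ\beta f+(1-\beta)g$ and then mix with that mixture $h$; this is why you need the extra proportionality check for $h$ (including prizes in $\text{supp}(g)\setminus\text{supp}(f)$), which you carry out correctly.
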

\begin{proof}
Fix qualifying $f,g,a,\beta$. By \axmref{non-overlapping}, $kf+\left(1-k\right)g\succ kg+\left(1-k\right)g$
for $k=\frac{a-\beta}{1-\beta}$, then since $\left(kf+\left(1-k\right)g\right)_{x}\propto f_{x}$
for any $x\in\text{supp}\left(f\right)$, by \axmref{non-overlapping}
again, $af+\left(1-a\right)g=\beta f+\left(1-\beta\right)\left(kf+\left(1-k\right)g\right)\succ\beta f+\left(1-\beta\right)g$. 
\end{proof}
\begin{claim}
\label{claim:betweeness}For any $f\in\mathcal{F}$, there exist a
degenerate prize $\delta_{x}\in\mathcal{F}$ and a simple act $g\in\mathcal{F_{S}}$
such that $g\succsim f\succsim\delta_{x}$.
\end{claim}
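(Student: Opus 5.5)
Two bounds are needed: a degenerate act $\delta_x$ below $f$, and a simple act $g$ above $f$. Both come from repeatedly applying \nameref{axm:prize_independence}, together with transitivity from \nameref{axm:standard}.

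\emph{Lower bound.} Let $\text{supp}(f)=\{x_1,\dots,x_M\}$ and $x=\min_i x_i$. Monotonicity gives $\delta_{x_i}\succsim\delta_x$ for every $i$. By the first bullet of \nameref{axm:prize_independence} (downgrades, which carry no support precondition), replacing $x_i$ by $\delta_x$ gives $f^{i-1}\succsim f^{i}:=f^{i-1}x_i\delta_x$, starting from $f^0=f$. After the $M$ replacements every state yields $x$ with probability one, so $f^M=\delta_x$. Transitivity then gives $f\succsim\delta_x$.

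\emph{Upper bound.} Set $y=\max_i x_i$ and choose $N$ distinct prizes $z_1>z_2>\dots>z_N>y$. These lie outside $\text{supp}(f)$ and exist because $X=\mathbb{R}$. The target is the simple act $g$ with $g(\omega_i)=\delta_{z_i}$, and I will reach it from $f$ by a chain of upgrades.

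First, for each $j=1,\dots,M$, replace $x_j$ by $\delta_{z_1}$. At step $j$ the current act has support in $\{x_j,\dots,x_M\}\cup\{z_1\}$ (once $z_1$ has been introduced), so $\delta_{z_1}$ overlaps the support and the second bullet does not apply. To get around this, at step $j$ I will use a fresh prize $w_j>y$ that lies outside the current support. Alternatively, I can first upgrade $f$ to an act whose prizes are all distinct and above $y$, namely by replacing each $x_j$ with a fresh $w_j\succ x_j$. Each such replacement is non-overlapping, so the second bullet gives $fx_j\delta_{w_j}\succsim f$.

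This produces $\tilde f$ with the same likelihood vectors $f_{x_j}$ but relabelled prizes $w_j$. I then need to pass from $\tilde f$ to the simple act $g$. For this I use state-wise replacements: in state $\omega_i$, each $w_j$ should become $z_i$. Here lies the main obstacle. \nameref{axm:prize_independence} replaces a prize in all states at once, and merging prizes onto a common $z_i$ creates overlap. The fix I plan is to choose the $w_j$ \emph{below} the $z$'s and to work in the opposite direction. Starting from $g$, replace $z_i$ by the lottery $\tilde f(\omega_i)$ over the fresh prizes $w^{i}_j$, indexed by both state and prize so that all supports are disjoint. Since $z_i>w^i_j$, this is a downgrade, and the first bullet gives $g\succsim g'$.

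The resulting $g'$ is fully informative and dominates $\tilde f$ once the prize labels are matched. To compare $g'$ with an act reached from $f$ by upgrades, I will choose $w^i_j$ all exceeding $y$, pairwise distinct, and outside $\text{supp}(f)$. Then I pass from $f$ to $g'$ by splitting each $x_j$ state-dependently into $w^i_j$. Each split is non-overlapping because all the $w$'s are fresh. Each split is also an upgrade in the relevant state, since $w^i_j>x_j$.

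To realise a state-dependent replacement with the all-states operator $fxp$, I will split $x_j$ in a single step into the lottery $p$ that is uniform over $\{w^1_j,\dots,w^N_j\}$. This is an upgrade by Monotonicity and the first bullet applied to lotteries. However, its likelihood pattern differs from that of $g'$. So instead of matching $g'$ exactly, I will directly define the simple act from the final informative-like act, using the downgrade bullet to collapse everything onto $\delta_{z_i}$.

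I expect the delicate point to be exactly this bookkeeping: keeping every upgrade non-overlapping and every merge a downgrade. The structure is nonetheless clear. The upgrades use only fresh prizes, and the final comparison with a simple act $g$ whose prizes $z_i$ are large is made via downgrades from $g$. Transitivity then yields $g\succsim f$.
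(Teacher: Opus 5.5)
\textbf{Lower bound.} This is the paper's argument and is fine.

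\textbf{Upper bound.} Here there is a genuine gap. The proposal never pins down a single act that is reached from $f$ by licensed upgrades and also from $g$ by licensed downgrades.
\begin{itemize}
\item From $g$ you reach the fully informative $g'$ with state-indexed prizes $w^i_j$.
\item The state-dependent split that would take $f$ to $g'$ is not available. As you note, $fxp$ inserts the same $p$ in every state. State-Prize Independence only asserts an indifference between two state-specific replacements by indifferent lotteries; it never ranks a modified act against the original.
\item The uniform split produces an act $F$ with $F_{w^i_j}=f_{x_j}/N$. This has exactly the likelihood structure of $f$; it is neither $g'$ nor informative.
\item ``Collapsing'' the prizes of $F$ onto $\delta_{z_i}$ would be a state-dependent, overlapping upgrade, which the second bullet forbids. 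The first bullet only runs downward from $g$.
\end{itemize}
Your closing sentence describes the right shape, but none of the concrete steps realizes it.

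\textbf{The missing observation.} The first bullet has no support precondition. Moreover, in $g$ (with $g(\omega_i)=\delta_{z_i}$, the $z_i$ distinct and above $\max\text{supp}(f)$) each $z_i$ occurs only in state $\omega_i$. So the all-states operator $g z_i p$ (the paper's $fxp$) in fact replaces $z_i$ in $\omega_i$ alone. Hence
\[
g\succsim g z_1 f(\omega_1)\succsim \bigl(g z_1 f(\omega_1)\bigr) z_2 f(\omega_2)\succsim\cdots\succsim f ,
\]
and the final act is exactly $f$. Overlaps among the lotteries $f(\omega_i)$ are harmless. Your own step ``starting from $g$, replace $z_i$ by $\tilde f(\omega_i)$'' already finishes the proof once you drop the state indexing of the $w$'s, and then even the relabelling $f\to\tilde f$ is unnecessary.

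The general point is that state-specific changes are available only when the replaced prize lives in a single state, that is, when downgrading from an act at the top. This is why the paper proves $g\succsim h\succsim f$ entirely by downgrades, passing through an informative act $h$, rather than upgrading from $f$.

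\textbf{A smaller gap.} Each step needs $\delta_{z_i}\succsim f(\omega_i)$, which you assert without proof (``since $z_i>w^i_j$, this is a downgrade''). Monotonicity only compares degenerate acts, and applying the first bullet to $\delta_{z_i}$ itself is circular. Instead, invoke Skewness Independence on constant acts, as the paper does. For a constant $h$, each $h_x$ with $x\in\text{supp}(h)$ is a positive multiple of the all-ones vector, so the proportionality hypothesis holds for any constant $f,g$. Then mix in $\delta_{z_i}\succ\delta_y$ one support point $y$ at a time. By contrast, your claim that the uniform lottery over the $w^i_j$ is weakly above $\delta_{x_j}$ does follow from the first bullet alone.
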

\begin{proof}
Fix $f\in\mathcal{F}$. We first show there exists a degenerate prize
$\delta_{x}\in\mathcal{F}$ such that $f\succsim\delta_{x}$. Let
$x$ be $\min_{\succeq}\text{supp}\left(f\right)$. We iteratively
replacing other prizes in the support of $f$ with prize $x$ using
the following process. Pick $y_{1}\in\text{supp}\left(f\right)\setminus\left\{ x\right\} $.
Let $f^{1}=fy_{1}x$. Therefore, by prize independence, since $y\succsim x$,
we have $f\succsim f^{1}$. Pick $y_{2}\in\text{supp}\left(f^{1}\right)\setminus\left\{ x\right\} $
and let $f^{2}=f^{1}y_{2}x$. Since $y_{2}\succsim x$, then by prize
independence we have $f^{1}\succsim f^{2}$. By iteratively generating
$f^{i}$, we finally have $f^{N}=\delta_{x}$, where $N=|\text{supp\ensuremath{\left(f\right)}}|-1$.
Then $f\succsim f^{1}\succsim f^{2}\succsim...\succsim f^{N}$. Therefore
$f\succsim\delta_{x}$.

We next show there exist a simple act $g\in\mathcal{F_{S}}$ such
that $g\succsim f$. Let $\mathcal{F}\left(f\right)=\left\{ h\in\mathcal{F}:\forall\omega\in\Omega,\exists x\in\text{supp}\left(f\left(\omega\right)\right)s.t.\ h\left(x|\omega\right)=1\right\} $.
Then there exists a convex combination over $\mathcal{F}\left(f\right)$
equivalent to $f$. Denote this convex combination by $\sum_{f^{i}\in\mathcal{F}\left(f\right)}a_{i}f^{i}$,
where $a_{i}\geq0,\sum a_{i}=1$. For every $i\in|\mathcal{F}\left(f\right)|$,
pick a simple act$f_{s}^{i}$ such that $f_{s}^{i}\cap f_{s}^{j}=\emptyset$
for any $j\neq i$ and for any $x\in\text{supp}\left(f_{s}^{i}\right)$,
$\delta_{x}\succsim\delta_{y}$ for all $y\in\text{supp}\left(f\right)$.
We can do the combination of simple acts because the prize space is
a real line. Denote $\sum_{i\in|\mathcal{F}\left(f\right)|}a_{i}f_{s}^{i}$
as $h$. Notice that $h$ is an informative act. We can obtain $\sum_{f^{i}\in\mathcal{F}\left(f\right)}a_{i}f^{i}$
from $h$ by the following process. For every $i\in|\mathcal{F}\left(f\right)|$,
replace $f_{s}^{i}\left(\omega\right)$ with $f^{i}\left(\omega\right)$
for any $\omega\in\Omega$. And since $f_{s}^{i}\left(\omega\right)\succsim f^{i}\left(\omega\right)$
for any $\omega\in\Omega$, by prize independence we have $h\succsim\sum_{f^{i}\in\mathcal{F}\left(f\right)}a_{i}f^{i}$,
i.e., $h\succsim f$. Then, there exists a simple act $g\in\mathcal{F}_{S}$
such that for any $x\in\text{supp}\text{\ensuremath{\left(g\right)}}$,
$x\succsim y$ for any $y\in\text{supp}\left(h\right)$. By \axmref{non-overlapping},
we have $g\left(\omega\right)\succsim h\left(\omega\right)$ for any
$\omega\in\Omega$. Then by iteratively replacing $g\left(\omega_{i}\right)$
with $h\left(\omega_{i}\right)$ for every $i\in|\Omega|$, we can
obtain $h$ from $g$. Therefore, by \axmref{prize_independence},
since $g\left(\omega\right)\succsim h\left(\omega\right)$ for any
$\omega\in\Omega$, we have $g\succsim h$. Therefore, $g\succsim h\succsim f$.
\end{proof}
\begin{lem}
\label{lem:U-and-affine-on-prop}There exists a utility function $U:\mathcal{F}\rightarrow\mathbb{R}$
such that \textup{the following hold:}
\begin{enumerate}
\item \textup{For any $f,g\in\mathcal{F}$,} $f\succsim g$\textup{ if and
only if} $U\left(f\right)\geq U\left(g\right)$.
\item \textup{For any $f,g\in\mathcal{F}$} \textup{such that $f_{x}\propto g_{x}$}
\textup{for any }$\ensuremath{x\in}\text{supp}\ensuremath{\left(f\right)}$,
$U\left(af+\left(1-a\right)g\right)=aU\left(f\right)+\left(1-a\right)U\left(g\right)$\textup{
for all $a\in\left[0,1\right]$.}
\item If $\text{supp}\ensuremath{\left(f\right)}\cap\text{supp}\ensuremath{\left(g\right)=\emptyset}$,
then $U\left(af+\left(1-a\right)g\right)=aU\left(f\right)+\left(1-a\right)U\left(g\right)$\textup{
for all $a\in\left[0,1\right]$.}
\end{enumerate}
\end{lem}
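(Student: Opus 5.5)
We construct $U$ by calibrating every act against a segment of a one-parameter "proportional family" and then read off the two affinity properties from \axmref{non-overlapping} together with \lemref{Monotonicity}.

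\textbf{Construction of $U$.} Fix a degenerate act $\delta_{x}$ and a simple act $g\in\mathcal{F}_{S}$ with $\text{supp}(g)\cap\{x\}=\emptyset$. Consider the family
\[
h_{a}:=ag+(1-a)\delta_{x}.
\]
Since $\delta_{x}$ and $g$ have disjoint supports, the proportionality hypothesis of \lemref{Monotonicity} holds for the pair trivially: the $y$-vectors of $\delta_{x}$ vanish for $y\in\text{supp}(g)$, and $0\propto$ anything. Hence, whenever $g\succ\delta_{x}$, the map $a\mapsto h_{a}$ is strictly increasing in $\succsim$.

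By \claimref{betweeness}, every $f$ is bracketed as $g_{f}\succsim f\succsim\delta_{x_{f}}$. By continuity (\axmref{standard}) and monotonicity along $h_{a}$, there is a unique $a_{f}$ with $f\sim h_{a_{f}}$.

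To get a single global scale, take an increasing sequence of nested brackets $(\delta_{x_{n}},g_{n})$ with $x_{n}\downarrow-\infty$ and the prizes of $g_{n}$ tending to $+\infty$. Any two brackets overlap on a common interval, so the scales can be glued by positive affine renormalization, exactly as in the standard vNM extension argument. Defining $U(f)$ as the glued value of $a_{f}$ gives property 1.

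\textbf{Property 2.} Suppose $f_{y}\propto g_{y}$ for all $y\in\text{supp}(f)$. Let $U(f)=s$ and $U(g)=t$, so $f\sim h_{s}$ and $g\sim h_{t}$ within a common bracket.

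We apply \axmref{non-overlapping} twice: first mixing $f\sim h_{s}$ with $g$ (admissible because $g$'s prize vectors are proportional to $f$'s), then mixing $g\sim h_{t}$ with $h_{s}$. This yields
\[
af+(1-a)g\sim ah_{s}+(1-a)g\sim ah_{s}+(1-a)h_{t}=h_{as+(1-a)t}.
\]
The last equality holds because mixtures within the $h$-family stay in the family.

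The delicate point is the second substitution. It requires the mixing act $h_{s}$ to have prize vectors proportional to those of $g$ and $h_{t}$ on $\text{supp}(h_{s})$, and this fails in general. We therefore route the argument differently. First replace $g$ by an equivalent act $\tilde{g}$ in the proportional family generated by $f$ and $g$, using \lemref{Monotonicity} and continuity to find $\tilde{g}=bf+(1-b)g$. Then check that the needed proportionality conditions hold for every intermediate act, since all of them are mixtures of $f$ and $g$.

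Concretely, $U$ restricted to the segment $\{bf+(1-b)g\}$ is strictly monotone in $b$ (\lemref{Monotonicity}). Moreover, \axmref{non-overlapping}, applied with $h$ taken from the segment itself, gives the midpoint-equivalence property: if $c\sim c'$ on the segment, then mixing both with a third point of the segment preserves indifference. Continuity plus strict monotonicity plus this midpoint property yields affinity of $U$ in $b$.

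\textbf{Property 3.} If $\text{supp}(f)\cap\text{supp}(g)=\emptyset$, then for every $y\in\text{supp}(f)$ we have $g_{y}=0$, and $0\propto f_{y}$ with $k=0$. So the hypothesis of property 2 holds and property 3 is a special case.

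\textbf{Main obstacle.} The main obstacle is the gluing across brackets, together with making the proportionality bookkeeping in property 2 airtight. I would handle the latter by working entirely inside the two-act segment, where every application of \axmref{non-overlapping} is justified.
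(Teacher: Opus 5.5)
Your construction of $U$ (calibrate every act against the family $h_a$ of mixtures of a simple act and a degenerate act, bracket via \claimref{betweeness}, glue over brackets) is essentially the paper's. However, your argument for property 2 has a genuine gap. The fallback route ``inside the segment'' cannot work. If $f\succ g$, \lemref{Monotonicity} makes $b\mapsto bf+(1-b)g$ strictly increasing. So the only $\tilde g=bf+(1-b)g$ with $\tilde g\sim g$ is $\tilde g=g$, and your midpoint-equivalence property is vacuous, because indifference on the segment means equality. More fundamentally, affinity of $U$ along the segment is not an ordinal property of the segment. For any strictly increasing nonlinear $\phi$, $\phi\circ U$ still represents $\succsim$ and is continuous and strictly monotone in $b$ on every such segment, yet it is not affine. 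Property 2 asserts a link between the segment and the calibration family, namely $af+(1-a)g\sim h_{as+(1-a)t}$. No argument confined to $\{bf+(1-b)g\}$ can produce that.

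The missing idea is that your first chain was the right one; up to the order of substitutions, it is the chain the paper uses. The obstruction you noticed disappears once the calibrating bracket has prizes outside $\text{supp}(f)\cup\text{supp}(g)$. This is always possible, since supports are finite and $X=\mathbb{R}$.
\begin{itemize}
\item In $af+(1-a)g\sim ah_s+(1-a)g$ (common component $g$), Skewness Independence needs $g_x\propto f_x$, which follows from $f\propto g$. It also needs $g_x\propto(h_s)_x$ on $\text{supp}(g)$, a condition you did not check; it holds with $k=0$ because $(h_s)_x=\mathbf{0}$ there.
\item In $ah_s+(1-a)g\sim ah_s+(1-a)h_t$ (common component $h_s$), it needs $(h_s)_x\propto g_x$, again with $k=0$. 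It also needs $(h_s)_x\propto(h_t)_x$, which is automatic because both mix the same two endpoints.
\end{itemize}
The paper builds this compatibility into its calibration domains: $\mathcal{F}_{b,0}$ only admits $f$ with $f\propto\delta_b$ and $f\propto\delta_0$, and $\mathcal{F}_{f_i,g_i}$ only admits $h$ with $f_i\propto h$ and $g_i\propto h$. With $\ell_f:=U_0(f)\delta_b+(1-U_0(f))\delta_0$, this licenses the chain $af+(1-a)g\sim af+(1-a)\ell_g\sim a\ell_f+(1-a)\ell_g$.

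The same issue affects your gluing step. To show that two brackets' scales differ by a positive affine map, you need property 2 inside one bracket for the other bracket's endpoints. So the brackets must use pairwise disjoint prize sets, which nesting alone does not guarantee. The paper instead normalizes every bracket to agree with the vNM utility $\bar U$ on degenerate acts. It then pins down $U_i(h)=U_j(h)$ by comparing $h$ with two degenerate prizes $x,y\notin\text{supp}(h)$.

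A minor notational point: under the paper's convention ($h_x\propto f_x$ iff $kh_x=f_x$ for some $k\ge0$), every vector is $\propto\mathbf{0}$, not the reverse. Your checks for \lemref{Monotonicity} and property 3 are correct once read this way.
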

\begin{proof}
Notice that \axmref{non-overlapping} includes the standard independence
axiom on constant acts, that is, $p\succsim q\Leftrightarrow ap+\left(1-a\right)r\succsim aq+\left(1-a\right)r$
for all constant acts $p,q,r\in\mathcal{F}_{P}$ and $a\in\left[0,1\right]$.
Fix constant, degenerate acts $\delta_{b},\delta_{0}\in\mathcal{F}$
such that $\delta_{b}\succ\delta_{0}$, there exists a utility function
$\bar{U}:\Delta\left(X\right)\rightarrow\mathbb{R}$ such that $\bar{U}\left(\delta_{0}\right)=0$,
$\bar{U}\left(\delta_{b}\right)=1$, and for any $p,q\in\mathcal{F}_{P}$,
$\bar{U}\left(ap+\left(1-a\right)q\right)=a\bar{U}\left(p\right)+\left(1-a\right)\bar{U}\left(q\right)$.

Define $f\propto g$ if $\ensuremath{f_{x}\propto g_{x}}$ for any
$\ensuremath{x\in}\text{supp}\ensuremath{\left(f\right)}$. Let
\[
\mathcal{F}_{b,0}:=\left\{ f:\delta_{b}\succsim f\succsim\delta_{0},f\propto\delta_{b},f\propto\delta_{0}\right\} .
\]
For every $f\in\mathcal{F}_{b,0}$, Continuity in \axmref{standard}
and \lemref{Monotonicity} together guarantee a unique number $\lambda_{f}\in\left[0,1\right]$
such that $\lambda_{f}\delta_{b}+\left(1-\lambda_{f}\right)\delta_{0}\sim f$;
set $U_{0}\left(f\right)=\lambda_{f}$. Note that for all $f,g\in\mathcal{F}_{b,0}$,
Weak Order in \axmref{standard} gives $f\succsim g\Leftrightarrow U_{0}\left(f\right)\geq U_{0}\left(g\right)$.
Note also that $U_{0}\left(\delta_{0}\right)=\bar{U}\left(\delta_{0}\right)=0$
and $U_{0}\left(\delta_{b}\right)=\bar{U}\left(\delta_{b}\right)=1$,
so $U_{0}\left(p\right)=\bar{U}\left(p\right)$ for any $p\in\mathcal{F}_{P}\cap\mathcal{F}_{b,0}$.
We have constructed $U_{0}:\mathcal{F}_{b,0}\rightarrow\mathbb{R}$.

Next, consider any $f,g\in\mathcal{F}_{b,0}$ such that $f\propto g$.
We show that for any $a\in\left[0,1\right]$, $U_{0}\left(af+\left(1-a\right)g\right)=aU_{0}\left(f\right)+\left(1-a\right)U_{0}\left(g\right)$.

Since $f,g\in\mathcal{F}_{b,0}$ implies $f,g\succsim\delta_{0}$,
\axmref{non-overlapping} gives $af+\left(1-a\right)g\succsim a\delta_{0}+\left(1-a\right)g$
and $a\delta_{0}+\left(1-a\right)g\succsim a\delta_{0}+\left(1-a\right)\delta_{0}$,
then Weak Order in \axmref{standard} gives $af+\left(1-a\right)g\succsim\delta_{0}$.
Similarly, $\delta_{b}\succsim f,g$ leads to $\delta_{b}\succsim af+\left(1-a\right)g$.
So $af+\left(1-a\right)g\in\mathcal{F}_{b,0}$.

Fix any $f,g\in\mathcal{F}_{b,0}$. By $U_{0}$'s definition, $U_{0}\left(f\right)\delta_{b}+\left(1-U_{0}\left(f\right)\right)\delta_{0}\sim f$
and $U_{0}\left(g\right)\delta_{b}+\left(1-U_{0}\left(g\right)\right)\delta_{0}\sim g$.
Since we have $f_{z}\propto\left(\delta_{0}\right)_{z}$, $f_{z}\propto p_{z}$,
and $f_{z}\propto\left(\delta_{0}\right)_{z}$ for all $z\in\text{supp}\left(f\right)$,
multiple applications of \axmref{non-overlapping} give 
\begin{align*}
 & af+\left(1-a\right)g\\
 & \sim af+\left(1-a\right)\left(U_{0}\left(g\right)\delta_{b}+\left(1-U_{0}\left(g\right)\right)\delta_{0}\right)\\
 & \sim a\left[U_{0}\left(f\right)\delta_{b}+\left(1-U_{0}\left(f\right)\right)\delta_{0}\right]+\left(1-a\right)\left[U_{0}\left(g\right)\delta_{b}+\left(1-U_{0}\left(g\right)\right)\delta_{0}\right]
\end{align*}
for all $a\in\left[0,1\right]$. Note that the last line is constant
act, on which the affine property of $U_{0}$ has already been established
(when $\bar{U}$ was defined), so
\begin{align*}
 & U_{0}\left(af+\left(1-a\right)g\right)\\
 & =U_{0}\left(a\left[U_{0}\left(f\right)\delta_{b}+\left(1-U_{0}\left(f\right)\right)\delta_{0}\right]+\left(1-a\right)\left[U_{0}\left(g\right)\delta_{b}+\left(1-U_{0}\left(g\right)\right)\delta_{0}\right]\right)\\
 & =aU_{0}\left(U_{0}\left(f\right)\delta_{b}+\left(1-U_{0}\left(f\right)\right)\delta_{0}\right)+\left(1-a\right)U_{0}\left(U_{0}\left(g\right)\delta_{b}+\left(1-U_{0}\left(g\right)\right)\delta_{0}\right)\\
 & =aU_{0}\left(f\right)+\left(1-a\right)U_{0}\left(g\right)
\end{align*}
for all $a\in\left[0,1\right]$.

We have established the two properties in \lemref{U-and-affine-on-prop}
on $U_{0}:\mathcal{F}_{b,0}\rightarrow\mathbb{R}$. Next we extend
these properties to all of $\mathcal{F}$. Let
\[
\mathcal{F}_{f_{i},g_{i}}\coloneqq\left\{ h\in\mathcal{F}:f_{i}\succ h\succ g_{i},f_{i}\propto h,g_{i}\propto h\right\} .
\]
For every $i\in\mathbb{\mathbb{N}}_{+}$, pick $f_{i}\in\mathcal{F}_{S}$
and $g_{i}\in\mathcal{F}_{X}$ such that the following conditions
hold: (1) $\left\{ \delta_{b},\delta_{0}\right\} \cap\left(\text{supp}\left(f_{i}\right)\cup\text{supp}\left(g_{i}\right)\right)=\emptyset$
for all $i\in\mathbb{\mathbb{N}}_{+}$, (2) $f_{i}\succ f_{i-1}\succ g_{i-1}\succ g_{i}$
for all $i\in\mathbb{\mathbb{N}}_{+}$. By \claimref{betweeness},
for any $h\in\mathcal{F}$, there exists $k\in\mathbb{\mathbb{N}}$
such that $h\in\mathcal{F}_{f_{k},g_{k}}$.

Fix any $i\in\mathbb{N}_{+}$. For every $h\in\mathcal{F}_{f_{i},g_{i}}$,
Continuity in \axmref{standard} and \lemref{Monotonicity} together
guarantee a unique number $\lambda_{h}\in\left[0,1\right]$ such that
$\lambda_{f}f_{i}+\left(1-\lambda_{f}\right)g_{i}\sim h$; set $U_{i}^{*}\left(h\right)=\lambda_{h}$.
Note that using the same line of arguments as before, $U_{i}^{*}\left(f\right)\geq U_{i}^{*}\left(g\right)\Leftrightarrow f\succsim g$
for all $f,g\in\mathcal{F}_{f_{i},g_{i}}$, and $U_{i}\left(af+\left(1-a\right)g\right)=aU_{i}\left(f\right)+\left(1-a\right)U_{i}\left(g\right)$
for all $f,g\in\mathcal{F}_{f_{i},g_{i}}$ and $a\in\left[0,1\right]$.
Let $U_{i}$ be a positive affine transformation of $U_{i}^{*}$ so
that $U_{i}\left(\delta_{b}\right)=1$ and $U_{i}\left(\delta_{0}\right)=0$.
For later, we observe that for any degenerate prize $\delta_{x}\in\mathcal{F}_{f_{i},g_{i}}$,
we have $U_{i}\left(\delta_{x}\right)=\bar{U}\left(\delta_{x}\right)$
for every $i\in\mathbb{N}_{+}$. 

Fix any $h\in\mathcal{F}_{f_{i},g_{i}}\cap\mathcal{F}_{f_{j},g_{j}}$.
Fix $x,y\in X$ such that $\delta_{x}\succ\delta_{y}$, $\left\{ \delta_{x},\delta_{y}\right\} \subset\left(\mathcal{F}_{f_{i},g_{i}}\cap\mathcal{F}_{f_{j},g_{j}}\right)$,
and $\text{supp}\left(h\right)\cap\left\{ x,y\right\} =\emptyset$;
their existence is guaranteed because $\mathcal{F}_{\delta_{b},\delta_{0}}\subset\mathcal{F}_{f_{i},g_{i}}$
and $\mathcal{F}_{\delta_{b},\delta_{0}}$ contains uncountably many
elements of $X$. By \lemref{Monotonicity} and Continuity in \axmref{standard},
there are three possible cases:
\begin{align*}
\delta_{x} & \succsim h\succsim\delta_{y},h\sim a\delta_{x}+\left(1-a\right)\delta_{y},\\
\delta_{x} & \succ\delta_{y}\succsim h,\delta_{y}\sim a\delta_{x}+\left(1-a\right)h,\\
h & \succsim\delta_{x}\succ\delta_{y},\delta_{x}\sim ah+\left(1-a\right)\delta_{y},
\end{align*}
where $a\in\left[0,1\right]$ is unique. Since we have established
that for all $k\in\mathbb{N}$, $U_{k}$ represent preference on $\mathcal{F}_{f_{k},g_{k}}$
and $U_{k}$ has affine structure for any $f,g\in\mathcal{F}_{f_{k},g_{k}}$
such that $f\propto g$, then, for each the above possibilities, we
have for $k\in\left\{ i,j\right\} $,
\begin{align*}
U_{k}\left(f\right) & =aU_{k}\left(\delta_{x}\right)+\left(1-a\right)U_{k}\left(\delta_{y}\right),\\
U_{k}\left(\delta_{y}\right) & =aU_{k}\left(\delta_{x}\right)+\left(1-a\right)U_{k}\left(h\right),\\
U_{k}\left(\delta_{x}\right) & =aU_{k}\left(h\right)+\left(1-a\right)U_{k}\left(\delta_{y}\right),
\end{align*}
respectively. Since $U_{k}\left(\delta_{x}\right)=U_{0}\left(\delta_{x}\right)$
and $U_{k}\left(\delta_{y}\right)=U_{0}\left(\delta_{y}\right)$ for
all $k\in\mathbb{N}$, so $U_{i}\left(h\right)=U_{j}\left(h\right)$
regardless of which of the three possible cases is true.

Finally, let $U:\mathcal{F}\rightarrow\mathbb{R}$ be the common value
of $U_{i}$. Assured by the above arguments, $U$ is well-defined.
Fix any $f,g\in\mathcal{F}$, note that there exists $\mathcal{F}_{f_{i},g_{i}}$
such that $f,g\in\mathcal{F}_{f_{i},g_{i}}$. Note that 
\[
U\left(f\right)=U_{i}\left(f\right)\geq U_{i}\left(g\right)=U\left(g\right)\Leftrightarrow f\succsim g,
\]
so have shown property 1 in \lemref{U-and-affine-on-prop}. Moreover,
if $f_{x}\propto g_{x}$ for all $x\in\text{supp}\left(f\right)$,
then 
\[
U\left(af+\left(1-a\right)g\right)=U_{i}\left(af+\left(1-a\right)g\right)=aU_{i}\left(f\right)+\left(1-a\right)U_{i}\left(g\right)=aU\left(f\right)+\left(1-a\right)U\left(g\right)
\]
for all $a\in\left[0,1\right]$, so we have shown property 2 in \lemref{U-and-affine-on-prop}.
Property 3 in \lemref{U-and-affine-on-prop} is simply because $\text{supp}\ensuremath{\left(f\right)}\cap\text{supp}\ensuremath{\left(g\right)=\emptyset}$
implies $f_{x}\propto g_{x}$ for any $\ensuremath{x\in}\text{supp}\ensuremath{\left(f\right)}$.
\end{proof}
\begin{claim}
\label{claim:Splitting indifference}For any $f\in\mathcal{F}$ and
$f_{x}^{*},f_{x}^{*}{}'\in F^{*}\left(f,x\right)$, $U\left(f_{x}^{*}\right)=U\left(f_{x}^{*}{}'\right)$.
\end{claim}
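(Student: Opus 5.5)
We need to show that any two splittings of prize $x$ in $f$ are indifferent. A splitting $f_x^*$ replaces $x$ in state $\omega_i$ by a lottery $p_i\sim\delta_x$, with $\text{supp}(p_i)$ disjoint from $\text{supp}(f)$ and from the other $p_j$'s. The other splitting $f_x^{*\prime}$ uses lotteries $q_i$ with the same properties. The plan is to pass from $\{p_i\}$ to $\{q_i\}$ one state at a time, using \axmref{state_prize_independence}. If needed, we route through an intermediate family $\{r_i\}$ whose supports are disjoint from everything in sight, which is possible since $X=\mathbb{R}$ and all supports are finite.

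\textbf{Step 1: choose a fresh family.} Choose lotteries $r_1,\dots,r_N$ with $r_i\sim\delta_x$ whose supports are pairwise disjoint and disjoint from $\text{supp}(f)$, from $\cup_i\text{supp}(p_i)$, and from $\cup_i\text{supp}(q_i)$. Such lotteries exist, for instance degenerate ones with the same value, or two-point lotteries obtained by continuity and monotonicity around $x$. Let $f_x^{r}$ denote the splitting built from $\{r_i\}$. It suffices to show $f_x^*\sim f_x^{r}$, since the same argument gives $f_x^{*\prime}\sim f_x^{r}$, and transitivity together with property 1 of \lemref{U-and-affine-on-prop} then yields $U(f_x^*)=U(f_x^{*\prime})$.

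\textbf{Step 2: one state at a time.} Define $h^0=f_x^*$ and let $h^k$ be the act that uses $r_i$ in states $i\le k$ and $p_i$ in states $i>k$, so that $h^N=f_x^r$. For the step $h^{k-1}\to h^k$ we modify only state $\omega_k$, replacing the block $f(x\mid\omega_k)p_k$ by $f(x\mid\omega_k)r_k$.

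\textbf{Step 3: each step preserves indifference.} To express this step in the form of \axmref{state_prize_independence}, go back to the intermediate act $g^{k}$ that keeps the unsplit prize $x$ in state $\omega_k$ and the current replacements in the other states. Then $h^{k-1}=g^k_{\omega_k}xp_k$ and $h^k=g^k_{\omega_k}xr_k$.

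Both replacements are non-overlapping. The support of $p_k$ is disjoint from $\text{supp}(g^k)$ because the families are pairwise disjoint and disjoint from $\text{supp}(f)$. The same holds for $r_k$, since it is fresh. In addition, $\delta_x\sim p_k$ and $\delta_x\sim r_k$.

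We then invoke \axmref{state_prize_independence} with $x=y$ (so trivially $g^k_x=g^k_y$), which gives $h^{k-1}\sim h^k$. Chaining over $k$ gives $f_x^*\sim f_x^r$.

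\textbf{Main obstacle.} The delicate point is bookkeeping. We must verify that $x\in\text{supp}(g^k)$ only through state $\omega_k$ where relevant, and that the replacement operator $f_\omega xp$ acts exactly as the block substitution. Note that if $p_k(x)>0$ would be allowed this fails, but disjointness from $\text{supp}(f)\ni x$ rules it out. A further degenerate case is $f(x\mid\omega_k)=0$, where the step is trivial.

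Finally, the claim does not require \axmref{state_prize_independence} to be listed among the hypotheses of \thmref{IC-1}. If it is meant to be avoided, the fallback is \axmref{prize_independence} applied in both directions. Its second bullet gives $g^k_{\omega_k}xp_k\succsim g^k$ via a statewise version, and its first bullet gives the reverse, so both equal $g^k$ in value. That route would require a statewise analogue of prize independence, which I would try first to derive from the full-prize version by considering acts in which $x$ appears only in state $\omega_k$ (relabelling $x$ elsewhere with a fresh indifferent prize, which is itself a splitting step).
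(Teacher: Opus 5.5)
Your proof is correct. Its skeleton matches what the paper's one-line proof (``multiple applications of Prize Independence'') implicitly relies on: a fresh family $\{r_i\}$, the chain $h^0,\dots,h^N$ that changes one state at a time, and the intermediate act $g^k$ with $x$ restored only in $\omega_k$.

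The difference is the axiom you invoke at each step. You use State-Prize Independence with $x=y$. The paper uses Prize Independence, which is your fallback. That fallback is immediate and does not need a statewise analogue. The bookkeeping fact you flag is the reason: $x$ enters $\text{supp}(g^k)$ only through $\omega_k$, since every other state already carries $p_i$ or $r_i$ in its place. Hence $g^k_{\omega_k}xp_k=g^kxp_k$ and $g^k_{\omega_k}xr_k=g^kxr_k$ as acts. The two bullets of Prize Independence then apply, using $p_k\sim\delta_x$ and $\text{supp}(p_k)\cap\text{supp}(g^k)=\emptyset$ (and likewise for $r_k$). They give $h^{k-1}\sim g^k\sim h^k$ directly, with no further relabelling.

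The Prize Independence route is preferable for two reasons. First, Theorem~\ref{thm:IC-1} does not list State-Prize Independence among its hypotheses, even though the paper later uses that axiom to show $\bar v$ is well-defined. Second, it covers the paper's convention that $f\in F^*(f,x)$ when $x$ occurs in a single state. There the claim requires $f\sim fxp_i$, which Prize Independence gives in one step. State-Prize Independence can never deliver this, because it only compares two non-overlapping replacements with each other, never with the unmodified act. Your route's only gain is that it swaps $p_k$ for $r_k$ without comparing either act to $g^k$, and it would still work if $x$ remained in other states. Neither feature is needed here.

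One small slip in Step 1: degenerate lotteries cannot form the fresh family. Under strict Monotonicity, $\delta_{x'}\sim\delta_x$ forces $x'=x$, and disjointness from $\text{supp}(f)$ excludes that. Only the two-point lotteries obtained from Continuity and Monotonicity work.
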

\begin{proof}
This is straightforward by multiple applications of \axmref{prize_independence}.
\end{proof}
\begin{claim}
\label{claim:I-splitting-indifference}For any $f\in\mathcal{F}$
and $f_{I}^{*},f_{I}^{*}{}'\in I^{*}\left(f\right)$, $U\left(f_{I}^{*}\right)=U\left(f_{I}^{*}{}'\right)$.
\end{claim}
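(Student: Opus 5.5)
The plan is to reduce the comparison of two arbitrary full splittings to a chain of one-prize comparisons, each handled by \claimref{Splitting indifference}.

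\emph{Step 1: bookkeeping.} Fix $f$ and write $\text{supp}(f)=\{x_{1},\dots,x_{M}\}$. Any $f_{I}^{*}\in I^{*}(f)$ is produced by an ordering of the prizes together with lotteries $p_{i}^{x}\sim x$, one for each prize $x$ and state $\omega_{i}$. At each stage the new lotteries are disjoint from the support of the current act, and the constraint $\text{supp}(f^{k})\cap\{x_{1},\dots,x_{k-1}\}=\emptyset$ holds. It follows that all the lotteries $p_{i}^{x}$ have pairwise disjoint supports, and these supports are disjoint from $\text{supp}(f)$. Hence the final act is given explicitly by
\[
f_{I}^{*}(z\mid\omega_{i})=f(x\mid\omega_{i})\,p_{i}^{x}(z)\quad\text{for }z\in\text{supp}(p_{i}^{x}).
\]
In particular, $f_{I}^{*}$ depends only on the family $\{p_{i}^{x}\}$ and not on the order in which prizes were split. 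Splits of distinct prizes touch disjoint parts of the state-wise distributions, so they commute.

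\emph{Step 2: swap to fresh lotteries, one prize at a time.} Let $f_{I}^{*}$ come from $\{p_{i}^{x}\}$ and $f_{I}^{*}{}'$ from $\{p_{i}^{x}{}'\}$. Because $X=\mathbb{R}$ and all supports are finite, choose a third family $\{q_{i}^{x}\}$ with $q_{i}^{x}\sim x$, pairwise disjoint supports, and supports disjoint from $\text{supp}(f)$ and from every $p$ and $p'$ lottery.

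Now replace, for one prize $x_{k}$ at a time, the lotteries $p_{i}^{x_{k}}$ by $q_{i}^{x_{k}}$. Let $h$ be the act obtained from $f$ by splitting every prize except $x_{k}$ with the lotteries currently in use. By Step 1 both the "before" and the "after" acts belong to $F^{*}(h,x_{k})$, since all the relevant supports are disjoint from $\text{supp}(h)$. \claimref{Splitting indifference} then gives equal $U$.

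After $M$ such steps, $U(f_{I}^{*})=U(g)$, where $g$ is the act built from $\{q_{i}^{x}\}$. The same argument applied to $f_{I}^{*}{}'$ gives $U(f_{I}^{*}{}')=U(g')$, with $g'$ also built from $\{q_{i}^{x}\}$. By order-independence (Step 1), $g=g'$, so $U(f_{I}^{*})=U(f_{I}^{*}{}')$.

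\emph{Main obstacle.} The delicate point is verifying in Step 2 that each intermediate act is a legitimate element of $I^{*}$-type constructions, and that both compared acts really lie in $F^{*}(h,x_{k})$. This requires that no lottery support collides with a prize of $h$. That is exactly why the fresh family $q$ is chosen disjoint from everything, rather than converting $p$ directly into $p'$, whose supports might overlap.
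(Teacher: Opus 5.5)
Your proposal is correct and is essentially the paper's argument: the paper proves this claim in one line by applying Claim~\ref{claim:Splitting indifference} iteratively, and your order-independence observation plus the fresh family $\{q_i^x\}$ (which avoids support collisions when you swap one prize at a time) is exactly the bookkeeping that makes that iteration rigorous. The only small imprecision is in Step 1: pairwise disjointness is guaranteed only for the lotteries $p_i^x$ with $f(x\mid\omega_i)>0$, but the remaining ones affect none of the acts involved and can be re-chosen fresh, so the argument goes through unchanged.
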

\begin{proof}
By applying \claimref{{Splitting indifference}} iteratively.
\end{proof}

\subsubsection*{Step 2: Construction of $v$ bar and its properties}

Let the function $\bar{v}:\left[0,1\right]^{N}\rightarrow\mathbb{R}$
satisfy
\[
\bar{v}\left(f_{x}\right)=U\left(f_{x}^{*}\right)-U\text{\ensuremath{\left(f\right)}}
\]
where $f_{x}^{*}\in F^{*}\left(f,x\right)$ (defined in \subsecref{notations-objects}).
Note that $\bar{v}\left(\boldsymbol{0}\right)=0$ because $f_{x}=\boldsymbol{0}$
implies $x\notin\text{supp}\left(f\right)$, which implies $F^{*}\left(f,x\right)=\left\{ f\right\} $
by definition. Moreover, note that if $f_{x}$ is such that for some
$\omega_{i}\in\Omega$, $f_{x}\left(\omega_{i}\right)>0$ and $f_{x}\left(\omega_{j}\right)=0$
for all $\omega_{j}\in\Omega\backslash\left\{ \omega_{i}\right\} $,
then $f\in F^{*}\left(f,x\right)$, so $\bar{v}\left(f_{x}\right)=0$.
\begin{claim}
\label{claim:well-defined}$\bar{v}$ is well-defined.
\end{claim}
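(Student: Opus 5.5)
The claim has two parts. The value $U\left(f_{x}^{*}\right)-U\left(f\right)$ must not depend on which splitting $f_{x}^{*}\in F^{*}\left(f,x\right)$ is chosen. It must also depend on $f$ only through the likelihood vector $f_{x}$. The first part is exactly \claimref{Splitting indifference}. So the work is to show that if $f_{x}=f'_{x}$, then $U\left(f_{x}^{*}\right)-U\left(f\right)=U\left(f'^{*}_{x}\right)-U\left(f'\right)$. Since \claimref{Splitting indifference} lets me fix the splitting lotteries freely, I will use \emph{common splits} $\left\{ p_{1},\dots,p_{N}\right\}$ for both acts, with supports disjoint from $\text{supp}\left(f\right)\cup\text{supp}\left(f'\right)$.

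\emph{Step 1: relabelling.} I first relabel the non-$x$ prizes of $f'$ so that $f'$ meets $f$ only at $x$. For each $z\in\text{supp}\left(f'\right)\setminus\left\{ x\right\}$, pick a fresh prize $w$ with $\delta_{w}\sim\delta_{z}$, outside all supports in play. I expect to realise such $w$ as a lottery over fresh prizes, or else to choose the prizes so that strict monotonicity does not force a repeat. Both bullets of \axmref{prize_independence} then apply to the replacement of $z$ by $w$. The first bullet gives $f'\succsim f'zw$. The second applies because $w\notin\text{supp}\left(f'\right)$ and gives the reverse, so $f'\sim f'zw$. The same argument applied to $f'^{*}_{x}$ gives $f'^{*}_{x}\sim\left(f'^{*}_{x}\right)zw$. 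Relabelling commutes with splitting $x$, because $x\neq z$ and the $p_{i}$ avoid $w$. Hence neither $U\left(f'\right)$ nor $U\left(f'^{*}_{x}\right)$ changes, and I may assume $\text{supp}\left(f\right)\cap\text{supp}\left(f'\right)\subseteq\left\{ x\right\}$.

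\emph{Step 2: affine comparison.} Now $f_{x}=f'_{x}$, and every other prize of $f$ has zero likelihood under $f'$. Thus $f'_{y}\propto f_{y}$ for all $y\in\text{supp}\left(f\right)$, and \lemref{U-and-affine-on-prop}(2) makes $U$ affine along $af+\left(1-a\right)f'$. The same holds for the pair $f_{x}^{*},f'^{*}_{x}$: the split prizes have proportional likelihood vectors because they use common $p_{i}$ and $f_{x}=f'_{x}$, and the remaining prizes have disjoint supports. Moreover, splitting $x$ in $h_{a}:=af+\left(1-a\right)f'$ with the common $p_{i}$ yields exactly $af_{x}^{*}+\left(1-a\right)f'^{*}_{x}$, and $\left(h_{a}\right)_{x}=f_{x}$ for every $a$. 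Writing $D\left(g\right):=U\left(g_{x}^{*}\right)-U\left(g\right)$, this gives
\[
D\left(h_{a}\right)=aD\left(f\right)+\left(1-a\right)D\left(f'\right)\quad\text{for all }a\in\left[0,1\right].
\]

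\emph{Step 3: the main obstacle.} The remaining and hardest step is to conclude $D\left(f\right)=D\left(f'\right)$ from the displayed identity. That identity alone only says $D$ is affine along the segment from $f'$ to $f$. I intend to close the gap by showing directly that $D$ is unchanged when non-$x$ mass is moved in a way that keeps $f_{x}$ fixed.

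First I would compare $f$ with a canonical act $\hat{f}$. In $\hat{f}$, each state's non-$x$ mass is reallocated to the fresh prizes $z_{\omega}$ of a simple-act-like structure, and those prizes are then replaced by lotteries chosen so that $\hat{f}\sim f$. The replacements are carried out state by state, using \axmref{state_prize_independence} together with the disjoint-support case \lemref{U-and-affine-on-prop}(3). Throughout I track that each replacement step moves $U\left(f\right)$ and $U\left(f_{x}^{*}\right)$ by the same amount, since that step touches only prizes other than $x$ and the split lotteries. If that bookkeeping goes through, then $D\left(f\right)=D\left(\hat{f}\right)=D\left(f'\right)$, because $\hat{f}$ depends only on $f_{x}$. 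This is where I expect the real difficulty to lie. As a fallback, if direct tracking fails, I would try to combine the affine identity with \axmref{informationseeking} to force $D$ to be constant along the segment.
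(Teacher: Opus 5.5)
Steps 1 and 2 are correct. Step 1 works only in its lottery form, since strict Monotonicity rules out $\delta_w\sim\delta_z$ for $w\neq z$. But Step 3 is the whole content of the claim, and you do not supply it:

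\begin{itemize}
\item The identity $D(h_a)=aD(f)+(1-a)D(f')$ is just linearity along a segment. It is compatible with any values of $D(f)$ and $D(f')$.
\item The canonical-$\hat f$ plan presupposes that each modification of the non-$x$ part moves $U(f)$ and $U(f^*_x)$ by the same amount, which is exactly the invariance to be proved. Merging each state's non-$x$ prizes into a state-specific prize changes their likelihood vectors, and neither Prize Independence axiom tells you how $U(f)$ and $U(f^*_x)$ respond. Also, $\hat f\sim f$ says nothing about $\hat f^*_x$.
\item The Mixture Aversion fallback is inert, because $U$ is affine on every segment you use.
\end{itemize}

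The missing idea, which is the paper's, is to mix \emph{crosswise} rather than along the diagonal. After your Step 1, take common splits $\{p_i\}$ whose supports avoid $\text{supp}(f)\cup\text{supp}(f')$. A prize-by-prize check gives the identity of acts
\[
\tfrac{1}{2}f_{x}^{*}+\tfrac{1}{2}f'=\tfrac{1}{2}f'^{*}_{x}+\tfrac{1}{2}f .
\]
Both sides put $\tfrac12 f(x\mid\omega)$ on $x$, put $\tfrac12 f(x\mid\omega_i)p_i(w)$ on each split prize $w$ in state $\omega_i$, and put half the original mass on every other prize. Each side mixes two acts with disjoint supports, so property 3 of \lemref{U-and-affine-on-prop} gives $U(f^*_x)+U(f')=U(f'^*_x)+U(f)$, i.e., $D(f)=D(f')$. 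The paper runs this swap against a binary act $h$ with $h_x=f_x$ and a fresh complementary prize, which makes your relabelling step unnecessary.

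Second, you keep the label $x$ fixed throughout. But $\bar v$ is a function on $[0,1]^N$, so well-definedness also requires $U(f^*_x)-U(f)=U(g^*_y)-U(g)$ whenever $f_x=g_y$ with $y\neq x$. The later passage to a single $\tilde v$ on $\Delta(\Omega)$ depends on this, and nothing in your argument relates splitting increments at different labels.

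The paper settles this case with \axmref{state_prize_independence}:
\begin{itemize}
\item Take binary $f$ on $\{x,z\}$ and $h$ on $\{y,z\}$ with $h_y=f_x$ and $h_z=f_z$. Then $\bar h=\tfrac12 f+\tfrac12 h$ satisfies $\bar h_x=\bar h_y$.
\item The axiom yields $\bar h^*_x\sim\bar h^*_y$.
\item Since $\bar h^*_x=\tfrac12 f^*_x+\tfrac12 h$ and $\bar h^*_y=\tfrac12 h^*_y+\tfrac12 f$ are mixtures on which $U$ is affine, this becomes $U(f^*_x)-U(f)=U(h^*_y)-U(h)$.
\end{itemize}
Combined with the same-label case, this gives independence from the prize label.
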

\begin{proof}
For any $f\in\mathcal{\mathcal{F}}$ and $x\in\text{supp\ensuremath{\left(f\right)}}$,
let
\begin{align*}
\mathcal{F}_{1}\left(f,x\right) & \coloneqq\left\{ h\in\mathcal{F}:h_{x}=f_{x},h_{y}=\boldsymbol{1}_{N}-f_{x},\exists y\neq x\right\} \\
\mathcal{F}_{2}\left(f,x\right) & \coloneqq\left\{ h\in\mathcal{F}:h_{z}=f_{x},h_{z'}=\boldsymbol{1}_{N}-f_{x},\exists z,z'\in X\right\} .
\end{align*}
To show that $\bar{v}\left(f_{x}\right)$ is well-defined, we need
to establish two statements.
\begin{enumerate}
\item For any $f\in\mathcal{F}$ and $x\in\text{supp\ensuremath{\left(f\right)}}$,
if $h\in\mathcal{F}_{1}\left(f,x\right)$, then $U\left(f_{x}^{*}\right)-U\text{\ensuremath{\left(f\right)}}=U\left(h_{x}^{*}\right)-U\text{\ensuremath{\left(h\right)}}$.
\item For any $f\in\mathcal{F}$, $x\in\text{supp\ensuremath{\left(f\right)}}$,
and $h\in\mathcal{F}_{2}\left(f,x\right)$, if $h_{z}=f_{x}$, then
$U\left(f_{x}^{*}\right)-U\text{\ensuremath{\left(f\right)}}=U\left(h_{z}^{*}\right)-U\text{\ensuremath{\left(h\right)}}$.
\end{enumerate}
\textbf{To prove statement 1}, fix any $f\in\mathcal{F}$ and $x\in\text{supp\ensuremath{\left(f\right)}}$.
Consider any $h\in\mathcal{F}_{1}\left(f,x\right)$ such that $\text{supp\ensuremath{\left(f\right)}\ensuremath{\cap}}\text{supp\ensuremath{\left(h\right)=\left\{ x\right\} }}$.
Fix any $h_{x}^{*}\in F^{*}\left(h,x\right)$ such that $\text{supp\ensuremath{\left(f\right)}\ensuremath{\cap}}\text{supp\ensuremath{\left(h_{x}^{*}\right)=\left\{ x\right\} }}$
and $\left(f_{x}^{*},h_{x}^{*}\right)$ involve common splits (\subsecref{notations-objects}).
By definition of split, $\frac{1}{2}f_{x}^{*}+\frac{1}{2}h=\frac{1}{2}h_{x}^{*}+\frac{1}{2}f$.
Since $\left(f_{x}^{*}\right)_{z}\propto h_{z}$ for all $z\in\text{supp\ensuremath{\left(h\right)}}$
and $\left(h_{x}^{*}\right)_{z}\propto f_{z}$ for all $z\in\text{supp\ensuremath{\left(f\right)}}$,
so \lemref{U-and-affine-on-prop} gives the first and last equalities
in
\[
\frac{1}{2}U\left(f_{x}^{*}\right)+\frac{1}{2}U\left(h\right)=U\left(\frac{1}{2}f_{x}^{*}+\frac{1}{2}h\right)=U\left(\frac{1}{2}h_{x}^{*}+\frac{1}{2}f\right)=\frac{1}{2}U\left(h_{x}^{*}\right)+\frac{1}{2}U\left(f\right).
\]
So we have $U\left(f_{x}^{*}\right)-U\text{\ensuremath{\left(f\right)}}=U\left(h_{x}^{*}\right)-U\text{\ensuremath{\left(h\right)}}$.
Fix one such $h$, which exists because $\text{supp}\left(f\right)$
is finite and $X=\mathbb{R}$. Consider any $h'\in\mathcal{F}_{1}\left(f,x\right)$
such that $\text{supp\ensuremath{\left(f\right)}\ensuremath{\cap}}\text{supp\ensuremath{\left(h'\right)=\left\{ x,z\right\} }}$
for some $z\in X$ and $\text{supp\ensuremath{\left(h\right)}\ensuremath{\cap}}\text{supp\ensuremath{\left(h'\right)=\left\{ x\right\} }}$.
Consider $h_{x}^{*}\in F^{*}\left(h,x\right)$ and $h'{}_{x}^{*}\in F^{*}\left(h',x\right)$
such that $\left(h_{x}^{*},h'{}_{x}^{*}\right)$ involve common splits.
Then we have $\frac{1}{2}h'{}_{x}^{*}+\frac{1}{2}h=\frac{1}{2}h_{x}^{*}+\frac{1}{2}h'$,
and therefore 
\[
U\left(f_{x}^{*}\right)-U\text{\ensuremath{\left(f\right)}}=U\left(h_{x}^{*}\right)-U\text{\ensuremath{\left(h\right)}}=U\left(h'{}_{x}^{*}\right)-U\text{\ensuremath{\left(h'\right)}}
\]
 using a similar argument as before. Note that every act in $\mathcal{F}_{1}\left(f,x\right)$
has a support of at most size two, so we have covered all of $\mathcal{F}_{1}\left(f,x\right)$. 

\textbf{To prove statement 2}, fix any $f_{0}\in\mathcal{F}$ such
that $x\in\text{supp}$$\left(f_{0}\right)$. Pick $f\in\mathcal{F}_{1}\left(f_{0},x\right)$
such that $f_{x}=\left(f_{0}\right)_{x}$ and $\text{supp}\left(f\right)=\left\{ x,z\right\} $.
Consider any $h\in\mathcal{F}_{2}\left(f_{0},x\right)$ such that
$h_{y}=f_{x}$ for some $y\neq x$ and $h_{z}=f_{z}$. Let $\bar{h}\coloneqq\frac{1}{2}f+\frac{1}{2}h$,
then $\bar{h}_{x}^{*}=\frac{1}{2}f_{x}^{*}+\frac{1}{2}h$ and $\bar{h}_{y}^{*}=\frac{1}{2}h_{y}^{*}+\frac{1}{2}f$.
Consider the constant act $p_{1}\in\mathcal{F}_{P}$ such that $p_{1}\sim\delta_{x}$
and $\text{supp\ensuremath{\left(\bar{h}\right)\cap\text{supp\ensuremath{\left(p_{1}\right)}}=\emptyset}}$.
Let $\bar{h}_{x1}=\bar{h}_{\omega_{1}}xp_{1}$. Consider $p_{2}\in\mathcal{F}_{P}$
such that $p_{2}\sim\delta_{x}$ and $\text{supp\ensuremath{\left(\bar{h}_{x1}\right)\cap\text{supp\ensuremath{\left(p_{2}\right)}}=\emptyset}}$.
Let $p_{2}$ replace $x$ under $\omega_{1}$ in $\bar{h}_{x1}$,
denote the new act $\bar{h}_{x1}{}_{\omega_{1}}xp_{2}$ by $\bar{h}_{x2}$.
Iterate the process on $\bar{h}_{xi}$. Finally we get the ultimate
generated act $\bar{h}_{xN}$ using some $p_{1},...p_{n}$ in this
process, and $\bar{h}_{x}^{*}=\text{\ensuremath{\bar{h}_{xN}}}$.
Consider $q_{1}\in\mathcal{F}_{P}$ such that $q_{1}\sim\delta_{y}$
and $\text{supp\ensuremath{\left(\bar{h}\right)\cap\text{supp\ensuremath{\left(q_{1}\right)}}=\emptyset}}$.
Conduct the same process on $\bar{h}_{\omega_{1}}yq_{1}$. In the
end we get the ultimate generated act $\bar{h}_{yN}$; note that $\bar{h}_{y}^{*}=\text{\ensuremath{\bar{h}_{yN}}}$.
By iteratively applying \axmref{state_prize_independence} in the
process of generating $\bar{h}_{xN}$ and $\bar{h}_{yN}$ , we have
$\bar{h}_{x}^{*}\sim\bar{h}_{y}^{*}$. Therefore, $\frac{1}{2}f_{x}^{*}+\frac{1}{2}h\sim\frac{1}{2}h_{y}^{*}+\frac{1}{2}f$.
Since $f_{x_{z}}^{*}\propto h_{z}$ for any $z\in\text{supp}\left(h\right)$
and $f\propto h_{y_{z}}^{*}$ for any $z\in\text{supp}\left(f\right)$,
using \lemref{U-and-affine-on-prop}, we get $U\left(f_{x}^{*}\right)-U\text{\ensuremath{\left(f\right)}}=U\left(h_{y}^{*}\right)-U\text{\ensuremath{\left(h\right)}}$.
Consider any $h'\in\mathcal{F}_{2}\left(f_{0},x\right)$ such that
$\text{supp}\left(h'\right)=\left\{ y,z'\right\} $, $h'_{y}=h_{y}$,
and $h'_{z'}=h_{z}$. So
\begin{align*}
U\left(\left(f_{0}^{*}\right)_{x}\right)-U\text{\ensuremath{\left(f_{0}\right)}}=U\left(f_{x}^{*}\right)-U\text{\ensuremath{\left(f\right)}}=U\left(h_{y}^{*}\right)-U\text{\ensuremath{\left(h\right)}}=U\left(h'{}_{y}^{*}\right)-U\text{\ensuremath{\left(h'\right)}}
\end{align*}
where the first and last equalities are due to statement 1.

\end{proof}
\begin{claim}
\label{claim:summation of vbar}For any $f\in\mathcal{F}$ and $f_{I}^{*}\in I^{*}\left(f\right)$,
$\sum_{x}\bar{v}\left(f_{x}\right)=U\left(f_{I}^{*}\right)-U\left(f\right)$.
\end{claim}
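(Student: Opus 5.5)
The plan is a telescoping argument along the construction of $f_I^*$. Fix $f$ and some $f_I^*\in I^*(f)$. By \claimref{I-splitting-indifference}, $U(f_I^*)$ does not depend on the particular element of $I^*(f)$, so we may work with the specific chain that produced it. Order $\text{supp}(f)=\{x_1,\dots,x_M\}$ and let $f^0:=f,f^1,\dots,f^M=f_I^*$ be the successive acts, with $f^i\in F^*(f^{i-1},x_i)$. Then
\[
U(f_I^*)-U(f)=\sum_{i=1}^{M}\left[U(f^i)-U(f^{i-1})\right].
\]
Prizes outside $\text{supp}(f)$ have $f_x=\boldsymbol{0}$ and contribute $\bar v(\boldsymbol{0})=0$. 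It therefore suffices to show that each increment equals $\bar v(f_{x_i})$.

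By the definition of $\bar v$ and its well-definedness (\claimref{well-defined}), the increment satisfies
\[
U(f^i)-U(f^{i-1})=\bar v\bigl((f^{i-1})_{x_i}\bigr).
\]
Here $f^i$ is a legitimate element of $F^*(f^{i-1},x_i)$, and \claimref{Splitting indifference} makes the choice within that set irrelevant. The remaining point is to check that $(f^{i-1})_{x_i}=f_{x_i}$, i.e. that the likelihood vector of $x_i$ is unchanged by the earlier splits of $x_1,\dots,x_{i-1}$. This holds by construction. Each split of $x_j$ only moves the mass $f^{j-1}(x_j\mid\omega)$ onto lotteries $p_k$ whose supports are disjoint from the current support, which contains $x_i$. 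It leaves every $f(z\mid\omega)$ with $z\ne x_j$ untouched. Moreover, the requirement $\text{supp}(f^{j})\cap\{x_1,\dots,x_{j-1}\}=\emptyset$ means new prizes never coincide with any original $x_i$. By induction on $j<i$, $(f^{j})_{x_i}=f_{x_i}$ for all $i>j$, and this yields the claim after summation.

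The main obstacle is the bookkeeping in the invariance step. One must make sure the disjointness conditions in the definition of a splitting genuinely prevent any later prize $x_i$ ($i>j$) from receiving mass from the split of $x_j$. I would first verify this directly from the case formula defining $f_x^*$. That formula assigns mass to $z\in\text{supp}(p_k)$, which is disjoint from $\text{supp}(f^{j-1})\ni x_i$, and otherwise copies $f^{j-1}(z\mid\omega)$. If some degenerate case arises — e.g., $x_j$ already concentrated on a single state, where $f^{j-1}$ itself belongs to $F^*(f^{j-1},x_j)$ — I would handle it by noting that the corresponding term is $\bar v(f_{x_j})=0$. The increment is then also $0$ by \claimref{Splitting indifference}, so the telescoping identity is unaffected.
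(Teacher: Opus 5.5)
Your proof is correct and shares the paper's telescoping skeleton, but the key step is justified differently. The paper does not invoke \claimref{well-defined} here. For each $i$ it picks $f^*_{x_i}\in F^*(f,x_i)$ built from the same lotteries as the step $f^{i-1}\to f^i$. It then notes the act identity $\tfrac12 f+\tfrac12 f^i=\tfrac12 f^*_{x_i}+\tfrac12 f^{i-1}$ and uses affinity of $U$ on proportional mixtures (\lemref{U-and-affine-on-prop}) to get $U(f^*_{x_i})-U(f)=U(f^i)-U(f^{i-1})$ directly. That identity implicitly encodes your fact that earlier splits leave the likelihood vector of $x_i$ unchanged.

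You instead treat $\bar v$ as a genuine function of the likelihood vector, which is what \claimref{well-defined} provides. You then reduce everything to the explicit bookkeeping $(f^{i-1})_{x_i}=f_{x_i}$. Your route makes it more transparent why each increment is exactly $\bar v(f_{x_i})$. The paper's route is more self-contained: it re-derives inline the single instance of well-definedness it needs. In fact you only use the fixed-prize-label part of \claimref{well-defined}, whose proof is the same common-split mixture trick and does not use State-Prize Independence.

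Two minor points. First, the invariance for later prizes $x_i$ ($i>j$) comes from $\text{supp}(p_k)\cap\text{supp}(f^{j-1})=\emptyset$. It does not come from the requirement $\text{supp}(f^j)\cap\{x_1,\dots,x_{j-1}\}=\emptyset$, which only keeps earlier prizes from reappearing. Second, the appeals to \claimref{I-splitting-indifference} and to a separate degenerate case are unnecessary. Each $f_I^*\in I^*(f)$ comes with its own chain, and the general argument already covers prizes concentrated on a single state.
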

\begin{proof}
For any $f\in\mathcal{F}$ and $f_{I}^{*}\in I^{*}\left(f\right)$,
record $\left\{ x_{1},...,x_{M}\right\} \subset X$ and $f^{1},f^{2},...,f^{M}$
from the definition (construction) of $f_{I}^{*}$. Let $f^{0}\coloneqq f$
and $f_{x_{1}}^{*}\coloneqq f^{1}$. If $M>1$, for every $i>1$,
pick a $f_{x_{i}}^{*}\in F^{*}\left(f,x_{i}\right)$ such that $\left(f_{x_{i}}^{*},f^{i}\right)$
involve common splits (\subsecref{notations-objects}). By definition,
$\frac{1}{2}f+\frac{1}{2}f^{i}=\frac{1}{2}f_{x_{i}}^{*}+\frac{1}{2}f^{i-1}$.
For any $i\in\left\{ 1,...,M\right\} $, since $f_{y}\propto f_{y}^{i}$
for any $y\in\text{supp}\left(f\right)$ and $\left(f_{x_{i}}^{*}\right)_{y}\propto f_{y}^{i-1}$
for any $y\in\text{supp}\left(f\right)$, \lemref{U-and-affine-on-prop}
gives the first and last equalities in
\[
\frac{1}{2}U\left(f\right)+\frac{1}{2}U\left(f^{i}\right)=U\left(\frac{1}{2}f+\frac{1}{2}f^{i}\right)=U\left(\frac{1}{2}f_{x_{i}}^{*}+\frac{1}{2}f^{i-1}\right)=\frac{1}{2}U\left(f_{x_{i}}^{*}\right)+\frac{1}{2}U\left(f^{i-1}\right).
\]
Rearranging gives $U\left(f_{x_{i}}^{*}\right)-U\left(f\right)=U\left(f^{i}\right)-U\left(f^{i-1}\right)$
for all $i\in\left\{ 1,...,M\right\} $. Finally,
\begin{align*}
\sum_{x\in\text{supp}\left(f\right)}\bar{v}\left(f_{x}\right) & =\sum_{x\in\text{supp}\left(f\right)}U\left(f_{x}^{*}\right)-U\text{\ensuremath{\left(f\right)}}\\
 & =\sum_{i=1,...,M}U\left(f^{i}\right)-U\left(f^{i-1}\right)=U\left(f^{M}\right)-U\left(f^{0}\right)\\
 & =U\left(f_{I}^{*}\right)-U\left(f\right)
\end{align*}
as desired.
\end{proof}
\begin{claim}
\label{claim:homogeneous in degree 1}For any $f_{x}\in\left[0,1\right]^{N}$,
if $kf_{x}\in\left[0,1\right]^{N}$for $k\geq0$, then $\bar{v}\left(kf_{x}\right)=k\bar{v}\left(f_{x}\right)$.
Also, $\bar{v}\left(\cdot\right)\geq0$.
\end{claim}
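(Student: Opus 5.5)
Both parts use the well-definedness of $\bar{v}$ (\claimref{well-defined}): $\bar{v}(f_x)$ depends only on the likelihood vector $f_x$. We may therefore evaluate it on any convenient act carrying that vector, in particular a two-prize act in $\mathcal{F}_{1}(f,x)$ or a mixture built from one.

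\textbf{Homogeneity.} Fix $f_x$ and $k\geq0$ with $kf_x\in[0,1]^N$. The case $k=0$ is $\bar{v}(\boldsymbol{0})=0$, so take $k>0$. By symmetry (swap the roles of $f_x$ and $kf_x$), it suffices to treat $k\in(0,1]$.

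Choose a two-prize act $h$ with $h_x=f_x$, $h_y=\boldsymbol{1}_N-f_x$, and an independent act $g$ whose support is disjoint from $\{x,y\}$ and from the split lotteries. Form $m:=kh+(1-k)g$, so that $m_x=kf_x$. Pick a split $h_x^{*}$ and let $m_x^{*}$ use the common splits; then $m_x^{*}=kh_x^{*}+(1-k)g$.

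Since $\text{supp}(g)$ is disjoint from the supports of $h$ and $h_x^{*}$, part 3 of \lemref{U-and-affine-on-prop} gives
\[
U(m)=kU(h)+(1-k)U(g),\qquad U(m_x^{*})=kU(h_x^{*})+(1-k)U(g).
\]
Subtracting yields $\bar{v}(kf_x)=U(m_x^{*})-U(m)=k\,(U(h_x^{*})-U(h))=k\bar{v}(f_x)$. If $k>1$, apply the argument to $kf_x$ with factor $1/k$.

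\textbf{Nonnegativity.} Write $h_x^{*}=\sum_i$ (state-wise), using lotteries $p_1,\dots,p_N\sim x$ that are disjoint from each other and from $h$. The idea is that $h$ is a mixture of acts that are indifferent to suitably chosen splits, so Mixture Aversion applies.

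Concretely, let $h^{(i)}$ be the act obtained from $h$ by replacing $x$ with $p_i$ in \emph{every} state. By \axmref{prize_independence} (both bullets, using $p_i\cap h=\emptyset$ and $p_i\sim x$), $h^{(i)}\sim h$. Likewise $h^{(i)}$ is indifferent to the act replacing $x$ in all states by $p_j$, so all $h^{(i)}$ are mutually indifferent and indifferent to $h$.

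Next I want to express $h_x^{*}$ relative to $h$ via a mixture. The acts $h^{(i)}$ pool $x$'s likelihood into disjoint prizes; then $h$ itself is like a mixture that merges them. The cleaner route is the following. Let $\tilde h$ be $h_x^{*}$, and consider the mixture $\frac1N\sum_i \sigma_i(\tilde h)$, where $\sigma_i$ cyclically relabels which $p_j$ is assigned to which state. Each $\sigma_i(\tilde h)\sim\tilde h$ by \axmref{state_prize_independence}, applied repeatedly since the relevant rows are equal. Their equal-weight mixture replaces $x$ in each state by the common lottery $\frac1N\sum_j p_j\sim x$, which is indifferent to $h$ by prize independence. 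Iterated Mixture Aversion (indifferent acts dominate their mixtures, extended to finitely many by induction using transitivity) then gives $\tilde h\succsim h$, i.e. $\bar{v}(f_x)\geq0$.

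\textbf{Main obstacle.} The hard step is verifying that the relabelings $\sigma_i(\tilde h)$ are indifferent to $\tilde h$ via \axmref{state_prize_independence}. Its hypothesis $f_x=f_y$ is required at each swap, and here it involves the prizes of distinct $p_j$'s within one state.

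If that fails, the fallback is to work on a two-prize act. By well-definedness, pick a representative whose structure makes the swap legitimate, e.g. use the homogeneity just proved to rescale $f_x$ so that the two relevant rows coincide.
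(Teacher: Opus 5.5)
Both halves follow the paper's route. For homogeneity you mix with an act that leaves the $x$-row untouched and use the affinity in Lemma~\ref{lem:U-and-affine-on-prop}; you use a disjoint-support act and part 3, while the paper uses a fully informative act with proportional rows and part 2. For nonnegativity you mix a split with state-permuted copies of itself, whose uniform mixture is $hx\bar p\sim h$ with $\bar p=\tfrac1N\sum_j p_j$.

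The homogeneity half is correct. In the nonnegativity half, the step you flag as the main obstacle is left unresolved, and the justification you propose does not work as you describe it. State-Prize Independence replaces a prize inside one fixed state. In $\tilde h$ the rows of prizes from different $p_j$ are supported on different states, so they are never equal, and no application of the axiom to $\tilde h$ can move $p_j$ to another state. Rescaling $f_x$ changes none of this, so your fallback fails too. The missing observation is simple: $\sigma_i(\tilde h)$ uses the same mutually disjoint lotteries $p_j\sim x$, each disjoint from $\text{supp}(h)$, merely reassigned across states. Hence $\sigma_i(\tilde h)\in F^{*}(h,x)$, and Claim~\ref{claim:Splitting indifference} gives $U(\sigma_i(\tilde h))=U(\tilde h)$ at once. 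This is exactly the indifference the paper invokes for $g_1\sim g_2\sim g_3$. (State-Prize Independence can only be rescued by applying it, with the same prize in both roles, to partially split intermediate acts and routing through a fresh lottery.)

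For $N>2$ there is a second gap. Mixture Aversion applies only to pairs of indifferent acts. In your induction step $\tfrac{n-1}{n}m_{n-1}+\tfrac{1}{n}f_n$, the act $m_{n-1}$ is only known to be weakly worse than $f_n$, so transitivity alone does not suffice. Continuity closes this:
\begin{itemize}
\item If $f\sim g$, the inequality $f\succsim af+(1-a)g$ is Mixture Aversion itself.
\item If $f\succ g$ but $h_a:=af+(1-a)g\succ f$, Continuity gives an $m$ on the segment between $g$ and $h_a$ with $m\sim f$. Then $h_a$ is a strict mixture of $m$ and $f$, contradicting Mixture Aversion.
\end{itemize}
So $f\succsim g$ implies $f\succsim af+(1-a)g$, which is what your induction needs. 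The paper sidesteps this by writing the argument only for $N=2$, where a single pairwise mixture suffices.
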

\begin{proof}
To show that $\bar{v}$ is homogeneous of degree 1, we focus on binary
state space ($N=2$) for ease of exposition. The proof works for any
finite state space. Fix $\left(a,b\right),\left(ka,kb\right)\in\left[0,1\right]^{N}$
where $k\geq0$. Consider the following acts:

\begin{tabular}{c|ccc|}
\multicolumn{1}{c}{$f$} & $x$ & $y$ & \multicolumn{1}{c}{$z$}\tabularnewline
\cline{2-4}
$\omega_{1}$ & $a$ & $1-a$ & $0$\tabularnewline
$\omega_{2}$ & $b$ & $0$ & $1-b$\tabularnewline
\cline{2-4}
\end{tabular}\hskip 2em%
\begin{tabular}{c|ccc|}
\multicolumn{1}{c}{$g$} & $x$ & $y$ & \multicolumn{1}{c}{$z$}\tabularnewline
\cline{2-4}
$\omega_{1}$ & $ka$ & $1-ka$ & $0$\tabularnewline
$\omega_{2}$ & $kb$ & $0$ & $1-kb$\tabularnewline
\cline{2-4}
\end{tabular}\hskip 2em%
\begin{tabular}{c|cc|}
\multicolumn{1}{c}{$h$} & $y$ & \multicolumn{1}{c}{$z$}\tabularnewline
\cline{2-3}
$\omega_{1}$ & $1$ & $0$\tabularnewline
$\omega_{2}$ & $0$ & $1$\tabularnewline
\cline{2-3}
\end{tabular}

\vskip 1emNote that for any $i\in\text{supp}\left(f\right)$, we
have $f_{i}\propto h_{i}$. Note also that $g=kf+\left(1-k\right)h$
and $g_{x}^{*}=kf_{x}^{*}+\left(1-k\right)h$. Then by \lemref{U-and-affine-on-prop},
$\bar{v}\left(g_{x}\right)=U\left(g_{x}^{*}\right)-U\left(g\right)$$=U\left(kf_{x}^{*}+\left(1-k\right)h\right)-U\left(g\right)$$=kU\left(f_{x}^{*}\right)+\left(1-k\right)U\left(h\right)-U\left(kf+\left(1-k\right)h\right)$$=k\bar{v}\left(f_{x}\right)$.
So $\bar{v}\left(kf_{x}\right)=k\bar{v}\left(f_{x}\right)$ as desired.

To show that $\bar{v}\left(\cdot\right)\geq0$, we need to show $f_{y}^{*}\succsim f$
for all $f\in\mathcal{F}$ and $f_{y}^{*}\in F^{*}\left(f,y\right)$.
Suppose $y\notin\text{supp}\left(f\right)$, then it is already established
that $\bar{v}\left(f_{y}\right)=0$. Suppose from now on $y\in\text{supp}\left(f\right)$.
We focus on binary state space ($N=2$) for ease of exposition. The
proof works for any finite state space. Suppose $ax+\left(1-a\right)z\sim a'x'+\left(1-a'\right)z'\sim a''x''+\left(1-a''\right)z''\sim y$
for distinct $x,z,x',z',x'',z''$ and $\left\{ x,z,x',z',x'',z''\right\} \cap\text{supp}\text{\ensuremath{\left(f\right)}}=\emptyset$.
Pick any $g_{1}\in F^{*}\left(f,y\right)$ where the split uses $ax+\left(1-a\right)z$
to replace $y$ under $\omega_{1}$ and uses $a'x'+\left(1-a'\right)z'$
replaces $y$ under $\omega_{2}$. Let $g_{2}\in F^{*}\left(f,y\right)$
where the split uses $a'x'+\left(1-a'\right)z'$ to replace $y$ under
$\omega_{1}$ and uses $a''x''+\left(1-a''\right)z''$ to replace
$y$ under $\omega_{2}$. Then by repeatedly applying \axmref{prize_independence},
we have $g_{1}\sim g_{2}$. Let $g_{3}\in F^{*}\left(f,y\right)$
where the split uses $a'x'+\left(1-a'\right)z'$ to replace $y$ under
$\omega_{1}$ and uses $ax+\left(1-a\right)z$ to replace $y$ under
$\omega_{2}$. Then by \axmref{prize_independence}, $g_{2}\sim g_{3}$,
then $g_{1}\sim g_{3}$. By \axmref{prize_independence}, $\frac{1}{2}g_{1}+\frac{1}{2}g_{3}\sim f$.
\axmref{informationseeking} implies $g_{1}\succsim\frac{1}{2}g_{1}+\frac{1}{2}g_{3}\sim f$.
Note that \claimref{{Splitting indifference}} gives $U\left(f_{y}^{*}\right)=U\left(g_{1}\right)$,
which gives $f_{y}^{*}\sim g_{1}$, and so $f_{y}^{*}\succsim f$
by weak order.

\end{proof}
\begin{claim}
\label{claim:concave}$\bar{v}$ is concave.
\end{claim}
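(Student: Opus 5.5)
The plan is to combine the degree-one homogeneity from \claimref{homogeneous in degree 1} with \axmref{informationseeking}. For $a,b\in[0,1]^{N}$ and $t\in[0,1]$, I will show that
\[
\bar{v}\left(ta+(1-t)b\right)\geq t\bar{v}\left(a\right)+(1-t)\bar{v}\left(b\right).
\]
By homogeneity, $\bar{v}(ka)=k\bar{v}(a)$. So it suffices to prove the inequality after scaling $a,b$ by a small common factor $k\in(0,1)$, and then multiply back. Hence I assume throughout that every coordinate of $a$ and $b$ is strictly less than $1$. This leaves positive non-$x$ mass in every state.

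\textbf{Construction.} Build two acts $f,g$ with a common prize $x$ such that $f_{x}=a$ and $g_{x}=b$. In each state $\omega_{i}$, the remaining mass $1-a_{i}$ (resp. $1-b_{i}$) goes to a lottery over prizes used only in state $\omega_{i}$. These prizes are distinct across states and disjoint between $f$ and $g$. Thus off $x$ both acts are fully informative, and $f\cap g=\{x\}$. Let $h=tf+(1-t)g$, so that $h_{x}=ta+(1-t)b$. Take $f_{x}^{*},g_{x}^{*}$ involving common splits, with split lotteries $p_{i}$ disjoint from everything else. Then $h_{x}^{*}=tf_{x}^{*}+(1-t)g_{x}^{*}$ is a valid split of $h$. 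In $f_{x}^{*}$ and $g_{x}^{*}$, every shared prize comes from the same $p_{i}$ and so lives only in state $\omega_{i}$. Hence $(f_{x}^{*})_{z}\propto(g_{x}^{*})_{z}$ for all shared $z$, and \lemref{U-and-affine-on-prop}(2) gives $U(h_{x}^{*})=tU(f_{x}^{*})+(1-t)U(g_{x}^{*})$. It therefore suffices to show $U(h)\leq tU(f)+(1-t)U(g)$.

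\textbf{Equalizing $f$ and $g$.} I want $f\sim g$, so that \axmref{informationseeking} yields $U(h)\leq U(f)=tU(f)+(1-t)U(g)$. To get this, I keep $f$ fixed and modify only the non-$x$ parts of $g$. In every state, I replace them by a lottery $\lambda\delta_{\bar{z}_{i}}+(1-\lambda)\delta_{\underline{z}_{i}}$ over fresh, state-specific prizes that are very good and very bad respectively. These modifications never change $g_{x}$, and they keep all non-$x$ prizes state-specific. By the well-definedness in \claimref{well-defined}, the value $\bar{v}(g_{x})=U(g_{x}^{*})-U(g)$ is unchanged, since it depends only on $g_{x}$.

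As $\lambda$ moves from $0$ to $1$, the modified $g^{\lambda}$ moves along a segment of acts that are mutually proportional on every prize. On such a segment, \lemref{Monotonicity} gives strict monotonicity and \lemref{U-and-affine-on-prop} gives affinity. So if $\bar{z},\underline{z}$ are chosen extreme enough that $g^{1}\succsim f\succsim g^{0}$, \axmref{standard} (continuity) yields a $\lambda$ with $g^{\lambda}\sim f$.

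\textbf{Main obstacle.} The hard step is guaranteeing that extreme enough prizes exist, i.e.\ that $U(g^{\lambda})$ can be pushed above and below $U(f)$ by upgrading or downgrading only the non-$x$ part. I would argue this via \axmref{prize_independence} together with \lemref{U-and-affine-on-prop}(3). Since every state carries positive non-$x$ mass, $U(g^{1})-U(g^{0})$ equals that mass times a $u$-difference that is unbounded as $\bar{z}\to\infty$ and $\underline{z}\to-\infty$. This uses monotonicity and the affine utility $\bar{U}$ on constant acts. If unboundedness of $u$ fails, I would instead also adjust $f$ symmetrically and meet in the middle, again relying on continuity. Once $f\sim g$ is secured, the inequality above follows, and scaling back by homogeneity proves that $\bar{v}$ is concave.
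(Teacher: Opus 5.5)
\textbf{Verdict: the skeleton is right, but there is a genuine gap at the step you flag as the main obstacle.}

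\textbf{What works.} Your skeleton is the paper's: two acts sharing only the prize $x$, their mixture $h$, affinity of $U$ on the split acts, and Mixture Aversion to obtain $U(h)\le tU(f)+(1-t)U(g)$. You work with the single split at $x$ (common $p_i$, then part 2 of \lemref{U-and-affine-on-prop}). This is a cleaner substitute for the paper's full splits $I^{*}$, the summation claim, and the homogeneity-based cancellation of the non-$x$ terms, and that part is correct. You are also right that Mixture Aversion yields the inequality only when $f\sim g$. The paper invokes the axiom at exactly this point without arranging indifference.

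\textbf{The gap.} Your equalization step does not actually produce an indifferent pair. Your main argument needs $U(g^{1})-U(g^{0})$ to equal a positive weight times a utility difference that is unbounded in both directions. Neither piece is available:
\begin{itemize}
\item The separable formula does not exist yet. No $u$ or $\mu$ has been constructed at this stage; the state-separable structure is built in Steps 4--6, and those steps rely on this very claim through the menu-representation lemma.
\item More seriously, unboundedness does not follow from the axioms. Monotonicity only makes utility strictly increasing, and the CARA $u_{\rho}$ of Section 4, which is bounded above, is admissible.
\end{itemize}
The fallback ``meet in the middle'' is not an argument, and with $k$ chosen only so that coordinates are below $1$ it can fail. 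Suppose $u$ is bounded and the value of full information exceeds $\sup u-\inf u$. Then for $ka$ close to $\mathbf{1}$ and $kb$ close to $\mathbf{0}$, every admissible $g$ is strictly preferred to every admissible $f$, whatever fillers you use.

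\textbf{The missing idea.} Use homogeneity with $k$ small relative to fixed acts, so that the utility gap to be closed is $O(k)$ while a fixed adjustment margin remains.
\begin{itemize}
\item Fix $f',g'$ with $f'_{x}=a$ and $g'_{x}=b$. Fix simple acts $s,s^{+},s^{-}$ paying $s_{i}$, $\bar z_{i}>s_{i}$, and $\underline z_{i}<s_{i}$ in state $\omega_{i}$, with all non-$x$ prizes fresh and distinct.
\item Let $f=kf'+(1-k)s$ and $g^{\lambda}=kg'+(1-k)\left(\lambda s^{+}+(1-\lambda)s^{-}\right)$. Then $f_{x}=ka$, $g^{\lambda}_{x}=kb$, and $f\cap g^{\lambda}=\{x\}$.
\item Part 3 of \lemref{U-and-affine-on-prop} (disjoint supports) gives $U(f)=kU(f')+(1-k)U(s)$ and $U(g^{\lambda})=kU(g')+(1-k)\left[\lambda U(s^{+})+(1-\lambda)U(s^{-})\right]$.
\item Monotonicity together with the strict clauses of Prize Independence (state-by-state replacements by fresh prizes) gives $U(s^{+})>U(s)>U(s^{-})$.
\item Once $k\left|U(f')-U(g')\right|\le(1-k)\min\{U(s^{+})-U(s),U(s)-U(s^{-})\}$, the intermediate value theorem yields $\lambda$ with $f\sim g^{\lambda}$.
\end{itemize}
Your single-split computation plus Mixture Aversion then gives the inequality at $(ka,kb)$, and homogeneity transfers it to $(a,b)$. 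No unboundedness is needed.
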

\begin{proof}
Fix any $r,s\in\left[0,1\right]^{N}$, we need to show $\bar{v}\left(ar+\left(1-a\right)s\right)\geq a\bar{v}\left(r\right)+\left(1-a\right)\bar{v}\left(s\right)$.
Pick $f,g\in\mathcal{F}$ such that $f_{x}=r$, $g_{x}=s$, and $\text{supp\ensuremath{\left(f\right)}\ensuremath{\cap}supp\ensuremath{\left(g\right)}}=\left\{ x\right\} $.
Let $h\coloneqq af+\left(1-a\right)g$; clearly, 
\begin{equation}
h_{x}=af_{x}+\left(1-a\right)g_{x}.\label{eq:vbarconcave1}
\end{equation}
Pick $f_{I}^{*}\in I^{*}\left(f\right)$ and $g_{I}^{*}\in I^{*}\left(g\right)$
such that $\text{supp\ensuremath{\left(f_{I}^{*}\right)}}\cap\text{supp\ensuremath{\left(g_{I}^{*}\right)}}=\emptyset$,
so \lemref{U-and-affine-on-prop} gives the second equality in 
\begin{equation}
U\left(h_{I}^{*}\right)=U\left(af_{I}^{*}+\left(1-a\right)g_{I}^{*}\right)=aU\left(f_{I}^{*}\right)+\left(1-a\right)U\left(g_{I}^{*}\right),\label{eq:hi}
\end{equation}
where $h_{I}^{*}\in I^{*}\left(h\right)$ and the first equality is
due to $h_{I}^{*}=af_{I}^{*}+\left(1-a\right)g_{I}^{*}$. Observe
that
\begin{align}
U\left(af+\left(1-a\right)g\right) & \leq aU\left(f\right)+\left(1-a\right)U\left(g\right)\nonumber \\
\left[U\left(h_{I}^{*}\right)\right]-\left[U\left(af+\left(1-a\right)g\right)\right] & \geq\left[aU\left(f_{I}^{*}\right)+\left(1-a\right)U\left(g_{I}^{*}\right)\right]-\left[aU\left(f\right)+\left(1-a\right)U\left(g\right)\right]\nonumber \\
U\left(h_{I}^{*}\right)-U\left(h\right) & \geq a\left[U\left(f_{I}^{*}\right)-U\left(f\right)\right]+\left(1-a\right)\left[U\left(g_{I}^{*}\right)-U\left(g\right)\right]\nonumber \\
\sum_{z\in\text{supp}\left(h\right)}\bar{v}\left(h_{z}\right) & \geq a\sum_{z\in\text{supp}\left(f\right)}\bar{v}\left(f_{z}\right)+\left(1-a\right)\sum_{z'\in\text{supp}\left(g\right)}\bar{v}\left(g_{z'}\right)\label{eq:vbarconcavehahaha}\\
\bar{v}\left(h_{x}\right) & \geq a\bar{v}\left(f_{x}\right)+\left(1-a\right)\bar{v}\left(g_{x}\right),\label{eq:vbarconcave2}
\end{align}
where the first line is given by \axmref{informationseeking}; the
second line is by multiplying $\left(-1\right)$ and adding the LHS
and RHS of \eqref{hi}; the third line replaces $h=af+\left(1-a\right)g$
and rearranges; the fourth line is by \claimref{{summation of vbar}}.
To arrive at the last line, observe that for all $z\in\text{supp}\left(h\right)\backslash\left\{ x\right\} $,
either $z\in\text{supp}\left(f\right)$ or $\text{supp}\left(g\right)$
(and not both), so
\begin{align*}
\bar{v}\left(h_{z}\right) & =\bar{v}\left(af_{z}+\left(1-a\right)g_{z}\right)=\begin{cases}
\bar{v}\left(af_{z}\right)=a\bar{v}\left(f_{z}\right) & \text{if }z\in\text{supp}\left(f\right)\\
\bar{v}\left(\left(1-a\right)g_{z}\right)=\left(1-a\right)\bar{v}\left(g_{z}\right) & \text{if }z\in\text{supp}\left(g\right)
\end{cases},
\end{align*}
where \claimref{{homogeneous in degree 1}} is used, resulting in
cancellations on both sides of \eqref{vbarconcavehahaha} that yield
the last line. So \eqref{vbarconcave1} implies \eqref{vbarconcave2},
as desired.
\end{proof}
\begin{lem}
\label{lem:menu representation}There exist $v:\mathcal{F}\times\Omega\rightarrow\mathbb{R}\cup\left\{ -\infty\right\} $
and $\tilde{A}\subset\mathcal{F}$ such that for any $r\in\left[0,1\right]^{N}$,
\[
\bar{v}\left(r\right)=\sum_{\omega\in\Omega}r\left(\omega\right)\left[\sup_{g\in\tilde{A}}v\left(g,\omega\right)\right]-\sup_{g\in\tilde{A}}\left[\sum_{\omega\in\Omega}r\left(\omega\right)v\left(g,\omega\right)\right].
\]
\end{lem}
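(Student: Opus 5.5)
We need a representation of $\bar v$ as "sum of statewise sups minus sup of linear": $\bar v(r)=\sum_\omega r(\omega)\sigma(\omega)-\phi(r)$, where $\phi(r)=\sup_{g\in\tilde A}\sum_\omega r(\omega)v(g,\omega)$ is a support function (convex, positively homogeneous, lower semicontinuous). By \claimref{homogeneous in degree 1} and \claimref{concave}, $\bar v$ is concave and homogeneous of degree 1 on $[0,1]^N$. The plan is to extend it by homogeneity to the cone $\mathbb{R}_+^N$, so that $-\bar v$ is a convex, positively homogeneous function there. We then represent $-\bar v$ (up to a linear term) as a support function of a closed convex set $C\subset(\mathbb{R}\cup\{-\infty\})^N$ and index the elements of $C$ by acts.

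\textbf{Step 1.} Extend $\bar v$ to $\mathbb{R}_+^N$ by $\bar v(r)=t\bar v(r/t)$ for $t$ large. This is well defined by \claimref{homogeneous in degree 1}, and it remains concave and homogeneous, since concavity plus homogeneity gives superadditivity. Recall that $\bar v(e_\omega)=0$ on unit vectors, because single-state prizes need no splitting, and that $\bar v\ge0$. Set $\sigma(\omega):=0$, so the linear term $\sum_\omega r(\omega)\sigma(\omega)$ vanishes. We then need $\bar v(r)=-\phi(r)$, with $\phi$ convex and homogeneous and satisfying $\phi(e_\omega)=0$.

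\textbf{Step 2.} For the support representation, let
\[
C=\{c\in[-\infty,\infty)^N:\ c\cdot r\le-\bar v(r)\ \forall r\in\mathbb{R}_+^N\},
\]
using the convention $0\times-\infty=0$. By a standard Fenchel--Minkowski argument on the cone, $-\bar v$ equals $\sup_{c\in C}c\cdot r$ at every $r$ where $-\bar v$ is lower semicontinuous, in particular on the relative interior of each face. Allowing $-\infty$ coordinates is what lets the representation hold on boundary faces: for $r$ supported on a face $S$, take supporting vectors of the restriction of $-\bar v$ to $\mathbb{R}_+^S$ and set the coordinates outside $S$ to $-\infty$. These vectors also satisfy the global inequality, since their pairing with any $r'$ not supported in $S$ is $-\infty$. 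This is exactly why the lemma allows $v$ to take values in $\mathbb{R}\cup\{-\infty\}$.

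\textbf{Step 3.} Let $\tilde A$ be any family of acts in bijection with $C$; this is possible because $\mathcal{F}$ has cardinality at least that of $C$. Define $v(g,\omega)=c_g(\omega)$. Since $\sup_{g\in\tilde A}v(g,\omega)\le-\bar v(e_\omega)=0$, and some $c$ attains $0$ at $e_\omega$ by the face construction, the statewise suprema are $0$. The formula then reads $\bar v(r)=0-\sup_{c\in C}c\cdot r$, as required.

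The main obstacle is lower semicontinuity at the boundary, where concave functions can jump down and so upper semicontinuity of $\bar v$ can fail at boundary points. My plan there is the face-by-face construction above, which avoids needing global continuity. I would try first to verify that concavity of $\bar v$ restricted to each face $\mathbb{R}_+^S$, together with finiteness, yields supporting hyperplanes at relative-interior points of that face. At relative-boundary points of a face, the value is handled by passing to the smaller face containing the point.
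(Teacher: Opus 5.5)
Your proof is correct. It differs from the paper's in that you prove the key representation step yourself rather than citing it.

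\textbf{The paper's route.} The paper rescales $\bar v$ to the simplex and notes that $\tilde v$ is concave, nonnegative, and zero at degenerate beliefs. It then invokes the Frankel--Kamenica characterization (via Liang's lecture notes): such a function is the value of full information in some decision problem with utilities in $\mathbb{R}\cup\{-\infty\}$. Finally it rescales back to $[0,1]^N$ by homogeneity.

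\textbf{Your route.} You work on the cone $\mathbb{R}^N_+$, normalize the statewise suprema to $0$, and build the menu directly by convex duality.
\begin{itemize}
\item Any $r$ with support $S$ lies in the interior of $\mathbb{R}^S_+$ within $\mathbb{R}^S$. So the finite convex function $-\bar v$ restricted there has a subgradient $c_S$ at $r$.
\item The subgradient inequality extends to the closed orthant $\mathbb{R}^S_+$ by convexity along segments, or by setting $-\bar v=+\infty$ off $\mathbb{R}^S_+$.
\item Homogeneity then gives $c_S\cdot r=-\bar v(r)$ and $c_S\cdot r'\le -\bar v(r')$ on $\mathbb{R}^S_+$.
\item Padding with $-\infty$ off $S$ turns $c_S$ into a global minorant.
\end{itemize}

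\textbf{What each buys.} The paper's version is shorter but rests on an external, lecture-notes-level citation. Yours is self-contained and shows exactly why $-\infty$ values are needed. A concave $\bar v$ can jump down on the boundary: take $\tilde v\equiv 1$ on the open simplex and $\tilde v=0$ at the vertices. With real-valued $c$, $\sup_c c\cdot r$ is lower semicontinuous, so it can only reproduce the lower semicontinuous hull of $-\bar v$. Your construction also attains every supremum, both $\sup_c c\cdot r$ and each statewise supremum. That justifies the $\max$ the paper writes without comment in Step~3 when defining $\gamma_1,\gamma_2$ and computing $\tilde U_I$ on constant and informative acts.

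\textbf{Two small fixes.} First, your Step~2 says $-\bar v$ is lower semicontinuous on the relative interior of each face. As a statement about $-\bar v$ on all of $\mathbb{R}^N_+$, this is false at lower-dimensional faces; the example above shows it. You do not need it, though. Your $C$ contains all padded face subgradients, so its definition gives the upper bound and the face construction gives equality at every $r$. No global Fenchel step is required. Second, the lemma needs $v$ on all of $\mathcal{F}\times\Omega$, so extend $v$ to $\mathcal{F}\setminus\tilde A$ arbitrarily, for example by $0$.
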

\begin{proof}
We have $\bar{v}$$\left(\cdot\right)\geq0$ due to \claimref{{homogeneous in degree 1}}and
concave due to \claimref{concave}. By \claimref{{homogeneous in degree 1}},
for any $r\in\left[0,1\right]^{N}$, we have $\bar{v}\left(r\right)=\left[\sum_{\omega\in\Omega}r\left(\omega\right)\right]\bar{v}\left(\tilde{r}\right)$,
where $\tilde{r}\left(\omega\right)=\frac{r\left(\omega\right)}{\sum_{\omega'\in\Omega}r\left(\omega'\right)}$.
Therefore, we can define $\tilde{v}:\Delta\left(\Omega\right)\rightarrow\mathbb{R}$
such that $\tilde{v}\left(\tilde{r}\right)=\bar{v}\left(\tilde{r}\right)$
for any $\tilde{r}\in\Delta\left(\Omega\right)$, then $\tilde{v}$
is concave and $\tilde{v}\geq0$. By the definition of $\bar{v}$,
$\bar{v}\left(q\right)=\tilde{v}\left(q\right)=0$ for degenerate
$q\in\Delta\left(\Omega\right)$, i.e., there exists one $\omega\in\Omega$
such that $q\left(\omega\right)=1$. By Proposition 15 in Annie Liang's
lecture notes entitled \emph{Information and Learning in Economic
Theory} (which references \citet{frankel2019quantifying}), there
exist a set $\tilde{A}\subset\mathcal{F}$ and a function $v:\tilde{A}\times\Omega\rightarrow\mathbb{R}\cup\left\{ -\infty\right\} $
such that $\tilde{v}\left(q\right)=\sum_{\omega\in\Omega}q\left(\omega\right)\left[\sup_{g\in\tilde{A}}v\left(g,\omega\right)\right]-\sup_{g\in\tilde{A}}\left[\sum_{\omega\in\Omega}q\left(\omega\right)v\left(g,\omega\right)\right]$.
Extend $v$'s domain to $\mathcal{F}$ by setting $v\left(f,\omega\right)=0$
for any $f\in\mathcal{F}\setminus\tilde{A}$ and any $\omega\in\Omega$.
Therefore, 
\begin{align*}
\bar{v}\left(r\right) & =\left[\sum_{\omega\in\Omega}r\left(\omega\right)\right]\tilde{v}\left(\tilde{r}\right)\\
 & =\sum_{\omega\in\Omega}r\left(\omega\right)\left[\sup_{g\in\tilde{A}}v\left(g,\omega\right)\right]-\sup_{g\in\tilde{A}}\left[\sum_{\omega\in\Omega}r\left(\omega\right)v\left(g,\omega\right)\right].
\end{align*}

\end{proof}

\subsubsection*{Step 3: Construction of $\tilde{U}_{I}$ and $U_{C}$ from $U$}

For all $f\in\mathcal{F}$, define
\begin{align}
\tilde{U}_{I}\left(f\right) & \coloneqq\sum_{x\in\text{supp}\left(f\right)}\sup_{\tilde{g}\in\tilde{A}}\sum_{\omega\in\Omega}f\left(x\mid\omega\right)v\left(\tilde{g},\omega\right)\label{eq:VI}\\
U_{C}\left(f\right) & \coloneqq U\left(f\right)-\tilde{U}_{I}\left(f\right).\label{eq:Uc-1}
\end{align}
 Let 
\[
\gamma_{1}\coloneqq\max_{\tilde{g}\in\tilde{A}}\sum_{\omega\in\Omega}v\left(\tilde{g},\omega\right)\,\,\,\,\,\&\,\,\,\,\,\gamma_{2}\coloneqq\sum_{\omega\in\Omega}\max_{\tilde{g}\in\tilde{A}}v\left(\tilde{g},\omega\right).
\]

\begin{claim}
\label{claim:gamma1gamma2}For any $p\in\mathcal{F}_{P}$, $\tilde{U}_{I}\left(p\right)=\gamma_{1}$.
For any $I\in\mathcal{F}_{I}$, $\tilde{U}_{I}\left(f\right)=\gamma_{2}$.
\end{claim}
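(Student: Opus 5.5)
The plan is to compute $\tilde{U}_{I}$ directly from its definition in \eqref{VI}. For both classes of acts the supremum over $\tilde{A}$ becomes trivial to evaluate, either because the coefficient vector $f_{x}$ is a multiple of $\boldsymbol{1}_{N}$ or because it is supported on a single state. In both cases the key tool is positive homogeneity: for $k\geq0$ and any vector $r\in[0,1]^{N}$,
\[
\sup_{\tilde{g}\in\tilde{A}}\sum_{\omega}kr(\omega)v(\tilde{g},\omega)=k\sup_{\tilde{g}\in\tilde{A}}\sum_{\omega}r(\omega)v(\tilde{g},\omega),
\]
with the convention $0\times-\infty=0$ handling $k=0$.

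\emph{Constant acts.} Fix $p\in\mathcal{F}_{P}$. For every $x\in\text{supp}(p)$ we have $p(x\mid\omega)=p(x)$ for all $\omega$, so each summand in \eqref{VI} equals
\[
p(x)\sup_{\tilde{g}\in\tilde{A}}\sum_{\omega}v(\tilde{g},\omega).
\]
Summing over $x$ and using $\sum_{x}p(x)=1$ gives $\tilde{U}_{I}(p)=\sup_{\tilde{g}}\sum_{\omega}v(\tilde{g},\omega)$, which is $\gamma_{1}$.

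\emph{Informative acts.} Fix $I\in\mathcal{F}_{I}$. By definition, each $x\in\text{supp}(I)$ has a unique state $\omega_{x}$ with $I(x\mid\omega_{x})>0$, and $I(x\mid\omega')=0$ for all other states. The summand for $x$ is therefore $I(x\mid\omega_{x})\sup_{\tilde{g}}v(\tilde{g},\omega_{x})$; the zero terms contribute nothing even when $v=-\infty$, again by $0\times-\infty=0$. Next, group the prizes by the state they reveal. For each $\omega$ we have $\sum_{x:\omega_{x}=\omega}I(x\mid\omega)=1$, because every prize with positive probability under $\omega$ is assigned to $\omega$. Hence
\[
\tilde{U}_{I}(I)=\sum_{\omega}\sup_{\tilde{g}}v(\tilde{g},\omega),
\]
which is $\gamma_{2}$. (The statement writes $\tilde{U}_{I}(f)$, which I read as $\tilde{U}_{I}(I)$.)

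\emph{Main obstacle.} The difficulty is not the algebra but reconciling the suprema in \eqref{VI} with the maxima in the definitions of $\gamma_{1}$ and $\gamma_{2}$, and confirming that these quantities are finite. I would argue that the menu from \lemref{menu representation} can be taken with the suprema attained, for example by taking the closure of $\tilde{A}$, or by reading $\max$ as $\sup$ throughout; either choice is harmless for the identities above. For finiteness I would use the boundary condition $\tilde{v}(q)=0$ at degenerate $q$, which forces $\sup_{\tilde{g}}v(\tilde{g},\omega)$ to be finite for each $\omega$. If needed, this also follows from finiteness of $\bar{v}$ together with the representation in \lemref{menu representation}.
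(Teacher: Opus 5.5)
Your proof is correct and is essentially the paper's own argument: evaluate $\tilde{U}_{I}$ directly from its definition, pull the positive scalar $p(x)$ (respectively $I(x\mid\omega_{x})$) out of the supremum, and for informative acts regroup prizes by the unique state in which they occur, so that each state's conditional probabilities sum to one. Your extra remarks handle points the paper's proof passes over silently: reading the $\max$ in $\gamma_{1},\gamma_{2}$ as $\sup$, deriving finiteness from the real-valuedness of $\bar{v}$, and correcting $\tilde{U}_{I}(f)$ to $\tilde{U}_{I}(I)$.
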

\begin{proof}
For any constant act $p\in\mathcal{F}$, $\tilde{U}_{I}\left(p\right)=\sum_{x\in\text{supp}\left(f\right)}\max_{\tilde{g}\in\tilde{A}}\sum_{\omega\in\Omega}p\left(x\right)\left[v\left(\tilde{g},\omega\right)\right]=\sum_{x\in\text{supp}\left(f\right)}p\left(x\right)\max_{\tilde{g}\in\tilde{A}}\sum_{\omega\in\Omega}\left[v\left(\tilde{g},\omega\right)\right]=1\cdot\gamma_{1}$.
For any informative act $I\in\mathcal{F}_{I}$, for each $\omega\in\Omega$,
denote by $X_{\omega}\subset X$ the set of prizes that occur with
positive probability under $\omega$. Then $\tilde{U}_{I}\left(I\right)=\sum_{x\in\text{supp}\left(f\right)}\max_{\tilde{g}\in\tilde{A}}\sum_{\omega\in\Omega}I\left(x|\omega\right)\left[v\left(\tilde{g},\omega\right)\right]=\sum_{\omega\in\Omega}\sum_{x\in X_{\omega}}\max_{\tilde{g}\in\tilde{A}}I\left(x|\omega\right)\left[v\left(\tilde{g},\omega\right)\right]$$=\sum_{\omega\in\Omega}\sum_{x\in X_{\omega}}I\left(x|\omega\right)\max_{\tilde{g}\in\tilde{A}}\left[v\left(\tilde{g},\omega\right)\right]=\sum_{\omega\in\Omega}1\cdot\max_{\tilde{g}\in\tilde{A}}\left[v\left(\tilde{g},\omega\right)\right]=\gamma_{2}$.
\end{proof}
\begin{claim}
\label{claim:same_Uc}For any $p,q\in\mathcal{F}_{P}$, $p\sim q$
implies $U_{C}\left(p\right)=U_{C}\left(q\right)$. For any $f,g\in\mathcal{F}_{I}$,
$f\sim g$ implies $U_{C}\left(f\right)=U_{C}\left(g\right)$.
\end{claim}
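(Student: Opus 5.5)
This claim follows almost immediately from the definition $U_{C}\coloneqq U-\tilde{U}_{I}$ in \eqref{Uc-1}, combined with \claimref{gamma1gamma2}. The plan is to show that $\tilde{U}_{I}$ is constant on each of the two classes $\mathcal{F}_{P}$ and $\mathcal{F}_{I}$. Once that is established, indifference within a class transfers directly to equality of $U_{C}$.

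\textbf{Constant acts.} Take $p,q\in\mathcal{F}_{P}$ with $p\sim q$.
\begin{enumerate}
\item Property 1 of \lemref{U-and-affine-on-prop} gives $U\left(p\right)=U\left(q\right)$.
\item \claimref{gamma1gamma2} gives $\tilde{U}_{I}\left(p\right)=\tilde{U}_{I}\left(q\right)=\gamma_{1}$. The computation there uses only that $p\left(x\mid\omega\right)=p\left(x\right)$ is state-independent, so that $p\left(x\right)\geq0$ factors out of the supremum over $\tilde{A}$, and that $\sum_{x}p\left(x\right)=1$.
\item Subtracting these equalities gives $U_{C}\left(p\right)=U\left(p\right)-\gamma_{1}=U\left(q\right)-\gamma_{1}=U_{C}\left(q\right)$.
\end{enumerate}

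\textbf{Informative acts.} Take $f,g\in\mathcal{F}_{I}$ with $f\sim g$. Again $U\left(f\right)=U\left(g\right)$. By the second part of \claimref{gamma1gamma2}, $\tilde{U}_{I}\left(f\right)=\tilde{U}_{I}\left(g\right)=\gamma_{2}$. The reason is that each prize $x$ in the support of an informative act has positive probability in exactly one state $\omega$. The inner sum $\sum_{\omega}I\left(x\mid\omega\right)v\left(\tilde{g},\omega\right)$ therefore collapses to a single term, and the supremum factors out as $I\left(x\mid\omega\right)\sup_{\tilde{g}}v\left(\tilde{g},\omega\right)$. Summing over the prizes attached to each state then yields $\sum_{\omega}\sup_{\tilde{g}}v\left(\tilde{g},\omega\right)=\gamma_{2}$. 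Hence $U_{C}\left(f\right)=U_{C}\left(g\right)$.

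\textbf{Main obstacle: finiteness of $\gamma_{1}$ and $\gamma_{2}$.} The only delicate point is ensuring that $\gamma_{1}$ and $\gamma_{2}$ are finite real numbers, so that the subtraction is legitimate. This matters because $v$ may take the value $-\infty$ and $\tilde{A}$ may be infinite, so the suprema in \eqref{VI} could a priori be infinite.
\begin{enumerate}
\item \textbf{$\gamma_{2}$.} It appears as $\sum_{\omega}r\left(\omega\right)\sup_{g}v\left(g,\omega\right)$ in \lemref{menu representation}, evaluated at $r=\boldsymbol{1}_{N}$. Since $\bar{v}$ is real-valued, this term must be finite. I would check this by taking $r$ equal to each unit vector.
\item \textbf{$\gamma_{1}$.} By \lemref{menu representation}, $\gamma_{1}=\gamma_{2}-\bar{v}\left(\boldsymbol{1}_{N}\right)$, which is therefore finite.
\end{enumerate}
With both constants real, $\tilde{U}_{I}$ is real-valued on both classes and the argument above is complete.
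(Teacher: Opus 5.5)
Your proposal is correct and follows the paper's own proof: the paper also derives the claim directly from property~1 of Lemma~\ref{lem:U-and-affine-on-prop}, the constancy of $\tilde{U}_{I}$ on $\mathcal{F}_{P}$ (value $\gamma_{1}$) and on $\mathcal{F}_{I}$ (value $\gamma_{2}$) from Claim~\ref{claim:gamma1gamma2}, and the definition (\ref{eq:Uc-1}) of $U_{C}$. Your extra check that $\gamma_{1}$ and $\gamma_{2}$ are finite, using the unit vectors and $r=\boldsymbol{1}_{N}$ in the menu-representation lemma, is sound and makes explicit a point the paper leaves implicit.
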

\begin{proof}
The immediate consequences of \lemref{U-and-affine-on-prop}, \claimref{gamma1gamma2},
and \eqref{Uc-1}.
\end{proof}

\subsubsection*{Step 4: State-dependent utility $\bar{u}$}

Pick $N$ distinct prizes to form an enumerated set of benchmark prizes
$B\coloneqq\left\{ b_{1},...,b_{N}\right\} \subset X$. Let $\mathcal{F}_{B}\coloneqq\left\{ f\in\mathcal{F}:\text{supp}\left(f\right)\cap B=\emptyset\right\} $.
Consider the \emph{benchmark act} $f_{0}\in\mathcal{F}$ such that
$f_{0}\left(b_{i}\mid\omega_{i}\right)=1$ for all $i\in\left\{ 1,...,N\right\} $.
For any $x\in X\setminus B$ and $\omega_{i}\in\Omega$, denote by
$f_{x,\omega_{i}}$ the\emph{ statewise simple act} satisfying $f\left(x\mid\omega_{i}\right)=1$
and $f\left(b_{j}\mid\omega_{j}\right)=1$ for all $j\in\left\{ 1,...,N\right\} \backslash\left\{ i\right\} $.
For any $x\in X\setminus B$ and $\omega_{i}\in\Omega$, let
\begin{equation}
\bar{u}\left(x,\omega_{i}\right)\coloneqq U\left(f_{x,\omega_{i}}\right)-\frac{N-1}{N}U\left(f_{0}\right)-\frac{1}{N}\gamma_{2}.\label{eq:ubar}
\end{equation}

\begin{claim}
\label{claim:uc for simple act}For any simple act $f^{s}\in\mathcal{F}_{S}\cap\mathcal{F}_{B}$,
$U_{C}\left(f^{s}\right)=\sum_{\omega\in\Omega}\sum_{z\in\text{supp}\left(f\right)}f^{s}\left(z\mid\omega\right)\bar{u}\left(z,\omega\right)$.
\end{claim}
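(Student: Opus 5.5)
Since $f^{s}$ is a simple act, it is fully informative, so $f^{s}\in\mathcal{F}_{I}$. \claimref{gamma1gamma2} then gives $\tilde{U}_{I}(f^{s})=\gamma_{2}$, and by \eqref{Uc-1} $U_{C}(f^{s})=U(f^{s})-\gamma_{2}$. Write $f^{s}(x_{i}\mid\omega_{i})=1$ with distinct $x_{i}\notin B$. The right-hand side of the claim is then $\sum_{i}\bar{u}(x_{i},\omega_{i})$. Substituting \eqref{ubar}, the statement reduces to the identity
\[
U(f^{s})=\sum_{i=1}^{N}U(f_{x_{i},\omega_{i}})-(N-1)U(f_{0}),
\]
because the $\gamma_{2}$ terms cancel: $-\gamma_{2}$ on the left against $-N\cdot\frac{1}{N}\gamma_{2}$ on the right. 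So the plan is to prove this additive-separability identity across states for simple acts using the mixture-linearity of $U$ in \lemref{U-and-affine-on-prop}.

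To get it, I would compare two equal-weight mixtures of $N$ acts each. The first is
\[
M_{1}=\tfrac{1}{N}f^{s}+\tfrac{N-1}{N}f_{0},
\]
and the second is
\[
M_{2}=\tfrac{1}{N}\sum_{i}f_{x_{i},\omega_{i}}.
\]
Under state $\omega_{i}$, $M_{2}$ gives $x_{i}$ with probability $1/N$ and $b_{i}$ with probability $(N-1)/N$, which is exactly what $M_{1}$ gives. Hence $M_{1}=M_{2}$ as acts. It then remains to evaluate $U$ on each side linearly.

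For $M_{1}$, the supports of $f^{s}$ and $f_{0}$ are disjoint because $f^{s}\in\mathcal{F}_{B}$, so property 3 of \lemref{U-and-affine-on-prop} gives
\[
U(M_{1})=\tfrac{1}{N}U(f^{s})+\tfrac{N-1}{N}U(f_{0}).
\]
For $M_{2}$, I would build the mixture inductively. Set $g_{k}=\frac{1}{k}\sum_{i\le k}f_{x_{i},\omega_{i}}$ and write $g_{k+1}=\frac{k}{k+1}g_{k}+\frac{1}{k+1}f_{x_{k+1},\omega_{k+1}}$. At each step I would check the hypothesis of property 2, namely that $(g_{k})_{z}\propto(f_{x_{k+1},\omega_{k+1}})_{z}$ for every prize $z$ in the support. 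Every prize in either act is paid in only one state: $x_{i}$ and $b_{i}$ are both tied to $\omega_{i}$, and the $x$'s and $b$'s are all distinct. So each likelihood vector $(\cdot)_{z}$ is a multiple of the unit vector $e_{\omega(z)}$, and these vectors are proportional in both acts whenever $z$ is shared. Since $\propto$ allows $k=0$, the case where $z$ is absent from one act is also covered. The condition therefore holds, and iterating gives $U(M_{2})=\frac{1}{N}\sum_{i}U(f_{x_{i},\omega_{i}})$. Setting $U(M_{1})=U(M_{2})$ and multiplying by $N$ yields the identity.

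The main obstacle is verifying the proportionality hypothesis of \lemref{U-and-affine-on-prop} at each inductive step. This is where the statewise structure of simple acts is essential. One subtlety is the direction of the condition in property 2: it quantifies over $\text{supp}(f)$, so it must be checked for the correct argument, although here it holds symmetrically. If the direction caused trouble, I would instead apply property 3 after a relabelling trick, but I expect the direct proportionality check to suffice.
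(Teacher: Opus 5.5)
Your proof is correct and follows the paper's argument: the paper uses the same act identity $\frac{1}{N}f^{s}+\frac{N-1}{N}f_{0}=\frac{1}{N}\sum_{i}f_{x_{i},\omega_{i}}$, the mixture-linearity of $U$, and substitution of the definition of $\bar{u}$ together with $\tilde{U}_{I}(f^{s})=\gamma_{2}$. Your induction with an explicit proportionality check is in fact the sounder justification for the right-hand mixture. The paper instead asserts that the supports of the $f_{x_{i},\omega_{i}}$ are pairwise disjoint, which fails for $N\geq3$: $f_{x_{i},\omega_{i}}$ and $f_{x_{j},\omega_{j}}$ share the benchmark prizes $b_{k}$ for $k\notin\{i,j\}$.
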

\begin{proof}
Take a qualifying $f^{s}$ and denote the prize it yields in state
$\omega_{i}$ (with probability 1) by $x_{i}$. Observe for later
that $\bar{u}\left(x_{i},\omega_{i}\right)=1\cdot\bar{u}\left(x_{i},\omega_{i}\right)=f^{s}\left(x_{i}\mid\omega_{i}\right)\bar{u}\left(x_{i},\omega_{i}\right)=\sum_{z\in\text{supp}\left(f\right)}f^{s}\left(z\mid\omega_{i}\right)\bar{u}\left(z,\omega_{i}\right)$
for all $i$. Observe that $\frac{1}{N}f^{s}+\frac{N-1}{N}f_{0}=\oplus_{i=\left\{ 1,...,N\right\} }\frac{1}{N}f_{x_{i},\omega_{i}}$.
Because $f^{s}\propto f_{0}$ and $\text{supp}\left(f_{x_{i},\omega_{i}}\right)\cap\text{supp}\left(f_{x_{j},\omega_{j}}\right)=\emptyset$
for all distinct $i,j$, \lemref{U-and-affine-on-prop} gives 
\begin{equation}
\frac{1}{N}U\left(f^{s}\right)+\frac{N-1}{N}U\left(f_{0}\right)=\frac{1}{N}\sum_{i\in\left\{ 1,...,N\right\} }U\left(f_{x_{i},\omega_{i}}\right).\label{eq:simple}
\end{equation}
\eqref{ubar} gives $\sum_{i=\left\{ 1,...,N\right\} }U\left(f_{x_{i},\omega_{i}}\right)=\sum_{i=\left\{ 1,...,N\right\} }\bar{u}\left(x_{i},\omega_{i}\right)+\left(N-1\right)U\left(f_{0}\right)+\gamma_{2}$,
which can be used to simplify \eqref{simple} into $U\left(f^{s}\right)-\tilde{U}_{I}\left(f^{s}\right)=\sum_{i\in\left\{ 1,...,N\right\} }\bar{u}\left(x_{i},\omega_{i}\right)=\sum_{\omega\in\Omega}\sum_{z\in\text{supp}\left(f\right)}f^{s}\left(z\mid\omega\right)\bar{u}\left(z,\omega\right)$,
where $\tilde{U}_{I}\left(f^{s}\right)=\gamma_{2}$ due to \claimref{gamma1gamma2}
and the second equality is due to an earlier observation in the proof.
Since $U_{C}\left(f^{s}\right)=U\left(f^{s}\right)-\tilde{U}_{I}\left(f^{s}\right)$
(by \eqref{Uc-1}), we are done.
\end{proof}
\begin{claim}
\label{claim:uc for informative act}For any informative act $I\in\mathcal{F}_{I}\cap\mathcal{F}_{B}$,
$U_{C}\left(I\right)=\sum_{\omega\in\Omega}\sum_{z\in\text{supp}\left(I\right)}I\left(z\mid\omega\right)\bar{u}\left(z,\omega\right)$.
\end{claim}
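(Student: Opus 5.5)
Since $\tilde{U}_{I}(I)=\gamma_{2}$ for every informative act (\claimref{gamma1gamma2}), it suffices to show
\[
U(I)=\sum_{\omega}\sum_{z}I(z\mid\omega)\bar{u}(z,\omega)+\gamma_{2}.
\]
The plan is to write an informative act as a mixture of simple acts whose prizes are state-exclusive, and then pass affinity of $U$ through the mixture using \lemref{U-and-affine-on-prop}. Once that is done, \claimref{uc for simple act} finishes the job.

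First, for each $\omega_{i}$ let $X_{\omega_{i}}$ be the prizes of $I$ arising under $\omega_{i}$. These sets are pairwise disjoint because $I$ is informative, and $I(\cdot\mid\omega_{i})$ is a lottery on $X_{\omega_{i}}$. I would then decompose $I=\sum_{k}a_{k}f^{s,k}$, with $f^{s,k}\in\mathcal{F}_{S}\cap\mathcal{F}_{B}$, by a standard ``product'' or staircase construction. Concretely, I would choose weights $a_{k}\ge0$ with $\sum_k a_k=1$ and simple acts that pick, in each state $\omega_{i}$, some prize from $X_{\omega_{i}}$, so that the state-$\omega_{i}$ marginal of $\sum_{k}a_{k}f^{s,k}$ equals $I(\cdot\mid\omega_{i})$. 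For example, one can take the independent product of the lotteries $I(\cdot\mid\omega_{i})$ across states. Each such $f^{s,k}$ is a simple act, since the prizes in different states lie in the disjoint sets $X_{\omega_{i}}$ and are therefore distinct.

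Second, I would prove $U(\sum_{k}a_{k}f^{s,k})=\sum_{k}a_{k}U(f^{s,k})$ by induction on the number of components. The key observation is that for any two acts $g,h$ built only from prizes in $X_{\omega_{i}}$ appearing in state $\omega_{i}$, every prize $x$ in the support has $g_{x}$ and $h_{x}$ supported on the single coordinate of its state. Hence $g_{x}\propto h_{x}$ whenever $x\in\text{supp}(g)$ (with $h_{x}=\boldsymbol{0}$ allowed, $k=0$). This is exactly the hypothesis of property 2 of \lemref{U-and-affine-on-prop}. Writing the mixture as $a_{1}f^{s,1}+(1-a_{1})\big(\sum_{k\ge2}\tfrac{a_{k}}{1-a_{1}}f^{s,k}\big)$, with the bracketed act again of this ``state-exclusive'' type, affinity applies at each step. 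The one point that needs care is that the proportionality condition must be checked in the direction the lemma requires, namely over the support of the first argument. With both acts state-exclusive, however, it holds in either direction. I expect this verification to be the main, though mild, obstacle.

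Third, I would combine the pieces. By \claimref{uc for simple act} and \claimref{gamma1gamma2}, $U(f^{s,k})=\sum_{\omega}\sum_{z}f^{s,k}(z\mid\omega)\bar{u}(z,\omega)+\gamma_{2}$. Summing with the weights $a_{k}$ and using linearity of the right-hand side in the state-prize probabilities gives
\[
U(I)=\sum_{\omega}\sum_{z}I(z\mid\omega)\bar{u}(z,\omega)+\gamma_{2}.
\]
Subtracting $\tilde{U}_{I}(I)=\gamma_{2}$ via \eqref{Uc-1} then yields the claim.
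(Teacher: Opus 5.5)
Your proposal is correct and is essentially the paper's proof. The paper likewise writes $I$ as a convex combination of the simple acts in $S(I)$, i.e.\ state-by-state selections from $\text{supp}(I(\omega_i))$. It then uses their pairwise proportionality with \lemref{U-and-affine-on-prop} to get $U(I)=\sum a_f U(f)$, passes to $U_C$ because $\tilde{U}_I=\gamma_2$ on informative acts by \claimref{gamma1gamma2}, and concludes with \claimref{{uc for simple act}}. Your product-weight decomposition and your induction extending the two-act affinity of \lemref{U-and-affine-on-prop} to a finite mixture simply spell out steps the paper leaves implicit.
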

\begin{proof}
For any informative act $I\in\mathcal{F}_{I}\cap\mathcal{F}_{B}$,
define its corresponding simple act set
\[
S\left(I\right)\coloneqq\left\{ f^{s}\in\mathcal{F}_{S}\cap\mathcal{F}_{B}:f^{s}\left(\omega_{i}\right)\in\text{supp}\left(f\left(\omega_{i}\right)\right),\forall i\in\left\{ 1,...,N\right\} \right\} .
\]
Observe that $I$ is a convex combination of the acts in $S\left(I\right)$,
i.e., $I=\oplus_{f\in S\left(I\right)}a_{f}f$ where $a_{f}\in\left[0,1\right]$
for all $f\in S\left(I\right)$ and $\sum_{f\in S\left(I\right)}a_{f}=1$.
Since $f\propto g$ for any $f,g\in S\left(I\right)$, by \lemref{U-and-affine-on-prop},
we have $U\left(I\right)=\sum_{f\in S\left(I\right)}a_{f}U\left(f\right)$.
\claimref{gamma1gamma2} gives $\tilde{U}_{I}\left(f\right)=\tilde{U}_{I}\left(I\right)$
for all $f\in S\left(I\right)$, so $U_{C}\left(I\right)=\sum_{f\in S\left(I\right)}a_{f}U_{C}\left(f\right)$.
Finally,
\begin{align*}
U_{C}\left(I\right) & =\sum_{f\in S\left(I\right)}a_{f}U_{C}\left(f\right)=\sum_{f\in S\left(I\right)}\sum_{z\in\text{supp}\left(f\right)}\sum_{\omega\in\Omega}a_{f}f\left(z\mid\omega\right)\bar{u}\left(z,\omega\right)\\
 & =\sum_{z\in\text{supp}\left(I\right)}\sum_{\omega\in\Omega}I\left(z\mid\omega\right)\bar{u}\left(z,\omega\right),
\end{align*}
where the second equality is from \claimref{{uc for simple act}}
and the third equality is due to $I\left(x\mid\omega\right)=\sum_{f\in S\left(I\right)}a_{f}f\left(x\mid\omega\right)$
for any $x\in\text{supp}\left(f\right)$ and $\omega\in\Omega$.
\end{proof}

\subsubsection*{Step 5: State-independent utility $u$}
\begin{claim}
\label{claim:uc_seu_constant}There exists $u:X\rightarrow R$ such
that $U_{C}\left(p\right)=\sum_{z\in\text{supp}\left(p\right)}u\left(z\right)p\left(z\right)$
for all constant act $p\in\mathcal{F}_{P}$.
\end{claim}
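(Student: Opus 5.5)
We will show that on constant acts, $U_{C}$ coincides with $U$ up to an additive constant, and then take $u$ from the mixture-linear representation $\bar{U}$ built in the proof of \lemref{U-and-affine-on-prop}.

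First, by \claimref{gamma1gamma2}, $\tilde{U}_{I}(p)=\gamma_{1}$ for every constant act $p\in\mathcal{F}_{P}$. Hence, by \eqref{Uc-1},
\[
U_{C}(p)=U(p)-\gamma_{1}.
\]
Next, any two constant acts $p,q$ satisfy $p_{x}\propto q_{x}$ for all $x$, since every $p_{x}$ is a multiple of $\boldsymbol{1}_{N}$. Part 2 of \lemref{U-and-affine-on-prop} therefore makes $U$ mixture-affine on $\mathcal{F}_{P}$. Every simple lottery $p$ is the finite mixture $p=\sum_{z\in\text{supp}(p)}p(z)\delta_{z}$ of degenerate acts, and all intermediate partial mixtures are again constant acts. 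Applying this affinity inductively on the size of the support gives
\[
U(p)=\sum_{z}p(z)\,U(\delta_{z}).
\]

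Define $u(z):=U(\delta_{z})-\gamma_{1}$ for every $z\in X$. Since $\sum_{z}p(z)=1$,
\[
U_{C}(p)=\sum_{z}p(z)U(\delta_{z})-\gamma_{1}=\sum_{z\in\text{supp}(p)}p(z)u(z),
\]
which is the claim. For later steps we also record that $u$ is strictly increasing. This follows because $U$ represents $\succsim$ by part 1 of \lemref{U-and-affine-on-prop}, and Monotonicity in \axmref{standard} gives $\delta_{x}\succ\delta_{y}$ whenever $x>y$.

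The main obstacle is the induction over the support, which must check that $U$'s affinity applies at each step. It does, because the proportionality condition in \lemref{U-and-affine-on-prop} holds trivially for every pair of constant acts, so each partial mixture can be split off one degenerate component at a time. A second point to check is that $\gamma_{1}$ is finite, i.e.\ that the sup defining it is not $-\infty$. This holds because $\tilde{U}_{I}(p)=\gamma_{1}$ together with $U_{C}=U-\tilde{U}_{I}$ would otherwise make $U_{C}$ undefined, while $U$ is real-valued. If finiteness were in doubt, I would instead normalize through $\bar{v}(\boldsymbol{1}_{N})=0$: the vector $\boldsymbol{1}_{N}$ is proportional to a degenerate posterior, so $\bar{v}$ vanishes there, and the representation in \lemref{menu representation} then forces $\gamma_{1}=\gamma_{2}$, which is finite.
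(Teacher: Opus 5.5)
Your main argument is correct and is essentially the paper's proof. By Claim~\ref{claim:gamma1gamma2}, $\tilde U_I\equiv\gamma_1$ on $\mathcal{F}_P$, and by Lemma~\ref{lem:U-and-affine-on-prop} $U$ is mixture-affine on constant acts. So $U_C=U-\gamma_1$ is affine there and takes the expected-utility form with $u(z)=U_C(\delta_z)$; your induction over the support spells out the step the paper calls ``well-known.'' (Your constant $\gamma_1$ is the right one. The paper's proof writes $\tilde U_I(p)$ in the form $\sum_\omega\max_{\tilde g}(\cdot)$, which is $\gamma_2$, but only constancy matters.)

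The fallback in your last paragraph is false, however, and you should drop it.
\begin{itemize}
\item The vector $\boldsymbol{1}_N$ is proportional to the \emph{uniform} belief, not to a degenerate one. $\bar v$ is only known to vanish on vectors supported on a single state.
\item In fact $\bar v(\boldsymbol{1}_N)=U(f^*_x)-U(\delta_x)$ with $f^*_x\in F^*(\delta_x,x)$. This is the gain from splitting a sure prize into non-overlapping, state-dependent, indifferent lotteries, i.e.\ the value of full information relative to none.
\item The menu representation gives $\bar v(\boldsymbol{1}_N)=\gamma_2-\gamma_1$; this is exactly the identity used in Step~5 to derive $\sum_\omega a_\omega=1$. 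It is strictly positive whenever information has value; in the AD model it is proportional to $u(\iota)(1-\mathbb{N}(\mu))$.
\item So concluding $\gamma_1=\gamma_2$ would collapse the model to SEU.
\end{itemize}

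Your primary justification of finiteness (``otherwise $U_C$ would be undefined'') is also not an argument. It is the same standing assumption the paper makes silently. If you want an actual argument, evaluate the menu representation at $r=e_\omega$, the unit vector of state $\omega$. This gives $0=\bar v(e_\omega)=\sup_g v(g,\omega)-\sup_g v(g,\omega)$, which is meaningful only if each $\sup_g v(g,\omega)$ is finite, so $\gamma_2\in\mathbb{R}$. Then $\gamma_1=\gamma_2-\bar v(\boldsymbol{1}_N)$ is finite, because $\bar v(\boldsymbol{1}_N)$ is a difference of two values of the real-valued $U$.
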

\begin{proof}
By the definition of $\tilde{U}_{I}$, $\tilde{U}_{I}\left(p\right)=\sum_{\omega\in\Omega}\left[\max_{\tilde{g}\in\tilde{A}}u\left(\tilde{g},\omega\right)\right]$
for all $p\in\mathcal{F}_{P}$. By \lemref{U-and-affine-on-prop},
$U$ is affine on the set of all constant acts $\mathcal{F}_{P}$,
then $U_{C}\left(p\right)=U\left(p\right)-\tilde{U}_{I}\left(p\right)$
is affine on $\mathcal{F}_{P}$. It is well-known that there exists
$u:X\rightarrow R$ such that $U_{C}\left(p\right)=\sum_{z\in\text{supp}\left(p\right)}u\left(z\right)p\left(z\right)$
for all $p\in\mathcal{F}_{P}$. Note that $u\left(x\right)=U_{C}\left(\delta_{x}\right)$
for all $x\in X$.
\end{proof}
\begin{claim}
\label{claim:alpha-eta}For any $\omega\in\Omega$, there exist $a_{\omega}\geq0$
and $\eta_{\omega}\in\mathbb{R}$ such that $\bar{u}\left(z,\omega\right)=a_{\omega}u\left(z\right)+\eta_{\omega}$
for all $z\in X\setminus B$.
\end{claim}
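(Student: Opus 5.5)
Fix a state $\omega_i$. The plan is to show that $z\mapsto\bar{u}\left(z,\omega_i\right)$ and $u$ are both expected-utility indices for the same ordering of lotteries, so they must be positive affine transformations of one another. By \eqref{ubar}, $\bar{u}\left(\cdot,\omega_i\right)$ differs from $U\left(f_{\cdot,\omega_i}\right)$ only by a constant. So it suffices to show that the map $x\mapsto U\left(f_{x,\omega_i}\right)$ on $X\setminus B$ extends to lotteries $p$ with $\text{supp}\left(p\right)\cap B=\emptyset$ as an affine functional that ranks lotteries as $U_{C}$ does on constant acts.

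\textbf{Affine extension.} For such a $p$, let $f_{p,\omega_i}$ be the act giving $p$ in $\omega_i$ and $\delta_{b_j}$ in $\omega_j$ for $j\neq i$. This act is informative: every prize in $\text{supp}\left(p\right)$ occurs only in $\omega_i$, and each $b_j$ occurs only in $\omega_j$. Moreover, $f_{p,\omega_i}=\sum_{x}p\left(x\right)f_{x,\omega_i}$, and all the $f_{x,\omega_i}$ are mutually proportional in the sense $\propto$ (for each prize, the likelihood vectors are proportional or zero). Part 2 of \lemref{U-and-affine-on-prop} therefore gives
\[
U\left(f_{p,\omega_i}\right)=\sum_{x}p\left(x\right)U\left(f_{x,\omega_i}\right).
\]
Define $W_i\left(p\right)\coloneqq\sum_x p\left(x\right)\bar{u}\left(x,\omega_i\right)$. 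Then $W_i\left(p\right)=U\left(f_{p,\omega_i}\right)-\text{const}$, and by \claimref{uc for informative act} it is also $U_C\left(f_{p,\omega_i}\right)$ minus the fixed terms coming from the $b_j$.

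\textbf{Same ranking.} Take lotteries $p,q$ avoiding $B$. I want to show $u$-expected utility ranks them as $W_i$ does. By \claimref{uc_seu_constant}, $\sum p\,u\ge\sum q\,u$ if and only if $p\succsim q$ as constant acts. Next I use \axmref{prize_independence}, whose first bullet carries no support precondition. Apply it to $f_{q,\omega_i}$ restricted to state $\omega_i$: if $p\succsim q$, replace the state-$\omega_i$ lottery prize by prize. Since these prizes do not appear in other states, the relevant replacements satisfy the non-overlap conditions, and the ranking is preserved. The cleaner route is to reduce to degenerate comparisons. For prizes $x,y$ with $\delta_x\succsim\delta_y$, the act $f_{x,\omega_i}$ is obtained from $f_{y,\omega_i}$ by replacing $y$ with $\delta_x$, where $x\notin\text{supp}\left(f_{y,\omega_i}\right)$. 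The second bullet then gives $f_{x,\omega_i}\succsim f_{y,\omega_i}$, so $\bar{u}\left(\cdot,\omega_i\right)$ is weakly increasing in the $\succsim$-order of prizes. To get full affinity I use indifferences: for $\delta_x\sim r$ with $r$ a lottery whose support avoids $\text{supp}\left(f_{x,\omega_i}\right)$, both bullets give $f_{x,\omega_i}\sim f_{x,\omega_i}xr=f_{r,\omega_i}$. Hence $\bar{u}\left(x,\omega_i\right)=W_i\left(r\right)$ whenever $u\left(x\right)=\sum r\,u$.

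\textbf{Conclude by a two-point calibration.} Fix two prizes $\underline{x}<\bar{x}$ outside $B$. For any $z$ with $u\left(z\right)$ between $u\left(\underline{x}\right)$ and $u\left(\bar{x}\right)$, write $u\left(z\right)=t\,u\left(\bar{x}\right)+\left(1-t\right)u\left(\underline{x}\right)$, so that $\delta_z\sim t\delta_{\bar{x}}+\left(1-t\right)\delta_{\underline{x}}$. The previous step then gives $\bar{u}\left(z,\omega_i\right)=t\,\bar{u}\left(\bar{x},\omega_i\right)+\left(1-t\right)\bar{u}\left(\underline{x},\omega_i\right)$. This is affine in $u\left(z\right)$, with slope
\[
a_{\omega_i}=\frac{\bar{u}\left(\bar{x},\omega_i\right)-\bar{u}\left(\underline{x},\omega_i\right)}{u\left(\bar{x}\right)-u\left(\underline{x}\right)}\geq0,
\]
which is nonnegative by the monotonicity just shown. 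For $z$ outside this range, I use the standard extension: apply the same argument with $z$ as an endpoint and one of the calibration prizes as an interior point, and the coefficients agree. Avoiding the benchmark prizes $B$ and the supports is always possible because $X=\mathbb{R}$.

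The main obstacle is the indifference $f_{x,\omega_i}\sim f_{r,\omega_i}$ when $\text{supp}\left(r\right)$ contains $\underline{x}$ or $\bar{x}$. If the calibration prizes themselves collide with supports, the second bullet's non-overlap condition fails. I would handle this by first showing the affine relation for pairs with disjoint supports, then chaining through auxiliary fresh prizes with the same $u$-values, which exist by continuity and monotonicity.
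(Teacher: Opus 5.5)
Your proof is correct and follows essentially the paper's route: Prize Independence makes the act with $z$ in state $\omega_i$ indifferent to the one with a $u$-calibrated lottery over two other prizes, and affinity of $U$ on proportional mixtures (\lemref{U-and-affine-on-prop}) transfers this betweenness relation to $\bar{u}\left(\cdot,\omega_i\right)$. Working with the defining acts $f_{x,\omega_i}$ directly, rather than with a generic $f^{s}\in\mathcal{F}_{S}\cap\mathcal{F}_{B}$ routed through $U_{C}$ and \claimref{{uc for simple act}}, is a harmless simplification, and your second-bullet monotonicity argument supplies the sign $a_{\omega_i}\geq0$ that the paper's proof only asserts.

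Two minor points. First, the collision you flag as the main obstacle never arises: $\text{supp}\left(f_{z,\omega_i}\right)=\left\{ z\right\} \cup\left\{ b_{j}\right\} _{j\neq i}$ is automatically disjoint from $\left\{ \underline{x},\bar{x}\right\} \subset X\setminus B$, and the same holds in the extension step. Second, your aside invoking \claimref{{uc for informative act}} for $f_{p,\omega_i}$ does not apply, since that act contains the $b_j$ and so lies outside $\mathcal{F}_{B}$; nothing in your argument uses it, though.
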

\begin{proof}
Fix any $\omega\in\Omega$. Fix any $z\in X\setminus B$. Suppose
$\delta_{x}\succ\delta_{z}\succ\delta_{y}$ for some $x,y\in X\backslash B$.
By Continuity in \axmref{standard}, there exists $a\in\left(0,1\right)$
such that $a\delta_{x}+\left(1-a\right)\delta_{y}\sim\delta_{z}$,
let $q\coloneqq a\delta_{x}+\left(1-a\right)\delta_{y}$. Fix a simple
act $f^{s}\in\mathcal{F}_{S}\cap\mathcal{F}_{B}$ such that $f^{s}\left(\omega\right)=z$
and $\left\{ \delta_{x},\delta_{y}\right\} \cap\text{supp}\left(f^{s}\right)=\emptyset$.
The existence of $f^{s}$ is guaranteed by $\left|X\backslash B\right|=\infty$
and $\left|\Omega\right|<\infty$. and \axmref{prize_independence}
gives $f^{s}zq\sim f^{s}$, then \lemref{U-and-affine-on-prop} gives
$U\left(f^{s}zq\right)=U\left(f^{s}\right)$. Consider $f_{1}\in\mathcal{F}_{S}\cap\mathcal{F}_{B}$
where $f_{1}\left(\omega\right)=x$ and $f_{1}\left(\omega'\right)=f^{s}\left(\omega'\right)$
for all $\omega'\in\Omega\backslash\left\{ \omega\right\} $. Consider
$f_{2}\in\mathcal{\mathcal{F}_{S}\cap F}_{B}$ where $f_{2}\left(\omega\right)=y$
and $f_{2}\left(\omega'\right)=f^{s}\left(\omega'\right)$ for all
$\omega'\in\Omega\backslash\left\{ \omega\right\} $. Then $f_{1}\propto f_{2}$
and $f^{s}zq=af_{1}+\left(1-a\right)f_{2}$. \lemref{U-and-affine-on-prop}
gives $aU\left(f_{1}\right)+\left(1-a\right)U\left(f_{2}\right)=U\left(f^{s}zq\right)=U\left(f^{s}\right)$.
Due to $\tilde{U}_{I}\left(f^{s}\right)=\tilde{U}_{I}\left(f_{1}\right)=\tilde{U}_{I}\left(f_{2}\right)$
(\claimref{gamma1gamma2}), we have $aU_{C}\left(f_{1}\right)+\left(1-a\right)U_{C}\left(f_{2}\right)=U_{C}\left(f^{s}\right)$.
By \claimref{{uc for simple act}}, after simplification, we obtain
\begin{equation}
a\bar{u}\left(x,\omega\right)+\left(1-a\right)\bar{u}\left(y,\omega\right)=\bar{u}\left(z,\omega\right).\label{eq:sdeu1}
\end{equation}
Since $q\sim z$, \claimref{same_Uc} gives $U_{C}\left(q\right)=U_{C}\left(z\right)$.
\claimref{uc_seu_constant} then gives
\begin{equation}
au\left(x\right)+\left(1-a\right)u\left(y\right)=u\left(z\right).\label{eq:sdeu2}
\end{equation}
Therefore, \eqref{sdeu1} and \eqref{sdeu2} hold for any $x,y,z\in X\setminus B$
such that $\delta_{y}\succ\delta_{z}\succ\delta_{x}$ (with $\left(x,y,z\right)$-dependent
$a$). If $\bar{u}\left(x',\omega\right)=\bar{u}\left(y',\omega\right)$
for all $x',y'\in X\backslash B$, then let $a_{\omega}=0$ and find
$\eta_{\omega}\in\mathbb{R}$ such that $\bar{u}\left(z',\omega\right)=a_{\omega}u\left(z'\right)+\eta_{\omega}$
for all $z'\in X\setminus B$. If $\bar{u}\left(x',\omega\right)\neq\bar{u}\left(y',\omega\right)$
for some $x',y'\in X\backslash B$, then there exist $a_{\omega}>0$
and $\eta_{\omega}\in\mathbb{R}$ such that $\bar{u}\left(z',\omega\right)=a_{\omega}u\left(z'\right)+\eta_{\omega}$
for all $z'\in X\setminus B$.
\end{proof}
\begin{claim}
\label{claim:fullsupport}$a_{\omega}>0$ for all $\omega\in\Omega$.
\end{claim}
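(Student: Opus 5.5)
We argue by contradiction: suppose $a_{\omega}=0$ for some $\omega=\omega_{i}$. \claimref{alpha-eta} then gives $\bar{u}\left(z,\omega_{i}\right)=\eta_{\omega_{i}}$ for every $z\in X\setminus B$, so the $\omega_{i}$-component of $U_{C}$ on simple acts does not depend on the prize.

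The plan has three steps.

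\textbf{Step 1 (indifference across $\omega_{i}$-prizes).} Take two prizes $x>y$ in $X\setminus B$. Fix a simple act $f^{s}\in\mathcal{F}_{S}\cap\mathcal{F}_{B}$ with $f^{s}\left(\omega_{i}\right)=\delta_{y}$ and with $x$ outside its support. Let $f_{1}$ be $f^{s}$ with $\delta_{y}$ replaced by $\delta_{x}$ in state $\omega_{i}$. Then:
\begin{itemize}
\item both $f^{s}$ and $f_{1}$ are simple acts, so \claimref{gamma1gamma2} gives $\tilde{U}_{I}\left(f^{s}\right)=\tilde{U}_{I}\left(f_{1}\right)=\gamma_{2}$;
\item \claimref{uc for simple act} gives $U_{C}\left(f_{1}\right)-U_{C}\left(f^{s}\right)=\bar{u}\left(x,\omega_{i}\right)-\bar{u}\left(y,\omega_{i}\right)=0$.
\end{itemize}
Hence $U\left(f_{1}\right)=U\left(f^{s}\right)$, that is, $f_{1}\sim f^{s}$.

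\textbf{Step 2 (contradiction with Prize Independence).} The act $f_{1}$ equals $f^{s}y\delta_{x}$: $y$ appears only in state $\omega_{i}$, and $\text{supp}\left(\delta_{x}\right)\cap\text{supp}\left(f^{s}\right)=\emptyset$. Monotonicity in \axmref{standard} gives $\delta_{x}\succ\delta_{y}$. The strict part of the second bullet of \axmref{prize_independence} then gives $f_{1}\succ f^{s}$, contradicting Step 1. Therefore $a_{\omega_{i}}>0$.

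\textbf{Step 3 (checks).} The main thing to verify is that $f^{s}y\delta_{x}$ coincides exactly with the statewise replacement $f_{1}$. This holds because $y$ has positive probability only under $\omega_{i}$ in a simple act, so replacing $y$ everywhere is the same as replacing it in state $\omega_{i}$ alone. The other check is that the construction of $f^{s}$ avoids $B$ and $x$, which is possible because $X\setminus B$ is infinite. One should also confirm that \claimref{uc for simple act} applies to both $f^{s}$ and $f_{1}$, i.e., both lie in $\mathcal{F}_{S}\cap\mathcal{F}_{B}$. Once these are in place the argument closes with no further estimates.
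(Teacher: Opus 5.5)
Your proof is correct and follows the same idea as the paper's. If $a_{\omega}=0$, then $\bar{u}(\cdot,\omega)$ is constant on $X\setminus B$; this forces an indifference between two simple acts that differ only in the prize delivered in state $\omega$, which contradicts the strict part of Prize Independence combined with Monotonicity. The only difference is that the paper compares the statewise simple acts $f_{x',\omega}$ and $f_{y',\omega}$, so $U(f_{x',\omega})=U(f_{y',\omega})$ follows directly from the definition of $\bar{u}$, whereas you go through the claim computing $U_{C}$ on $\mathcal{F}_{S}\cap\mathcal{F}_{B}$ and the fact that $\tilde{U}_{I}\equiv\gamma_{2}$ on informative acts.
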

\begin{proof}
Fix any $\omega\in\Omega$. In the last part of proof of \claimref{alpha-eta},
$a_{\omega}=0$ only if $\bar{u}\left(x',\omega\right)=\bar{u}\left(y',\omega\right)$
for all $x',y'\in X\backslash B$. Take any distinct $x',y'\in X\backslash B$.
From the definition of $\bar{u}$ (\eqref{ubar}), $\bar{u}\left(x',\omega\right)=\bar{u}\left(y',\omega\right)$
only if $U\left(f_{x',\omega}\right)=U\left(f_{y',\omega}\right)$.
By \lemref{U-and-affine-on-prop}, this means $f_{x',\omega}\sim f_{y',\omega}$.
Suppose without loss of generality that $x'>y'$, then Monotonicity
in \axmref{standard} gives $x'\succ y'$, then \axmref{prize_independence}
gives $f_{x',\omega}\succ f_{y',\omega}$, a contradiction.
\end{proof}
\begin{claim}
\label{claim:parameter for SEU}If $\left\{ a_{\omega}\right\} _{\omega\in\Omega}$
and $\left\{ \eta_{\omega}\right\} _{\omega\in\Omega}$ satisfy $\bar{u}\left(z,\omega\right)=a_{\omega}u\left(z\right)+\eta_{\omega}$
for all $z\in X\backslash B$ and $\omega\in\Omega$, then $\sum_{\omega}a_{\omega}=1$
and $\sum_{\omega}\eta_{\omega}=0$.
\end{claim}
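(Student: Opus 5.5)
We need to show $\sum_\omega a_\omega=1$ and $\sum_\omega\eta_\omega=0$. The natural tool is to compare an informative act that is also "as constant as possible" with a constant act, using the normalization built into \eqref{ubar}. I would work with the benchmark act and with statewise simple acts directly.

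\textbf{Step 1 ($\sum\eta_\omega$ via the benchmark).} Take any simple act $f^{s}\in\mathcal{F}_S\cap\mathcal{F}_B$ giving prize $x_i$ in $\omega_i$. Summing \eqref{ubar} over $i$ and using \eqref{simple} gives $\sum_i\bar u(x_i,\omega_i)=U(f^s)-\gamma_2=U_C(f^s)$. Now I would choose a simple act whose prizes are all indifferent to a common degenerate prize, but realized as distinct prizes, and compare it with a constant act. Concretely, fix $z\in X\setminus B$ and choose distinct $x_1,\dots,x_N\notin B$. Then use Prize Independence (second bullet) and State-Prize Independence to relate $f^s$ to acts whose values in each state are lotteries indifferent to $z$ with disjoint supports.

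\textbf{Step 2 (key identity).} The cleanest route is to take an informative act $I\in\mathcal{F}_I\cap\mathcal{F}_B$ with $I(\omega_i)=q_i$, where the $q_i$ have pairwise disjoint supports and $q_i\sim\delta_z$. Two computations of $U(I)$ are then available. First, $I$ is a splitting (an element of $F^*(\delta_z,z)$), so by \claimref{Splitting indifference} and the definition of $\bar v$ we get
\[
U(I)=U(\delta_z)+\bar v(\mathbf 1_N).
\]
Second, \claimref{uc for informative act} together with \claimref{gamma1gamma2} gives
\[
U(I)=\gamma_2+\sum_\omega\sum_y q_\omega(y)\bar u(y,\omega)=\gamma_2+\sum_\omega\bigl(a_\omega U_C(q_\omega)+\eta_\omega\bigr).
\]
Since $U_C(q_\omega)=u(z)$ by \claimref{uc_seu_constant} and \claimref{same_Uc}, this equals $\gamma_2+u(z)\sum_\omega a_\omega+\sum_\omega\eta_\omega$. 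On the other side, $U(\delta_z)=u(z)+\gamma_1$, and by \lemref{menu representation} $\bar v(\mathbf 1_N)=\gamma_2-\gamma_1$. Equating the two expressions yields
\[
u(z)\Bigl(1-\sum_\omega a_\omega\Bigr)=\sum_\omega\eta_\omega
\]
for all $z\in X\setminus B$.

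\textbf{Step 3 (conclusion).} Monotonicity makes $u$ nonconstant on $X\setminus B$ (indeed $u(x)=U(\delta_x)-\gamma_1$ is strictly increasing). Choosing two values of $z$ with different $u(z)$ in the identity of Step 2 then forces $\sum_\omega a_\omega=1$, and hence $\sum_\omega\eta_\omega=0$.

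\textbf{Main obstacle.} The delicate part is justifying $U(I)=U(\delta_z)+\bar v(\mathbf 1_N)$: this requires checking that $I$ genuinely belongs to $F^*(\delta_z,z)$ with lotteries avoiding $B$. It also requires confirming that the Frankel–Kamenica form evaluates $\bar v(\mathbf 1_N)$ to exactly $\gamma_2-\gamma_1$, with the same $\max$/$\sup$ conventions used in $\tilde U_I$. If the lottery supports cause trouble, I would instead route through a simple act using Prize Independence to move between degenerate prizes and the indifferent $q_i$.
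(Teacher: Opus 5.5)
Your argument is correct and is essentially the paper's proof. Both replace $\delta_z$ state by state with pairwise disjoint lotteries $q_i\sim\delta_z$ supported outside $B$, and evaluate $U_C$ of the resulting informative act through $\bar u(\cdot,\omega)=a_\omega u(\cdot)+\eta_\omega$. Both then use $U(I)-U(\delta_z)=\bar v(\mathbf 1_N)=\gamma_2-\gamma_1$, which gives $u(z)\sum_\omega a_\omega+\sum_\omega\eta_\omega=u(z)$ for every $z$, and conclude from the nonconstancy of $u$. (The paper gets $\bar v(\mathbf 1_N)=\gamma_2-\gamma_1$ from the summation claim for $\bar v$ and the menu representation; for a one-prize act that is just the definition of $\bar v$.) Your Step 1 is never used and can be dropped, and your ``main obstacle'' only requires choosing the $q_i$ with $z\notin\text{supp}(q_i)$, so that $I$ is a genuine element of $F^*(\delta_z,z)$.
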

\begin{proof}
For any degenerate constant act $x\in\mathcal{F}_{X}$, consider the
split $x_{I}^{*}\in\mathcal{F}_{I}\cap\mathcal{F}_{B}$ such that
for each $\omega_{i}\in\Omega$, $x$ is replaced by $q_{i}\in\mathcal{F}_{P}\cap\mathcal{F}_{B}$
where $q_{i}\sim x$, and $\text{supp}\left(q_{j}\right)\cap\text{supp}\left(q_{k}\right)=\emptyset$
for any distinct $j,k$. The existence of $x_{I}^{*}$ is guaranteed
by the space $X$, Monotonicity (in \axmref{standard}), and Continuity
(in \axmref{standard}). Note that \claimref{same_Uc} gives $U_{C}\left(q_{i}\right)=U_{C}\left(\delta_{x}\right)$
for all $q_{i}$. By \claimref{{uc for simple act}},
\[
U_{C}\left(q_{i}\right)=\sum_{z\in\text{supp}\left(q_{i}\right)}q_{i}\left(z\right)u\left(z\right)=u\left(x\right)=U_{C}\left(\delta_{x}\right).
\]
Meanwhile, since $x_{I}^{*}$ is an informative act, using \claimref{{uc for informative act}},
we have
\begin{align*}
U_{C}\left(x_{I}^{*}\right) & =\sum_{\omega_{i}\in\Omega}\sum_{z\in\text{supp}\left(p_{I}\right)}x_{I}^{*}\left(z\mid\omega_{i}\right)\bar{u}\left(z,\omega_{i}\right)=\sum_{\omega_{i}\in\Omega}\sum_{z\in\text{supp}\left(q_{i}\right)}q_{i}\left(z\right)\bar{u}\left(z,\omega_{i}\right)\\
 & =\sum_{\omega_{i}\in\Omega}\sum_{z\in\text{supp}\left(q_{i}\right)}q_{i}\left(z\right)\left(a_{\omega_{i}}u\left(z\right)+\eta_{\omega_{i}}\right)=\sum_{\omega_{i}\in\Omega}\left(a_{\omega_{i}}u\left(x\right)+\eta_{\omega_{i}}\right)\\
 & =u\left(x\right)\cdot\sum_{\omega_{i}\in\Omega}a_{\omega_{i}}+\sum_{\omega_{i}\in\Omega}\eta_{\omega_{i}}.
\end{align*}
Moreover, from \claimref{{summation of vbar}} and \lemref{{menu representation}},
\begin{align*}
U\left(x_{I}^{*}\right)-U\left(\delta_{x}\right) & =\sum_{\omega\in\Omega}\left[\max_{\tilde{g}\in\tilde{A}}v\left(\tilde{g},\omega\right)\right]-\max_{\tilde{g}\in\tilde{A}}\left[\sum_{\omega\in\Omega}v\left(\tilde{g},\omega\right)\right]\\
U\left(x_{I}^{*}\right)-\sum_{\omega\in\Omega}\left[\max_{\tilde{g}\in\tilde{A}}v\left(\tilde{g},\omega\right)\right] & =U\left(\delta_{x}\right)-\max_{\tilde{g}\in\tilde{A}}\left[\sum_{\omega\in\Omega}v\left(\tilde{g},\omega\right)\right]\\
U\left(x_{I}^{*}\right)-\tilde{U}_{I}\left(x_{I}^{*}\right) & =U\left(\delta_{x}\right)-\tilde{U}_{I}\left(\delta_{x}\right)\\
U_{C}\left(x_{I}^{*}\right) & =U_{C}\left(\delta_{x}\right),
\end{align*}
where the third equality uses \claimref{gamma1gamma2} and the fourth
equality is by definition (\eqref{Uc-1}). Therefore,
\[
u\left(x\right)\cdot\sum_{\omega_{i}\in\Omega}a_{\omega_{i}}+\sum_{\omega_{i}\in\Omega}\eta_{\omega_{i}}=U_{C}\left(x_{I}^{*}\right)=U_{C}\left(\delta_{x}\right)=u\left(x\right),
\]
and this holds for any $x\in X$. By Monotonicity in \axmref{standard},
there exist $x,y\in X$ such that $u\left(x\right)\neq u\left(y\right)$.
So $\sum_{\omega}a_{\omega}=1$ and $\sum_{\omega}\eta_{\omega}=0$.
\end{proof}
Let 
\begin{equation}
\mu\left(\omega\right)=a_{\omega}\label{eq:mu}
\end{equation}
for all $\omega\in\Omega$. Then for any $x\in X\setminus B$, $u\left(x\right)=u\left(x\right)\cdot\sum_{\omega_{i}\in\Omega}a_{\omega_{i}}+\sum_{\omega_{i}\in\Omega}\eta_{\omega_{i}}=\sum_{\omega\in\Omega}\bar{u}\left(x,\omega\right)$.

\subsubsection*{Step 6: Demonstrating that $U_{C}$ is SEU}
\begin{defn}
\label{def:seu}Given any $\psi\in\Delta\left(\Omega\right)$ and
$\phi:X\rightarrow\mathbb{R}$, define the function $SEU_{\psi,\phi}:\mathcal{F}\rightarrow\mathbb{R}$
by 
\[
SEU_{\psi,\phi}\left(f\right)\coloneqq\sum_{\omega\in\Omega}\psi\left(\omega\right)\sum_{x\in\text{supp}\left(f\right)}f\left(x\mid\omega\right)\phi\left(x\right).
\]
\end{defn}
We want to show $U_{C}=SEU_{\mu,u}$ (recall that $U_{C}$ is defined
in \eqref{Uc-1}, $u$ is from \claimref{uc_seu_constant}, and $\mu$
is from \eqref{mu}).
\begin{claim}
\label{claim:informative_act_seu}For any informative act $I\in\mathcal{F}_{I}$,
$U_{C}\left(I\right)=SEU_{\mu,u}\left(I\right)$.
\end{claim}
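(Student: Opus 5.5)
Two cases need handling. If $I\in\mathcal{F}_I\cap\mathcal{F}_B$, the claim follows from \claimref{uc for informative act} together with \claimref{alpha-eta}, \claimref{parameter for SEU} and the definition $\mu(\omega)=a_\omega$ in \eqref{mu}:
\[
U_C(I)=\sum_{\omega}\sum_{z}I(z\mid\omega)\bigl(\mu(\omega)u(z)+\eta_\omega\bigr)=SEU_{\mu,u}(I)+\sum_\omega \eta_\omega\sum_z I(z\mid\omega).
\]
Since $\sum_z I(z\mid\omega)=1$ for every $\omega$, the last term equals $\sum_\omega\eta_\omega=0$. So the whole task reduces to removing the restriction that $I$ avoids the benchmark prizes $B$.

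For a general informative act $I$ whose support meets $B$, I would replace each offending prize by something outside $B$ without changing $U$ or $\tilde U_I$. Take $b\in\text{supp}(I)\cap B$. Since $I$ is informative, $b$ occurs under a single state $\omega_i$. Choose a nondegenerate lottery $q$ with $q\sim\delta_b$, $\text{supp}(q)\cap(B\cup\text{supp}(I))=\emptyset$, built as a mixture of two prizes straddling $b$; Continuity and Monotonicity in \axmref{standard} make this possible because $X=\mathbb{R}$. Then $I' = Ibq$ is again informative, since the new prizes appear only under $\omega_i$. Both parts of \axmref{prize_independence} apply, the second because $q\cap I=\emptyset$, and together they give $I'\sim I$, hence $U(I')=U(I)$ by \lemref{U-and-affine-on-prop}. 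By \claimref{gamma1gamma2}, $\tilde U_I(I')=\tilde U_I(I)=\gamma_2$, so $U_C(I')=U_C(I)$. Iterating over the finitely many prizes in $\text{supp}(I)\cap B$ gives $\hat I\in\mathcal{F}_I\cap\mathcal{F}_B$ with $U_C(\hat I)=U_C(I)$.

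It remains to show $SEU_{\mu,u}(\hat I)=SEU_{\mu,u}(I)$. The replacement moves mass $I(b\mid\omega_i)$ on $b$ to mass $I(b\mid\omega_i)q(\cdot)$ on $\text{supp}(q)$, so
\[
SEU_{\mu,u}(\hat I)-SEU_{\mu,u}(I)=\mu(\omega_i)I(b\mid\omega_i)\Bigl(\sum_z q(z)u(z)-u(b)\Bigr).
\]
By \claimref{uc_seu_constant} the bracket is $U_C(q)-U_C(\delta_b)$, which vanishes by \claimref{same_Uc} because $q\sim\delta_b$. This completes the argument.

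The main obstacle is the indifference $I'\sim I$, which is where the paper's hypotheses carry the weight. The first bullet of \axmref{prize_independence} only gives $I\succsim Ibq$ from $\delta_b\succsim q$. The reverse direction comes from the second bullet and needs $q\cap I=\emptyset$, which is why $q$ must be chosen fresh at each step. If $u$ happened to be constant near $b$ the straddling construction would need care, but Monotonicity gives strictly ordered prizes on both sides, so Continuity yields the required mixture weight.
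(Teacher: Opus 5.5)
Your proof is correct, but it takes a different route from the paper's.

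\textbf{The paper's route.} It proceeds in three stages:
\begin{itemize}
\item First, $I\in\mathcal{F}_I\cap\mathcal{F}_B$, handled exactly as you do.
\item Second, simple acts $f^s\notin\mathcal{F}_B$. Here \emph{every} state's prize is replaced by pairwise-disjoint, indifferent lotteries supported off $B$, which produces an informative act in $\mathcal{F}_B$ with the same $U_C$.
\item Third, arbitrary informative acts outside $\mathcal{F}_B$. The paper writes $I$ as a convex combination of simple acts. It then uses the affinity of $U$ on proportional mixtures (Lemma~\ref{lem:U-and-affine-on-prop}) and the constancy of $\tilde U_I$, mirroring Claim~\ref{claim:uc for informative act}.
\end{itemize}

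\textbf{Your route.} You perform the replacement directly on the informative act, and only on its prizes lying in $B$. The key observation is that a prize occurring under a single state can be swapped for a non-overlapping indifferent lottery without destroying informativeness. That makes the convex-decomposition stage unnecessary.

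\textbf{What each buys.} Your argument is shorter and needs only three ingredients: both bullets of \ref{axm:prize_independence}, Claim~\ref{claim:gamma1gamma2}, and the constant-act part of Claim~\ref{claim:same_Uc}. It also avoids a gap in the paper's last step. There, $S(I)$ was defined only for acts in $\mathcal{F}_B$ and must be tacitly enlarged to simple acts that meet $B$. Your explicit computation of $SEU_{\mu,u}(\hat I)-SEU_{\mu,u}(I)$ is also cleaner than the paper's chain of equalities for $f^*$. The paper's route, in exchange, only keeps the same ``simple acts first, then mixtures'' architecture used in the preceding claims.
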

\begin{proof}
For any informative act $I\in\mathcal{F}_{B}$, by \claimref{{uc for informative act}}
and \claimref{{parameter for SEU}},
\begin{align*}
U_{C}\left(I\right) & =\sum_{z\in\text{supp}\left(I\right)}\sum_{\omega\in\Omega}I\left(z\mid\omega\right)\bar{u}\left(z,\omega\right)=\sum_{z\in\text{supp}\left(I\right)}\sum_{\omega\in\Omega}I\left(z\mid\omega\right)\left(\mu\left(\omega\right)u\left(z\right)+\eta_{\omega}\right)\\
 & =\sum_{\omega\in\Omega}\mu\left(\omega\right)\sum_{z\in\text{supp}\left(I\right)}I\left(z\mid\omega\right)u\left(z\right).
\end{align*}
For any simple act $f^{s}\in\mathcal{F}_{S}\setminus\mathcal{F}_{B}$,
pick $q_{i}\in\Delta\left(X\setminus B\right)$ for each $i\in\left\{ 1,...,N\right\} $
such that $q_{i}\sim f^{s}\left(\omega_{i}\right)$, $\text{supp}\left(q_{i}\right)\cap\text{supp}\left(f^{s}\right)=\emptyset$,
and $\text{supp}\left(q_{i}\right)\cap\text{supp}\left(q_{j}\right)=\emptyset$
for distinct $i,j$. From $q_{i}\sim f^{s}\left(\omega_{i}\right)$,
\claimref{same_Uc} gives $U_{C}\left(q_{i}\right)=U_{C}\left(f^{s}\left(\omega_{i}\right)\right)$,
thus $\sum_{x\in\text{supp}\left(q_{i}\right)}q_{i}\left(x\right)u\left(x\right)=u\left(f^{s}\left(\omega_{i}\right)\right)$.
Iteratively use $q_{i}\in\Delta\left(X\setminus B\right)$ to replace
$f^{s}\left(\omega_{i}\right)$ for each $\omega_{i}\in\Omega$. Denote
the resulting act by $f^{*}$, note that $f^{*}\in\mathcal{F_{I}}\cap\mathcal{F}_{B}$.
By multiple uses of \axmref{prize_independence}, $f^{*}\sim f^{s}$,
so \claimref{same_Uc} gives $U_{C}\left(f^{*}\right)=U_{C}\left(f^{s}\right)$.
Then,
\begin{align*}
U_{C}\left(f^{*}\right) & =U_{C}\left(f^{s}\right)=\sum_{\omega\in\Omega}\mu\left(\omega\right)\sum_{z\in\text{supp}\left(I\right)}I\left(z\mid\omega\right)u\left(z\right)\\
 & =\sum_{\omega\in\Omega}\mu\left(\omega\right)\sum_{x\in\text{supp}\left(q_{i}\right)}q_{i}\left(x\right)u\left(x\right)=\sum_{\omega\in\Omega}\mu\left(\omega\right)u\left(f^{s}\left(\omega_{i}\right)\right).
\end{align*}
Since every informative act $I\in\mathcal{F}_{I}\setminus\mathcal{F}_{B}$
is equal to a convex combination of the simple acts in $S\left(I\right)$,
then by the same argument in \claimref{{uc for informative act}},
we have $U_{C}\left(I\right)=\sum_{\omega\in\Omega}\mu\left(\omega\right)\sum_{z\in\text{supp}\left(I\right)}I\left(z\mid\omega\right)u\left(z\right)$.
\end{proof}
\begin{lem}
\label{lem:SEU_Uc}For any act $f\in\mathcal{F}$, $U_{C}\left(f\right)=SEU_{\mu,u}\left(f\right)$. 
\end{lem}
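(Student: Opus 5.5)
The plan is to move from an arbitrary act $f$ to an informative act $f_{I}^{*}\in I^{*}(f)$, where \claimref{informative_act_seu} already gives $U_{C}=SEU_{\mu,u}$. Then we track how both $U_{C}$ and $SEU_{\mu,u}$ change along the splitting. Fix $f$ and $f_{I}^{*}\in I^{*}(f)$. By \claimref{summation of vbar} and \lemref{menu representation},
\[
U\left(f_{I}^{*}\right)-U\left(f\right)=\sum_{x\in\text{supp}\left(f\right)}\bar{v}\left(f_{x}\right)=\sum_{x}\sum_{\omega}f\left(x\mid\omega\right)\sup_{\tilde{g}\in\tilde{A}}v\left(\tilde{g},\omega\right)-\sum_{x}\sup_{\tilde{g}\in\tilde{A}}\sum_{\omega}f\left(x\mid\omega\right)v\left(\tilde{g},\omega\right).
\]
The first sum equals $\sum_{\omega}\sup_{\tilde{g}}v(\tilde{g},\omega)=\gamma_{2}$, because $\sum_{x}f(x\mid\omega)=1$. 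The second sum is exactly $\tilde{U}_{I}(f)$ by \eqref{VI}. \claimref{gamma1gamma2} gives $\tilde{U}_{I}(f_{I}^{*})=\gamma_{2}$, so the identity becomes $U(f_{I}^{*})-\tilde{U}_{I}(f_{I}^{*})=U(f)-\tilde{U}_{I}(f)$. By \eqref{Uc-1}, this says $U_{C}(f)=U_{C}(f_{I}^{*})$.

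Next we need $SEU_{\mu,u}(f_{I}^{*})=SEU_{\mu,u}(f)$. Recall how $f_{I}^{*}$ is built: in each state $\omega_{i}$, prize $x$ is replaced by a lottery $p_{i}$ with $p_{i}\sim\delta_{x}$. The claims \claimref{same_Uc} and \claimref{uc_seu_constant} then give $\sum_{z}p_{i}(z)u(z)=u(x)$. Later steps replace prizes only inside their own supports, disjoint from earlier ones, and the same indifference holds at each step. Hence for every $\omega$ the conditional expected $u$-utility of $f_{I}^{*}(\omega)$ equals that of $f(\omega)$. Averaging with $\mu$ gives the required equality of SEU values.

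Combining the two steps with \claimref{informative_act_seu} applied to $f_{I}^{*}\in\mathcal{F}_{I}$ yields $U_{C}(f)=U_{C}(f_{I}^{*})=SEU_{\mu,u}(f_{I}^{*})=SEU_{\mu,u}(f)$.

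The main obstacle is the bookkeeping in the first step. One must check that the sup in \lemref{menu representation} is attained, or at least finite, so that $\gamma_{2}$ is well defined; this follows from $\bar{v}$ being real-valued and $\bar{v}(e_{\omega})=0$. One must also check that the formula for $\bar{v}$ holds for unnormalized $r=f_{x}$, which is ensured by homogeneity (\claimref{homogeneous in degree 1}). A further check is needed when $f$ itself is already informative: \claimref{informative_act_seu} covers that case directly, and the argument remains consistent there because all $\bar{v}(f_{x})=0$.
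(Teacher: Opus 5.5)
Your proposal is correct and takes the paper's own route: pass to $f_{I}^{*}\in I^{*}(f)$, get $U_{C}(f)=U_{C}(f_{I}^{*})$ from \claimref{{summation of vbar}}, \lemref{{menu representation}} and \claimref{gamma1gamma2}, apply \claimref{informative_act_seu} to $f_{I}^{*}$, and use $q_{i}^{z}\sim\delta_{z}$ (hence equal expected $u$-utility in each state) to get $SEU_{\mu,u}(f_{I}^{*})=SEU_{\mu,u}(f)$. The only difference is that you write out the computation $\sum_{x}\bar{v}(f_{x})=\gamma_{2}-\tilde{U}_{I}(f)$, which the paper compresses into ``rearranging'' \claimref{{summation of vbar}}.
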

\begin{proof}
\claimref{uc_seu_constant} has proved the case for constant acts
and \claimref{informative_act_seu} has proved the case for informative
acts. Fix any other $f\in\mathcal{F}$, generate a $f_{I}^{*}\in I^{*}\left(f\right)$
and denote the lottery replacement of $z\in\text{supp}\left(f\left(\omega_{i}\right)\right)$
under $\omega_{i}\in\Omega$ by $q_{i}^{z}\in\Delta\left(X\right)$.
For all $i\in\left\{ 1,...,N\right\} $ and $z\in\text{supp}\left(f\left(\omega_{i}\right)\right)$,
recall from the definition of $f_{I}^{*}$ that $q_{i}^{z}\sim z$,
so \claimref{same_Uc} gives $U_{C}\left(q_{i}^{z}\right)=U_{C}\left(\delta_{z}\right)$,
and \claimref{uc_seu_constant} further gives $\sum_{x\in\text{supp}\left(q_{i}\right)}q_{i}^{z}\left(x\right)u\left(x\right)=u\left(z\right)$.
Therefore,
\begin{align*}
U_{C}\left(f\right) & =U_{C}\left(f_{I}^{*}\right)=\sum_{\omega\in\Omega}\mu\left(\omega\right)\sum_{z\in\text{supp}\left(f_{I}^{*}\right)}f_{I}^{*}\left(z\mid\omega\right)u\left(z\right)\\
 & =\sum_{\omega\in\Omega}\mu\left(\omega\right)\sum_{z\in\text{supp}\left(f\right)}f\left(z\mid\omega\right)\left[\sum_{x\in\text{supp}\left(q_{i}\right)}q_{i}^{z}\left(x\right)u\left(x\right)\right]\\
 & =\sum_{\omega\in\Omega}\mu\left(\omega\right)\sum_{z\in\text{supp}\left(f\right)}f\left(z\mid\omega\right)u\left(z\right)=SEU_{\mu,u}\left(f\right).
\end{align*}
where the first equality is obtained from rearranging \claimref{{summation of vbar}}
for $U\left(f\right)-\tilde{U}_{I}\left(f\right)=U\left(f_{I}^{*}\right)-\tilde{U}_{I}\left(f_{I}^{*}\right)$
and using the definition of $U_{C}$ (\eqref{Uc-1}); the second equality
is due to $f_{I}^{*}\in\mathcal{F}_{I}$ and \claimref{informative_act_seu};
the third equality is by the definition of $f_{I}^{*}$; and the fourth
equality is established earlier in the proof.
\end{proof}

\subsubsection*{Step 7: Deriving $V$}

Recall that $\mu$ has full support by \claimref{fullsupport}, so
dividing by $\mu\left(\omega\right)$ is well-defined. From \lemref{{menu representation}}
we get
\begin{align}
\tilde{U}_{I}\left(f\right) & =\sum_{x\in\text{supp}\left(f\right)}\sup_{\tilde{g}\in\tilde{A}}\sum_{\omega\in\Omega}f\left(x\mid\omega\right)v\left(\tilde{g},\omega\right)\nonumber \\
 & =\sum_{x\in\text{supp}\left(f\right)}\sup_{\tilde{g}\in\tilde{A}}\sum_{\omega\in\Omega}\left[\frac{\sum_{\omega'\in\Omega}\mu\left(\omega'\right)f\left(x\mid\omega'\right)}{\sum_{\omega'\in\Omega}\mu\left(\omega'\right)f\left(x\mid\omega'\right)}\frac{\mu\left(\omega\right)}{\mu\left(\omega\right)}\right]f\left(x\mid\omega\right)v\left(\tilde{g},\omega\right)\nonumber \\
 & =\sum_{x\in\text{supp}\left(f\right)}\sum_{\omega'\in\Omega}\mu\left(\omega'\right)f\left(x\mid\omega'\right)\sup_{\tilde{g}\in\tilde{A}}\left[\sum_{\omega\in\Omega}\frac{\mu\left(\omega\right)f\left(x\mid\omega\right)}{\sum_{\omega'\in\Omega}\mu\left(\omega'\right)f\left(x\mid\omega'\right)}\frac{v\left(\tilde{g},\omega\right)}{\mu\left(\omega\right)}\right]\nonumber \\
 & =\sum_{x\in\text{supp}\left(f\right)}\sum_{\omega\in\Omega}\mu\left(\omega\right)f\left(x\mid\omega\right)\sup_{\tilde{g}\in\tilde{A}}\left[\sum_{\omega\in\Omega}\mu_{x,f}\left(\omega\right)\frac{v\left(\tilde{g},\omega\right)}{\mu\left(\omega\right)}\right].\label{eq:forICrep}
\end{align}
Let
\begin{equation}
V\left(\mu_{x,f}\right)\coloneqq\sup_{\tilde{g}\in\tilde{A}}\left[\sum_{\omega\in\Omega}\mu_{x,f}\left(\omega\right)\frac{v\left(\tilde{g},\omega\right)}{\mu\left(\omega\right)}\right].\label{eq:V}
\end{equation}
Finally, because $U\left(f\right)=U_{C}\left(f\right)+\tilde{U}_{I}\left(f\right)$
by definition (\eqref{Uc-1}), we have 
\[
U\left(f\right)=U_{C}\left(f\right)+\sum_{x\in\text{supp}\left(f\right)}\sum_{\omega\in\Omega}\mu\left(\omega\right)f\left(x\mid\omega\right)V\left(\mu_{x,f}\right)
\]
 as desired.

\subsection{Proof of Theorem \ref{thm:IC-2}}

We prove ``only if'' as the other direction is straightforward after
utilizing the fact that an IC representation attains its $\max$,
which precludes the case of ``vertical hyperplane.''

Since $\succsim$ admits an IC-V representation, the proof of \thmref{IC-1}
can be replayed until \eqref{forICrep}. Note that if
\[
\max_{\tilde{g}\in\tilde{A}}\left[\sum_{\omega\in\Omega}\mu_{x,f}\left(\omega\right)\frac{v\left(\tilde{g},\omega\right)}{\mu\left(\omega\right)}\right]
\]
is undefined (i.e., there is no argmax $\tilde{g}$) for some $\mu_{x,f}\in\Delta\left(\Omega\right)$,
then $V\left(\mu_{x,f}\right)$ (\eqref{V}) will have infinite (or
negative infinite) slopes at at least one of the degenerate posteriors,
which contradicts the fact that $V$ is Lipschitz continuous. This
corresponds to the known issue of ``vertical'' supporting hyperplanes
of $V\left(\cdot\right)$ towards the boundary. Hence
\[
\sup_{\tilde{g}\in\tilde{A}}\left[\sum_{\omega\in\Omega}\mu_{x,f}\left(\omega\right)\frac{v\left(\tilde{g},\omega\right)}{\mu\left(\omega\right)}\right]=\max_{\tilde{g}\in\tilde{A}}\left[\sum_{\omega\in\Omega}\mu_{x,f}\left(\omega\right)\frac{v\left(\tilde{g},\omega\right)}{\mu\left(\omega\right)}\right].
\]

Excluding never-optimal $\tilde{g}$ from $\tilde{A}$ (that is, $\tilde{g}$
that is never the argmax for any $\mu_{x,f}\in\Delta\left(\Omega\right)$),
the previous argument implies that $\max_{\tilde{g}\in\tilde{A}}v\left(\tilde{g},\omega\right)$
and $\min_{\tilde{g}\in\tilde{A}}v\left(\tilde{g},\omega\right)$
are well-defined (attained) for all $\omega\in\Omega$ (these are
the maximum and minimum intercepts of the hyperplanes $v\left(\tilde{g},\cdot\right)$).
So $u^{+}\coloneqq\max_{\tilde{g}\in\tilde{A},\omega\in\Omega}\frac{v\left(\tilde{g},\omega\right)}{\mu\left(\omega\right)}$
and $u^{-}\coloneqq\min_{\tilde{g}\in\tilde{A},\omega\in\Omega}\frac{v\left(\tilde{g},\omega\right)}{\mu\left(\omega\right)}$
are well-defined

Suppose for now that $u^{-}<u^{+}$. Let $\beta$ satisfy $\beta\left(u\left(1\right)-u\left(0\right)\right)=u^{+}-u^{-}$,
where $u$ is the Bernoulli utility function already obtained. Observe
that $\beta>0$ and $\frac{1}{\beta}\frac{v\left(\tilde{g},\omega\right)}{\mu\left(\omega\right)}\in\left[u\left(0\right),u\left(1\right)\right]$
for any $\tilde{g}\in\tilde{A}$ and $\omega\in\Omega$. For any $\tilde{g}\in\tilde{A}$
and $\omega\in\Omega$, let (define) $p_{\omega}\left(\tilde{g}\right)$
satisfy 
\[
p_{\omega}\left(\tilde{g}\right)u\left(1\right)+\left(1-p_{\omega}\left(g\right)\right)u\left(0\right)=\frac{1}{\beta}\frac{v\left(\tilde{g},\omega\right)}{\mu\left(\omega\right)}.
\]
Construct $A\subseteq\mathcal{F}$ by including $g\in\mathcal{F}$
in $A$ if and only if there exists $\tilde{g}\in\tilde{A}$ such
that $g\left(\omega\right)=p_{\omega}\left(\tilde{g}\right)\delta_{1}+\left(1-p_{\omega}\left(\tilde{g}\right)\right)\delta_{0}$
for all $\omega\in\Omega$. Continuing from \eqref{forICrep}, we
now have
\begin{align*}
\tilde{U}_{I}\left(f\right) & =\sum_{x\in\text{supp}\left(f\right)}\sum_{\omega\in\Omega}\mu\left(\omega\right)f\left(x\mid\omega\right)\max_{\tilde{g}\in\tilde{A}}\left[\sum_{\omega\in\Omega}\mu_{x,f}\left(\omega\right)\frac{v\left(\tilde{g},\omega\right)}{\mu\left(\omega\right)}\right]\\
 & =\sum_{x\in\text{supp}\left(f\right)}\sum_{\omega\in\Omega}\mu\left(\omega\right)f\left(x\mid\omega\right)\max_{g\in A}\left[\sum_{\omega\in\Omega}\mu_{x,f}\left(\omega\right)\beta\left[g\left(\omega\right)\left(1\right)u_{1}\left(1\right)+g\left(\omega\right)\left(0\right)u_{1}\left(0\right)\right]\right]\\
 & =\beta\sum_{x\in\text{supp}\left(f\right)}\sum_{\omega\in\Omega}\mu\left(\omega\right)f\left(x\mid\omega\right)\left[\max_{g\in A}\sum_{\omega\in\Omega}\mu_{x,f}\left(\omega\right)\sum_{y\in\text{supp}\left(g\right)}g\left(y\mid\omega\right)u\left(y\right)\right]
\end{align*}
Because $U\left(f\right)=U_{C}\left(f\right)+\tilde{U}_{I}\left(f\right)$
by definition, we now have
\[
U\left(f\right)=U_{C}\left(f\right)+\beta U_{I}\left(f\right)
\]
where $U_{I}\left(f\right)$ is from \eqref{UI}. Let $\alpha>0$
satisfy $\beta=\frac{1-\alpha}{\alpha}$ and multiply $U\left(f\right)$
by $\alpha$, we now obtain an IC representation with parameters $\left(\mu,u,\alpha,A\right)$.

Suppose instead that $u^{-}=u^{+}$ (this is the special case where
information has no value). Then for any $\omega\in\Omega$, we have
$\frac{v\left(\tilde{g},\omega\right)}{\mu\left(\omega\right)}=c\in\mathbb{R}$
for all $\tilde{g}\in\tilde{A}$. So every act produces the same information
value and therefore $f\succsim g$ if and only $U_{C}\left(f\right)\geq U_{C}\left(g\right)$.
With freedom, we conclude that $\succsim$ has a IC representation
with parameters $\left(\mu,u,\alpha=1,A=\left\{ \delta_{0}\right\} \right)$
($A$ still needs to be well-specified for the $\max$ operation to
be well-defined).

\subsection{Proof of Proposition \ref{prop:identificationV}}

The ``if'' direction is straightforward noting that for any $f$,
\begin{align*}
 & \sum_{x\in\text{supp}\left(f\right)}\sum_{\omega\in\Omega}\mu\left(\omega\right)f\left(x\mid\omega\right)\left[\lambda\cdot\mu_{x,f}\right]\\
 & =\sum_{x\in\text{supp}\left(f\right)}\sum_{\omega\in\Omega}\mu\left(\omega\right)f\left(x\mid\omega\right)\left[\sum_{\omega'\in\Omega}\lambda_{\omega'}\frac{\mu\left(\omega'\right)f\left(x\mid\omega'\right)}{\sum_{\omega''\in\Omega}\mu\left(\omega''\right)f\left(x\mid\omega''\right)}\right]\\
 & =\sum_{x\in\text{supp}\left(f\right)}\sum_{\omega'\in\Omega}\left[\lambda_{\omega'}\mu\left(\omega'\right)f\left(x\mid\omega'\right)\right]\\
 & =\sum_{\omega'\in\Omega}\sum_{x\in\text{supp}\left(f\right)}\left[\lambda_{\omega'}\mu\left(\omega'\right)f\left(x\mid\omega'\right)\right]\\
 & =\sum_{\omega'\in\Omega}\left[\lambda_{\omega'}\mu\left(\omega'\right)\right]\\
 & =\lambda\cdot\mu,
\end{align*}
which is independent of $f$.

We show the ``only if'' direction. Suppose $\succsim$ admits an
IC-V representation $\left(\mu_{1},u_{1},V_{1}\right)$ and it also
admits an IC-V representation $\left(\mu_{2},u_{2},V_{2}\right)$.
For any $i\in\left\{ 1,2\right\} $ and $f\in\mathcal{F}$, let $U_{I}^{i}\left(f\right)=\sum_{x\in\text{supp}\left(f\right)}\sum_{\omega\in\Omega}\mu_{i}\left(\omega\right)f\left(x\mid\omega\right)V_{i}\left(\mu_{x,f}\right)$.

For any $x\in X$, 
\[
U_{i}\left(\delta_{x}\right)=SEU_{\mu_{i}u_{i}}\left(\delta_{x}\right)+V_{i}\left(\mu_{i}\right)=SEU_{\mu_{i},u_{i}+V_{i}\left(\mu_{i}\right)}\left(\delta_{x}\right)
\]
for $i\in\left\{ 1,2\right\} $. Since $\delta_{x}\succsim\delta_{y}$
if and only if $U_{i}\left(\delta_{x}\right)\geq U_{i}\left(\delta_{y}\right)$
for all $i\in\left\{ 1,2\right\} $, then by the classical SEU identification
result, there exists $\kappa>0$ and $c'\in\mathbb{R}$ such that
$u_{2}\left(x\right)+V_{2}\left(\mu_{2}\right)=\kappa u_{1}\left(x\right)+V_{1}\left(\mu_{1}\right)+c'$
for any $x\in X$. Let $c=V_{1}\left(\mu_{1}\right)+c'-V_{2}\left(\mu_{2}\right)$.
We have $u_{2}\left(x\right)=\kappa u_{1}\left(x\right)+c$. Therefore,
for prize $1$ and prize $0$, we have $u_{2}\left(1\right)-u_{2}\left(0\right)=\kappa\left(u_{1}\left(1\right)-u_{1}\left(0\right)\right)$. 

We now argue that $\mu_{1}=\mu_{2}$. For any $x\in X$, there exists
a unique $a_{x}\in\mathbb{R}$ such that 
\begin{equation}
u_{i}\left(x\right)=a_{x}u_{i}\left(1\right)+\left(1-a_{x}\right)u_{i}\left(0\right)=a_{x}\left(u_{i}\left(1\right)-u_{i}\left(0\right)\right)+u_{i}\left(0\right)\label{eq:ax}
\end{equation}
 for any $i\in\left\{ 1,2\right\} $. Let $N=|\Omega|$. For the following,
we abuse notation and write $z$ and $\delta_{z}$ interchangeably. 
\begin{enumerate}
\item Let $f_{1}\in\mathcal{F}$ be such that $f_{1}\left(\omega_{1}\right)=5$,
$f_{1}\left(\omega_{2}\right)=6$ and $f_{1}\left(\omega_{i}\right)=-1$
for any $\omega_{i}\in\Omega\setminus\left\{ \omega_{1},\omega_{2}\right\} $. 
\item Let $f_{2}\in\mathcal{F}$ be such that $f_{2}\left(\omega_{1}\right)=6$,
$f_{2}\left(\omega_{2}\right)=5$ and $f_{2}\left(\omega_{i}\right)=-1$
for any $\omega_{i}\in\Omega\setminus\left\{ \omega_{1},\omega_{2}\right\} $. 
\item Let $g\in\mathcal{F}$ be such that $g\left(\omega_{1}\right)=3$,
$g\left(\omega_{2}\right)=4$ and $g\left(\omega_{i}\right)=-1$ for
any $\omega_{i}\in\Omega\setminus\left\{ \omega_{1},\omega_{2}\right\} $. 
\item Let $h\in\mathcal{F}$ be such that $h\left(\omega_{1}\right)=7$,
$h\left(\omega_{2}\right)=8$ and $h\left(\omega_{i}\right)=-1$ for
any $\omega_{i}\in\Omega\setminus\left\{ \omega_{1},\omega_{2}\right\} $.
\end{enumerate}
There exists $t_{1},t_{2}\in\left(0,1\right)$ such that $f_{1}\sim t_{1}g+\left(1-t_{1}\right)h$
and $f_{2}\sim t_{2}g$+$\left(1-t_{2}\right)h$. Then, for any $i\in\left\{ 1,2\right\} $:
$U_{i}\left(f_{1}\right)=U_{i}\left(t_{1}g+\left(1-t_{1}\right)h\right)$
and $U_{i}\left(f_{1}\right)=t_{1}U_{i}\left(g\right)+\left(1-t_{1}\right)U_{i}\left(h\right)$.
Because $f_{1},g,h$ have equal information value i.e., $U_{I}^{i}\left(f_{1}\right)=U_{I}^{i}\left(g\right)=U_{I}^{i}\left(h\right)$,
$\mu_{i}\left(\omega_{1}\right)u_{i}\left(5\right)+\mu_{i}\left(\omega_{2}\right)u_{i}\left(6\right)=t_{1}\left(\mu_{i}\left(\omega_{1}\right)u_{i}\left(3\right)+\mu_{i}\left(\omega_{2}\right)u\left(4\right)\right)+\left(1-t_{1}\right)\left(\mu_{i}\left(\omega_{1}\right)u_{i}\left(7\right)+\mu_{i}\left(\omega_{2}\right)u_{i}\left(8\right)\right)$,
so \eqref{ax} gives 
\[
\mu_{i}\left(\omega_{1}\right)a_{5}+\mu_{i}\left(\omega_{2}\right)a_{6}=t_{1}\left(\mu_{i}\left(\omega_{1}\right)a_{3}+\mu_{i}\left(\omega_{2}\right)a_{4}\right)+\left(1-t_{1}\right)\left(\mu_{i}\left(\omega_{1}\right)a_{7}+\mu_{i}\left(\omega_{2}\right)a_{8}\right).
\]
Similarly, because $f_{2},g,h$ have equal information value, we obtain
\[
\mu_{i}\left(\omega_{1}\right)a_{6}+\mu_{i}\left(\omega_{2}\right)a_{5}=t_{2}\left(\mu_{i}\left(\omega_{1}\right)a_{3}+\mu_{i}\left(\omega_{2}\right)a_{4}\right)+\left(1-t_{2}\right)\left(\mu_{i}\left(\omega_{1}\right)a_{7}+\mu_{i}\left(\omega_{2}\right)a_{8}\right).
\]

Notice we obtain the same two equations for two unknowns $\mu_{i}\left(\omega_{1}\right),\mu_{i}\left(\omega_{2}\right)$
for each of $i\in\left\{ 1,2\right\} $, therefore we have $\mu_{1}\left(\omega_{1}\right)=\mu_{2}\left(\omega_{1}\right)$
and $\mu_{1}\left(\omega_{2}\right)=\mu_{2}\left(\omega_{2}\right)$.
Repeating the same arguments for other $\omega$'s gives $\mu_{1}\left(\omega\right)=\mu_{2}\left(\omega\right)$
for all $\omega\in\Omega$, hence $\mu_{1}=\mu_{2}$. We denote the
unique prior as $\mu$ from now on.

Next, let $I^{*}$ be a simple act (see \subsecref{notations-objects})
such that $x\succ0$ for any $x\in\text{supp}\left(I^{*}\right)$.
We argue that $U_{I}^{2}\left(I^{*}\right)-U_{I}^{2}\left(\delta_{0}\right)=\kappa\left(U_{I}^{1}\left(I^{*}\right)-U_{I}^{1}\left(\delta_{0}\right)\right)$.
There exists $\epsilon\in\left(0,1\right)$ and constant act (lottery)
$p\in\mathcal{F}_{P}$ such that $\epsilon f+\left(1-\epsilon\right)\delta_{0}\sim p$.
Then for any $i\in\left\{ 1,2\right\} $,
\[
U^{i}\left(\epsilon f+\left(1-\epsilon\right)\delta_{0}\right)=U^{i}\left(p\right)
\]
\[
\epsilon U_{c}^{i}\left(I^{*}\right)+\left(1-\epsilon\right)U_{c}^{i}\left(\delta_{0}\right)+\epsilon U_{I}^{i}\left(I^{*}\right)+\left(1-\epsilon\right)U_{I}^{i}\left(\delta_{0}\right)=U_{c}^{i}\left(p\right)+U_{I}^{i}\left(p\right)
\]
\[
\epsilon\left(U_{I}^{i}\left(I^{*}\right)-U_{I}^{i}\left(\delta_{0}\right)\right)=U_{c}^{i}\left(p\right)-\epsilon U_{c}^{i}\left(I^{*}\right)-\left(1-\epsilon\right)U_{c}^{i}\left(\delta_{0}\right).
\]
The third equality follows from $U_{I}^{i}\left(\delta_{0}\right)=U_{I}^{i}\left(p\right)$.
Therefore, 
\[
\frac{\epsilon\left(U_{I}^{1}\left(I^{*}\right)-U_{I}^{1}\left(\delta_{0}\right)\right)}{\epsilon\left(U_{I}^{2}\left(I^{*}\right)-U_{I}^{2}\left(\delta_{0}\right)\right)}=\frac{U_{c}^{1}\left(p\right)-\epsilon U_{c}^{1}\left(I^{*}\right)-\left(1-\epsilon\right)U_{c}^{1}\left(\delta_{0}\right)}{U_{c}^{2}\left(p\right)-\epsilon U_{c}^{2}\left(I^{*}\right)-\left(1-\epsilon\right)U_{c}^{2}\left(\delta_{0}\right)}=\frac{u_{2}\left(1\right)-u_{2}\left(0\right)}{u_{1}\left(1\right)-u_{1}\left(0\right)}=\kappa
\]

Let $\mathcal{F}^{*}=\left\{ f\in\mathcal{F}:I^{*}\succ f\succ\delta_{0},f\cap I^{*}=\emptyset,f\cap\delta_{0}=\emptyset\right\} $.
Then, for any $f\in\mathcal{F}^{*}$, we show $U_{I}^{2}\left(f\right)-\kappa U_{I}^{1}\left(f\right)=U_{I}^{2}\left(\delta_{0}\right)-\kappa U_{I}^{1}\left(\delta_{0}\right)$.
Since there exists a unique $a_{f}\in\left(0,1\right)$ such that
$a_{f}I^{*}+\left(1-a_{f}\right)\delta_{0}\sim f$. Then, for any
$i\in\left\{ 1,2\right\} $:
\[
U^{i}\left(f\right)=U\left(a_{f}I^{*}+\left(1-a_{f}\right)\delta_{0}\right)
\]
\[
U_{c}^{i}\left(f\right)+U_{I}^{i}\left(f\right)=a_{f}U_{c}^{i}\left(I^{*}\right)+a_{f}U_{I}^{i}\left(I^{*}\right)+\left(1-a_{f}\right)U_{c}^{i}\left(\delta_{0}\right)+\left(1-a_{f}\right)U_{I}^{i}\left(\delta_{0}\right)
\]
\[
U_{I}^{i}\left(f\right)-U_{I}^{i}\left(\delta_{0}\right)=a_{f}\left(U_{I}^{i}\left(I^{*}\right)-U_{I}^{i}\left(\delta_{0}\right)\right)+a_{f}U_{c}^{i}\left(I^{*}\right)+\left(1-a_{f}\right)U_{c}^{i}\left(\delta_{0}\right)-U_{c}^{i}\left(f\right).
\]

Since there exists $k\in\mathbb{R}$ such that $a_{f}U_{c}^{i}\left(I^{*}\right)+\left(1-a_{f}\right)U_{c}^{i}\left(\delta_{0}\right)-U_{c}^{i}\left(f\right)=k\left(u_{i}\left(1\right)-u_{i}\left(0\right)\right)$
for any $i\in\left\{ 1,2\right\} $, we have $\kappa\left(a_{f}U_{c}^{1}\left(I^{*}\right)+\left(1-a_{f}\right)U_{c}^{1}\left(\delta_{0}\right)-U_{c}^{1}\left(f\right)\right)=a_{f}U_{c}^{1}\left(I^{*}\right)+\left(1-a_{f}\right)U_{c}^{1}\left(\delta_{0}\right)-U_{c}^{1}\left(f\right)$.
Then, $\kappa U_{I}^{1}\left(f\right)-\kappa U_{I}^{1}\left(\delta_{0}\right)-\left(U_{I}^{2}\left(f\right)-\kappa U_{I}^{2}\left(\delta_{0}\right)\right)=\kappa a_{f}\left(U_{I}^{1}\left(I^{*}\right)-U_{I}^{1}\left(\delta_{0}\right)\right)-a_{f}\left(U_{I}^{2}\left(I^{*}\right)-U_{I}^{2}\left(\delta_{0}\right)\right)$,
which leads to 
\[
U_{I}^{2}\left(f\right)-\kappa U_{I}^{1}\left(f\right)=U_{I}^{2}\left(\delta_{0}\right)-\kappa U_{I}^{1}\left(\delta_{0}\right)
\]
due to $U_{I}^{2}\left(I^{*}\right)-U_{I}^{2}\left(\delta_{0}\right)=\kappa\left(U_{I}^{1}\left(I^{*}\right)-U_{I}^{1}\left(\delta_{0}\right)\right)$.

Next, we show there exist $\lambda\in\mathbb{R}^{|\Omega|}$ and $b\in\mathbb{R}$
such that $V_{2}\left(\pi\right)=\kappa V_{1}\left(\pi\right)+\lambda\cdot\pi+b$.
It suffices to show $V_{2}\left(\pi\right)-\kappa V_{1}\left(\pi\right)$
is an affine function of $\pi$. Fix any two beliefs $\pi_{1},\pi_{2}\in\Delta\left(\Omega\right)$
and $\gamma\in\left[0,1\right]$, let $\pi=\gamma\pi_{1}+\left(1-\gamma\right)\pi_{2}$.
Then, there exists a $\pi'\in\Delta\left(\Omega\right)$ and $\epsilon\in\left(0,1\right)$
such that $\epsilon\pi+\left(1-\epsilon\right)\pi'=\mu_{0}$ and $\pi'=\frac{\mu_{0}-\epsilon\pi}{1-\epsilon}$.
Take $f\in\mathcal{F^{*}}$ such that the induced posterior distribution
for $f$ is $\pi$ with probability $\epsilon$ and $\pi'$ with probability
$1-\epsilon$. Take $g\in\mathcal{F^{*}}$ such that the induced posterior
distribution for $g$ is $\pi_{1}$ with probability $\gamma\epsilon$,
$\pi_{2}$ with probability $\left(1-\gamma\right)\epsilon$ and $\pi'$
with probability $1-\epsilon$. The process is feasible since there
are infinite prizes between 0 and $\min_{\succsim}\text{supp\ensuremath{\left(I^{*}\right)}}$.
Therefore, using $U_{I}^{2}\left(h\right)-\kappa U_{I}^{1}\left(h\right)=U_{I}^{2}\left(\delta_{0}\right)-\kappa U_{I}^{1}\left(\delta_{0}\right)$
for any $f\in\mathcal{F}^{*}$, we get $U_{I}^{2}\left(f\right)-\kappa U_{I}^{1}\left(f\right)=U_{I}^{2}\left(g\right)-\kappa U_{I}^{1}\left(g\right)$,
which implies $\epsilon V_{2}\left(\pi\right)-\kappa\epsilon V_{1}\left(\pi\right)=\gamma\epsilon V_{2}\left(\pi_{1}\right)+\left(1-\gamma\right)\epsilon V_{2}\left(\pi_{2}\right)-\kappa\left(\text{\ensuremath{\epsilon V_{1}\left(\pi_{1}\right)}+\ensuremath{\left(1-\gamma\right)\epsilon V_{1}\left(\pi_{2}\right)}}\right)$
(due to $U_{I}^{i}\left(f\right)=\epsilon V_{2}\left(\pi\right)+\left(1-\epsilon\right)V_{2}\left(\pi'\right)$
and $U_{I}^{i}\left(g\right)=\gamma\epsilon V_{2}\left(\pi_{1}\right)+\left(1-\gamma\right)\epsilon V_{2}\left(\pi'\right)+\left(1-\epsilon\right)V_{2}\left(\pi'\right)$
for $i\in\left\{ 1,2\right\} $) and then $V_{2}\left(\gamma\pi_{1}+\left(1-\gamma\right)\pi_{2}\right)-\kappa\epsilon V_{1}\left(\gamma\pi_{1}+\left(1-\gamma\right)\pi_{2}\right)=\gamma\left(V_{2}\left(\pi_{1}\right)-\kappa V_{1}\left(\pi\right)\right)+\left(1-\gamma\right)\left(V_{2}\left(\pi_{2}\right)-\kappa V_{1}\left(\pi_{2}\right)\right).$By
the arbitrariness of $\pi_{1},\pi_{2}$ and $\gamma$, $V_{2}\left(\cdot\right)-\kappa V_{1}\left(\cdot\right)$
is an affine function over $\Delta\left(\Omega\right)$. So there
exist $\lambda\in\mathbb{R}^{|\Omega|}$ and $b\in\mathbb{R}$ such
that $V_{2}\left(\pi\right)=\kappa V_{1}\left(\pi\right)+\lambda\cdot\pi+b$.

\subsection{Proof of Proposition \ref{prop:identificationA}}

\paragraph*{Only if:}

Suppose $\succsim$ admits IC representations $\left(\mu_{1},u_{1},\alpha_{1},A_{1}\right)$
and $\left(\mu_{2},u_{2},\alpha_{2},A_{2}\right)$. Then by \thmref{IC-2},
it also admits IC-V representations $\left(\mu_{1},u_{1},V_{1}^{*}\right)$
and $\left(\mu_{2},u_{2},V_{2}^{*}\right)$, where $V_{i}^{*}$ satisfy
\eqref{V*}. Therefore, by \propref{identificationV}, there exist
$\kappa\in\mathbb{R}_{++}$, $\lambda\in\mathbb{R}^{\left|\Omega\right|}$,
and $a,b\in\mathbb{R}$ such that $u_{2}=\kappa u_{1}+a$ and $V_{2}^{*}\left(\pi\right)=\kappa V_{1}^{*}\left(\pi\right)+\lambda\cdot\pi+b$
for all $\pi\in\Delta\left(\Omega\right)$.

\paragraph*{If:}

Suppose $\succsim$ admits an IC representation $\left(\mu_{1},u_{1},\alpha_{1},A_{1}\right)$.
By \thmref{IC-2}, it admits an IC-V representation $\left(\mu_{1},u_{1},V_{1}^{*}\right)$
where $V_{1}^{*}$ is Lipschitz continuous. Consider $\mu_{2}=\mu_{1}$,
$u_{2}=\kappa u_{1}+a$ and $V_{2}^{*}\left(\pi\right)=\kappa V_{1}^{*}\left(\pi\right)+\lambda\cdot\pi+b$
for all $\pi\in\Delta\left(\Omega\right)$ where $\kappa\in\mathbb{R}_{++}$,
$\lambda\in\mathbb{R}^{\left|\Omega\right|}$, and $a,b\in\mathbb{R}$.
Note that the affine operation guarantees that $V_{2}^{*}\left(\pi\right)$
is also Lipschitz continuous. By \propref{identificationV}, $\succsim$
admits an IC-V representation $\left(\mu_{2},u_{2},V_{2}^{*}\right)$.
By \thmref{IC-2}, $\succsim$ admits an IC representation $\left(\mu_{2},u_{2},\alpha_{2},A_{2}\right)$.

\subsection{Proof of Proposition \ref{prop:icad}}

Given the qualifying IC representation $\left(\mu,u,\alpha,A\right)$,
we first note that $A$ is finite so the sup in $U_{I}\left(f\right)$
is attained (i.e., it is a max). Dividing $\alpha U_{C}\left(f\right)+\left(1-\alpha\right)U_{I}\left(f\right)$
by $\alpha$ gives $U\left(f\right)=U_{C}\left(f\right)+\frac{1-\alpha}{\alpha}U_{I}\left(f\right)$.
Then, the continuation value $U_{I}\left(f\right)$ can be simplified:
\begin{align*}
 & U_{I}\left(f\right)\\
 & =\sum_{x\in\text{supp}\left(f\right)}\sum_{\omega\in\Omega}\mu\left(\omega\right)f\left(x\mid\omega\right)\max_{a_{\iota,\omega}\in A}\left[\mu_{x,f}\left(\omega\right)\underbrace{u\left(u^{-1}\left(\frac{\alpha}{1-\alpha}u\left(\iota\right)\right)\right)}_{\frac{\alpha}{1-\alpha}u\left(\iota\right)}+\sum_{\omega'\ne\omega}\mu_{x,f}\left(\omega'\right)\underbrace{u\left(0\right)}_{\text{normalized to }0}\right]\\
 & =\sum_{x\in\text{supp}\left(f\right)}\sum_{\omega\in\Omega}\mu\left(\omega\right)f\left(x\mid\omega\right)\max_{\omega\in\Omega}\mu_{x,f}\left(\omega\right)\frac{\alpha}{1-\alpha}u\left(\iota\right)\\
 & =\sum_{x\in\text{supp}\left(f\right)}\sum_{\omega\in\Omega}\mu\left(\omega\right)f\left(x\mid\omega\right)\max_{\omega\in\Omega}\underbrace{\frac{\mu\left(\omega\right)f\left(x\mid\omega\right)}{\sum_{\omega'\in\Omega}\mu\left(\omega'\right)f\left(x\mid\omega'\right)}}_{\text{Bayes rule}}\frac{\alpha}{1-\alpha}u\left(\iota\right)\\
 & \sum_{x\in\text{supp}\left(f\right)}\underbrace{\frac{\sum_{\omega\in\Omega}\mu\left(\omega\right)f\left(x\mid\omega\right)}{\sum_{\omega'\in\Omega}\mu\left(\omega'\right)f\left(x\mid\omega'\right)}}_{\text{1}}\max_{\omega\in\Omega}\mu\left(\omega\right)f\left(x\mid\omega\right)\frac{\alpha}{1-\alpha}u\left(\iota\right)\\
 & =\frac{\alpha}{1-\alpha}u\left(\iota\right)\underbrace{\sum_{x\in X}\max_{\omega\in\Omega}\mu\left(\omega\right)f\left(x\mid\omega\right)}_{\mathbb{I}\left(f,\mu\right)\text{ by definition}}.
\end{align*}
So 
\[
U\left(f\right)=U_{C}\left(f\right)+\frac{1-\alpha}{\alpha}U_{I}\left(f\right)=U_{C}\left(f\right)+u\left(\iota\right)\mathbb{I}\left(f,\mu\right).
\]

\subsection{Proof of Proposition \ref{prop:same_rho}}

\paragraph*{If: }

Take $\mu$ as given. Let $U_{C}\left(f;\rho\right)\coloneqq\sum_{\omega\in\Omega}\mu\left(\omega\right)\sum_{x\in X}f\left(x\mid\omega\right)u_{\rho}\left(x\right)$
where $u_{\rho}\left(x\right)=1-e^{-\rho x}$ and let $U\left(f;\rho,\iota\right)=U_{C}\left(f;\rho\right)+u_{\rho}\left(\iota\right)\mathbb{I}\left(f,\mu\right)$.
Note that $U\left(f;\rho,0\right)=U_{C}\left(f;\rho\right)$. The
prerequisite and \propref{icad} together give $U_{C}\left(f;\rho_{1}\right)-U_{C}\left(g;\rho_{1}\right)<0$
and $U_{C}\left(f;\rho_{2}\right)-U_{C}\left(g;\rho_{2}\right)>0$.
Take any $\epsilon\in\left(U_{C}\left(f;\rho_{1}\right)-U_{C}\left(g;\rho_{1}\right),0\right)$
such that $\left|\epsilon\right|<\frac{1}{2}\left[\mathbb{I}\left(f,\mu\right)-\mathbb{I}\left(g,\mu\right)\right]$.
Since $U_{C}\left(f;\rho\right)-U_{C}\left(g;\rho\right)$ is continuous
in $\rho$, there exists $\bar{\rho}$ such that $U_{C}\left(f;\bar{\rho}\right)-U_{C}\left(g;\bar{\rho}\right)=\epsilon$
(Intermediate Value Theorem). Fix $\bar{\rho}$ Let $\iota_{1}=0$,
then $U\left(f;\bar{\rho},\iota_{1}\right)-U\left(g;\bar{\rho},\iota_{1}\right)=\epsilon<0$,
so $g\succ_{\left(\mu,\bar{\rho},\iota_{1}\right)}f$ as desired.
Since $u_{\bar{\rho}}\left(\iota\right)$ is continuous and strictly
increasing in $\iota$, $u_{\bar{\rho}}\left(0\right)=0$, and $\lim_{\iota\rightarrow\infty}u_{\bar{\rho}}\left(\iota\right)=1$,
there exists $\iota_{2}>0$ such that $u_{\bar{\rho}}\left(\iota_{2}\right)=\frac{1}{2}$
(Intermediate Value Theorem). Fix $\iota_{2}$. Observe that $U\left(f;\bar{\rho},\iota_{2}\right)-U\left(g;\bar{\rho},\iota_{2}\right)=\left[U_{C}\left(f;\bar{\rho}\right)-U_{C}\left(g;\bar{\rho}\right)\right]+\frac{1}{2}\left[\mathbb{I}\left(f,\mu\right)-\mathbb{I}\left(g,\mu\right)\right]=\epsilon+\frac{1}{2}\left[\mathbb{I}\left(f,\mu\right)-\mathbb{I}\left(g,\mu\right)\right]>0$,
so $f\succ_{\left(\mu,\bar{\rho},\iota_{2}\right)}g$ as desired.

\paragraph*{Only if:}

Take $\mu$ as given. Inherit $U_{C}\left(f;\rho\right)$ and $U\left(f;\rho,\iota\right)$
from the ``if'' proof above. Due to \propref{icad}, $\left[U\left(f;\bar{\rho},\iota_{2}\right)-U\left(g;\bar{\rho},\iota_{2}\right)\right]-\left[U\left(f;\bar{\rho},\iota_{1}\right)-U\left(g;\bar{\rho},\iota_{1}\right)\right]=\left[u_{\bar{\rho}}\left(\iota_{2}\right)-u_{\bar{\rho}}\left(\iota_{1}\right)\right]\left[\mathbb{I}\left(f,\mu\right)-\mathbb{I}\left(g,\mu\right)\right]>0$.
Since $u_{\bar{\rho}}$ is strictly increasing, $\iota_{2}>\iota_{1}$.
\end{document}